\documentclass{article}

\usepackage{algorithm}
\usepackage[margin=30mm,includehead,includefoot]{geometry}
\usepackage[noend]{algpseudocode}
\usepackage{amsmath,amssymb,amsthm,amscd,color,comment}
\usepackage{autobreak}
\usepackage{color}
\usepackage{xcolor}
\usepackage{url}
\usepackage{blkarray}
\usepackage{jheppub}
\usepackage{empheq}
\usepackage{graphicx}

\hfuzz=500pt

\newcommand{\CC}{\mathbb{C}}

\newcommand{\initial}{{\rm in}}

\newcommand{\NN}{\mathbb{N}}
\def\pd#1{\partial_{#1}}
\newcommand{\QQ}{\mathbb{Q}}

\newcommand{\ZZ}{\mathbb{Z}}
\newcommand{\RR}{\mathbb{R}}
\newcommand{\C}{\mathbb{C}}


\newcommand{\brk}[1]{(#1)}
\newcommand{\lrbrk}[1]{\left(#1\right)}
\newcommand{\bigbrk}[1]{\bigl(#1\bigr)}
\newcommand{\Bigbrk}[1]{\Bigl(#1\Bigr)}

\newcommand{\sbrk}[1]{[#1]}
\newcommand{\lrsbrk}[1]{\left[#1\right]}
\newcommand{\bigsbrk}[1]{\bigl[#1\bigr]}

\newcommand{\Bigsbrk}[1]{\Bigl[#1\Bigr]}

\newcommand{\brc}[1]{\{#1\}}

\newcommand{\bigbrc}[1]{\bigl\{#1\bigr\}}
\newcommand{\Bigbrc}[1]{\Bigl\{#1\Bigr\}}
\newcommand{\abs}[1]{|#1|}
\newcommand{\vev}[1]{\langle #1\rangle}

\newcommand{\dd}{\mathrm{d}}

\newcommand{\defas}{:=}
\newcommand{\safed}{=:}
\newcommand{\gen}{h}
\newcommand{\Std}{\mathrm{Std}}
\newcommand{\Der}{\mathrm{Der}}
\newcommand{\Mons}{\mathrm{Mons}}
\newcommand{\Ext}{\mathrm{Ext}}

\newcommand{\Ring}{\mathcal{R}}
\newcommand{\WeylR}{\mathcal{R}}

\newcommand{\Ideal}{\mathcal{I}}
\newcommand{\Jideal}{\mathcal{J}}
\newcommand{\dimIdeal}{\mathrm{d}}
\newcommand{\Degree}{D}

\newcommand{\eulerInt}{f_\Gamma}
\newcommand{\eulerIntResc}{g_\Gamma}
\newcommand{\Mext}{M_{\Ext}}
\newcommand{\Mstd}{M_{\Std}}
\newcommand{\bsigma}{\eta}

\newcommand{\Integers}{\mathbb{Z}}
\newcommand{\Rationals}{\mathbb{Q}}

\newcommand{\Complex}{\mathbb{C}}
\newcommand{\Field}{\Complex}
\newcommand{\Finite}{\mathbb{F}}

\newcommand{\soft}[1]{\textsc{#1}}
\newcommand{\mathematica}{\soft{Mathematica}}

\newcommand{\code}[1]{\texttt{#1}}

\newcommand{\eq}[1]{\begin{align} #1 \end{align}}
\newcommand{\D}{\mathcal{D}}
\newcommand{\G}{\mathcal{G}}
\newcommand{\e}{\epsilon}
\renewcommand{\d}{\delta}
\newcommand{\estd}{ e^{(\mathrm{Std})} }
\newcommand{\Pstd}{ P^{(\mathrm{Std})} }
\newcommand{\prop}{ \mathrm{D} }
\newcommand{\wtz}{z}

\theoremstyle{plain}
\newtheorem{theorem}{Theorem}[section]
\newtheorem{example}[theorem]{Example}
\newtheorem{lemma}[theorem]{Lemma}
\newtheorem{proposition}[theorem]{Proposition}
\newtheorem{remark}[theorem]{Remark}

\title{Macaulay Matrix for Feynman Integrals: \\
Linear Relations and Intersection Numbers}

\author[a,b]{Vsevolod Chestnov,}
\author[a,b]{Federico Gasparotto,}
\author[b]{Manoj K. Mandal,}
\author[a,b]{Pierpaolo Mastrolia,}
\author[c,d]{\\Saiei J. Matsubara-Heo,} 
\author[a,b]{Henrik J. Munch,}
\author[c]{and Nobuki Takayama}

\newcommand{\unipd}{Dipartimento di Fisica e Astronomia, Universit\`a degli Studi di Padova,
Via Marzolo 8, I-35131 Padova, Italy.}

\newcommand{\pdinfn}{INFN, Sezione di Padova,
Via Marzolo 8, I-35131 Padova, Italy.}

\newcommand{\kobe}{Department of Mathematics, Kobe University,
1-1, Rokkodai, Nada-ku, Kobe 657-8501, Japan.}

\newcommand{\kumamoto}{Faculty of Advanced Science and Technology, Kumamoto University, 2-39-1 Kurokami Chuo-ku Kumamoto
860-8555 Japan.}

\affiliation[a]{\unipd}
\affiliation[b]{\pdinfn}
\affiliation[c]{\kobe}
\affiliation[d]{\kumamoto}

\emailAdd{vsevolod.chestnov@pd.infn.it}
\emailAdd{federico.gasparotto@studenti.unipd.it}
\emailAdd{manojkumar.mandal@pd.infn.it}
\emailAdd{pierpaolo.mastrolia@unipd.it}
\emailAdd{saiei@educ.kumamoto-u.ac.jp}
\emailAdd{henrikjessen.munch@studenti.unipd.it}
\emailAdd{takayama@math.kobe-u.ac.jp}

\abstract{
We elaborate on the connection between  Gel'fand-Kapranov-Zelevinsky systems, de Rham theory for twisted cohomology groups, and Pfaffian equations for Feynman Integrals. We propose a novel, more efficient algorithm to compute Macaulay matrices, which are used to derive Pfaffian systems of differential equations. The Pfaffian matrices are then employed to obtain linear relations for ${\cal A}$-hypergeometric (Euler) integrals and Feynman integrals,
through recurrence relations and through projections by intersection numbers.
}

\definecolor{green1}{HTML}{3D792A}
\definecolor{cyan1}{HTML}{37cdaa}
\definecolor{blue1}{HTML}{5d7ac4}
\definecolor{red1}{HTML}{d0482a}
\definecolor{purple1}{HTML}{845ea8}
\definecolor{orange1}{HTML}{e07229}


\def\henrik#1{({\color{blue} Henrik: #1})}
\def\seva#1{({\color{green1} Seva: #1})}

\def\comment#1{({\color{red} Comments: #1})}

\def\done#1{ }

\newcommand{\namedref}[2]{\hyperref[#2]{#1~\ref*{#2}}}
\newcommand{\secref}[1]{\namedref{Section}{#1}}

\newcommand{\appref}[1]{\namedref{Appendix}{#1}}
\newcommand{\Appref}[1]{\namedref{Appendix}{#1}}

\newcommand{\Algref}[1]{\namedref{Algorithm}{#1}}
\newcommand{\prpref}[1]{\namedref{Proposition}{#1}}
\newcommand{\thmref}[1]{\namedref{Theorem}{#1}}
\newcommand{\exref}[1]{\namedref{Example}{#1}}

\makeatletter
\def\mr@ignsp#1 {\ifx\:#1\@empty\else #1\expandafter\mr@ignsp\fi}%
\newcommand{\multiref}[1]{\begingroup
\xdef\mr@no@sparg{\expandafter\mr@ignsp#1 \: }%
\def\mr@comma{}%
\@for\mr@refs:=\mr@no@sparg\do{\mr@comma\def\mr@comma{,\,}\ref{\mr@refs}}%
\endgroup}
\makeatother
\renewcommand{\eqref}[1]{(\multiref{#1})}

\begin{document}
\maketitle

\section{Introduction}

\subsection{Background}

Within the perturbative approach to quantum as well as classical field theory, the evaluation of multi-loop Feynman integrals is necessary for the determination of scattering amplitudes and related quantities (see \cite{Weinzierl:2022eaz} for a recent review).
The by-now standard evaluation techniques of Feynman integrals (in momentum-space representation) exploit loop-momentum shift invariance to establish \emph{integration-by-parts} (IBP) relations \cite{Chetyrkin:1981qh,Laporta:2001dd} among integrals whose integrands are built from products of the same set of denominators (and scalar products), but raised to different powers.
IBP identities play a crucial role in the calculation of multi-loop integrals because they can be used to identify a minimal set of elements, dubbed \emph{master integrals} (MIs), that can be used as a basis for the decomposition of multi-loop amplitudes.
At the same time, IBP relations can be exploited to build systems of equations solved by the MIs: differential equations \cite{Barucchi:1973zm,KOTIKOV1991158,KOTIKOV1991123,Bern:1993kr,Remiddi:1997ny,Gehrmann:1999as,Henn:2013pwa,Henn:2014qga,Papadopoulos:2014lla,Mastrolia:2014wca}, dimensional recurrence relations \cite{Tarasov:1996br,Lee:2009dh}, and finite difference equations \cite{Laporta:2001dd,Laporta:2003jz} are 
{\it linear relations} 
emerging from the IBP reduction of integrals which spring 
from the action of special (polynomial and differential) operators
on the integrands of the MIs.
Solving such equations amounts to the actual determination of the MIs themselves, as an alternative to direct integration.

The derivation of IBP-decomposition formulas requires the solution of a system of linear relations, generated by imposing that integrals of a total differential (w.r.t.\,integration variables) vanish on the integration boundary \cite{Laporta:2001dd,Larsen:2015ped}.
For multi-loop, multi-scale scattering amplitudes, solving the system of IBP relations may, however, represent a formidable task.
This problem has motivated the development of important techniques based on reconstruction of rational functions using finite fields \cite{vonManteuffel:2014ixa,Peraro:2016wsq,Klappert:2019emp,Peraro:2019svx}.

Already during the early developments of $S$-matrix theory, it was recognized that topology and cohomological methods offer a connection between analytic properties of Feynman integrals and the geometry of their singularity structure, which emerge from  
 the varieties associated to their graph polynomials. 
 In more recent studies, {\it intersection numbers} for twisted de Rham cohomology groups \cite{cho1995,matsumoto1994,matsumoto1998,OST2003,doi:10.1142/S0129167X13500948,aomoto2011theory,yoshida2013hypergeometric,goto2015,goto2015b,Yoshiaki-GOTO2015203,Mizera:2017rqa,matsubaraheo2019algorithm}
 have been exploited to uncover 
 the vector space structure of Feynman integrals and to derive novel algorithms for the direct {\it projection} of multi-loop integral onto MIs \cite{Mastrolia:2018uzb,Frellesvig:2019kgj,Mizera:2019gea,Frellesvig:2019uqt,Frellesvig:2020qot}
 (see also \cite{Mizera:2020wdt,Frellesvig:2021vem,Mandal:2022vok,Mastrolia:2022tww}).
 Important developments have been carried out in 
  \cite{Mizera:2019vvs,Weinzierl:2020xyy,Kaderli:2019dny,Weinzierl:2020nhw,Chen:2020uyk,Caron-Huot:2021xqj,Caron-Huot:2021iev}
 and example of applications to non-linear relations can be found in \cite{fresn2020quadratic,fresn2020quadratic2}.
 See \cite{Cacciatori:2021nli,Weinzierl:2022eaz,Mastrolia:2022tww,Abreu:2022mfk}, for recent reviews. \\
Within this approach, the number of MIs corresponds naturally to the dimension of the vector space of Feynman integrals,
and can be related to topological quantities such as the dimension of the cohomology groups, the
number of certain critical points \cite{Lee:2013hzt,Mastrolia:2018uzb}, Euler characteristics \cite{Aluffi:2009aa,Aluffi:2011ep,Bitoun:2017nre,Bitoun:2018afx,Frellesvig:2019uqt}, as well as to the dimension of quotient rings of polynomials for zero dimensional ideals (Shape lemma)~\cite{Frellesvig:2020qot}.
Linear relations (for integrals with shifted indices), Pfaffian systems of differential equations, finite difference equations, as well as quadratic relations, such as Riemann twisted period relations, can be derived by means of intersection numbers for differential (twisted) forms. \\ 

The pivotal role that twisted de Rham theory seems to have in controlling the algebra of Feynman integrals, and more generally of Euler integrals, has been stimulating us to elaborate on the isomorphism between cohomology groups, whose elements are differential forms, and ${\cal D}$-modules, whose elements are partial differential operators in a Weyl algebra. 

Within momentum-space representation, differential operators acting on multi-loop integrands and integrals are used to establish IBP relations, system of partial differential equations (SPDE) for MIs, and Lorentz invariance identities \cite{Gehrmann:1999as}.
In the former type of relations, differentiation is carried out w.r.t.\,the integration momenta, while for the latter two ones, w.r.t.\, the momenta of external particles and masses. 
Also, within parametric representations of Feynman integrals, it is possible to study the action of differential operators acting on multi-fold integrands and integrals, distinguishing between the cases for which the differentiation is carried out w.r.t.\,integration or external variables.

In \cite{Bitoun:2017nre,Bitoun:2018afx}, it was shown that linear relations between Feynman integrals having shifted indices, similar to IBPs, can arise from the action of parametric annihilators of the integrand, within the Lee–Pomeransky (LP) representation.
The algebraic properties of these special partial differential operators, w.r.t.\,the integration variables, are
derived by studying the ideals of parametric annihilators within the language of $\D$-modules.
Remarkably, the number of master integrals, known to be finite \cite{Smirnov:2010hn}, was found to be identical to the Euler characteristic of the complement of the hypersurface determined by the LP polynomial \cite{Bitoun:2017nre,Bitoun:2018afx}.

While the previous study focused on $\D$-module theory for differential operators in the {\it internal} variables, 
in this work, we investigate the properties of differential operators in the {\it external} variables.
In particular, we exploit the properties of GKZ-hypergeometric systems, 
introduced by Gel'fand, Kapranov, Zelevinsky \cite{GKZ-1989},
also known as ${\cal A}$-hypergeometric systems, to derive Pfaffian equations for the generators of the corresponding $\D$-module. 
As established in \cite{GKZ-Euler-1990}, GKZ systems provide a dictionary between Euler integrals and differential operators, which was later made fully algorithmic in \cite{Matsubara-Heo-Takayama-2020b}.
Feynman integrals can be considered as special cases of Euler integrals, therefore the Pfaffian matrices derived in the context of $\D$-module theory correspond to the matrices of the SPDE satisfied by the MIs \cite{Nasrollahpoursamami:2016}. 

The connection between Feynman integrals and the generalized hypergeometric functions was first proposed by Regge~\cite{Regge:1968rhi} and it was better understood with the knowledge of the Feynman integrals satisfying a holonomic differential equation~\cite{Kashiwara:1977nf}, where the singularities of the differential equation were governed by the Landau singularities. The relation between Feynman integrals and hypergeometric system were studied in~\cite{Hosono:1993qy, Hosono:1994ax,  Fleischer:2003rm, Kalmykov:2008gq, Vanhove:2018mto, Kalmykov:2020cqz,
Blumlein:2021hbq} (and reference therein).
The isomorphism between the GKZ system and the Feynman integral was established in~\cite{Nasrollahpoursamami:2016} and later realized within the 
LP representation in~\cite{delaCruz:2017zqr, Klausen:2019hrg}. \\

 Within the proposed approach, many problems, including the derivation of differential equations, are translated into a ring theoretic computation where we may utilize various notions of computational ring theory such as {\it Gr\"obner bases}.
 As Gr\"obner basis calculations 
 may be computationally expensive,
 we propose a different approach based on the {\it Macaulay matrix}, which is of central interest in this work. 
 Usually, relations among integrals are employed to derive systems of partial differential equations. Instead, in the current work, we reverse the perspective, and show how
 Pfaffian matrices, built from Maculay matrices, can be used to derive linear relations for integrals. Moreover, through the {\it secondary equation} introduced in \cite{matsubaraheo2019algorithm}, 
 we exploit the role of Pfaffian matrices for building {\it cohomology intersection matrices} to be used in the {\it master decomposition formula} presented in \cite{Mizera:2019gea,Mizera:2019vvs,Mizera:2020wdt,Mastrolia:2018uzb, Frellesvig:2019kgj,Frellesvig:2019uqt}, for the direct integral decomposition. \\ 

\subsection{Macaulay matrix for GKZ system}

GKZ systems can be rewritten into systems of differential equations dubbed Pfaffian systems.
By extension, the same holds true for Feynman integrals.
A Pfaffian system for a holonomic function $f$ in variables $z=(z_1, \ldots, z_N)$ is of the form
\begin{equation} \label{eq:pfaffianeq}
   \pd{i} F = P_i \cdot F , \quad F=(s_1 f, \ldots, s_r f) , \quad i=1,\ldots,N
   \ ,
\end{equation}
where $\pd{i} \defas \partial/\partial z_i$, $s_i$ are differential operators acting on $f$, and $P_i$ are $r \times r$ matrices with entries being rational functions of $z$.
The $P_i$ are called {\it Pfaffian matrices} (or simply {\it Pfaffians}) in this paper.
They satisfy the integrability condition
\begin{equation} \label{eq:CI}
    \pd{j}P_i + P_i \cdot P_j = \pd{i} P_j + P_j \cdot P_i \, .
\end{equation}

Note that a Gr\"obner basis for the annihilating ideal $\mathcal{I}$ of the function $f$ gives a Pfaffian system.
Conversely, a Pfaffian system, for instance obtained by Macaulay matrices, gives a Gr\"obner basis for the annihilating ideal: we have $\pd{i} e - P_i e \equiv 0 \ {\rm mod} \ \mathcal{I}$, $e \defas (s_1, \ldots, s_r)$.
Thus, computing the Macaulay matrix naturally leads to a Gr\"obner basis, as alternative to the Buchberger algorithm.

The Macaulay matrix is a generalization of the Sylvester matrix.
The method was regarded as less efficient than the Buchberger algorithm in the ring of polynomials until J.C.Faug\'ere proposed the algorithm F4 in 1999 \cite{Faugere-1999}.
F4 constructs Gr\"obner bases in the ring of polynomials by a variation of Macaulay matrix together with symbolic preprocessing of $S$-pairs.
This algorithm is extremely efficient when combined with linear algebra
optimization methods, for which reason the Macaulay matrix method is currently attracting more attention%
\footnote{
    Note that there is no unique definition of a Macaulay matrix:
    by a Macaulay matrix, we usually mean variations of the matrix used by Macaulay in \cite{Macaulay-1903}.
}.

There are several advantages of utilizing Macaulay matrix
in the ring of differential operators.
One is that once we construct a relevant Macaulay matrix, we do not need expensive computation in a non-commutative ring.
Moreover, in applications to GKZ-hypergeometric systems, standard monomials can be obtained \emph{without} computing Gr\"obner bases in the ring of differential operators \cite{Hibi-Nishiyama-Takayama-2017}.
Hence, we may assume that a set of standard monomials is given before constructing a Macaulay matrix.
These advantages were utilized in \cite{Ohara-Takayama-2015}, where they used a Macaulay matrix to evaluate normalizing constants for a statistical distribution by the holonomic gradient method.
This calculation was not possible with the original holonomic gradient method which utilizes Gr\"obner bases in the ring of differential operators (see e.g. \cite{url-hgm-refs}).

In this article, we propose a method for constructing {\it smaller Macaulay} matrices in the ring of differential operators compared to \cite{Ohara-Takayama-2015}.
Using smaller Macaulay matrices, one can construct larger Gr\"obner bases or Pfaffian systems.
By means of smaller Macaulay matrices, we provide new methods for computing recurrence relations, which may be called IBP identities for GKZ-hypergeometric systems.
In this setting, Feynman integrals are viewed as Mellin transforms of a certain (graph) polynomial $\G$ in Schwinger parameters, where the coefficients in front of the monomials of $\G$ are lifted to independent, indeterminate variables $z = (z_1, \ldots, z_N)$
\cite{delaCruz:2019skx,Klausen:2019hrg,Klausen:2021yrt,Tellander:2021xdz,Feng:2019bdx,Nasrollahpoursamami:2016,Vanhove:2018mto,Weinzierl:2022eaz}. \\

\noindent
The main results of our investigation can be summarised in three points:
\begin{itemize}
    \item 
    We obtain Pfaffian systems of partial differential equations for GKZ-hypergeometric functions, proposing a novel and more efficient algorithm to obtain it from Macaulay matrices, therefore improving the algorithm of~\cite{Ohara-Takayama-2015}. 

\item We use the Pfaffians generated with the Macaulay matrices to obtain linear (contiguity) relations for the GKZ systems, using the matrix factorial, within the holonomic gradient method~\cite{Tachibana-Goto-Koyama-Takayama-2020}; 

\item 
We exploit the rational solution of the secondary equation for intersection matrices, controlled by the Pfaffians~\cite{matsubaraheo2019algorithm}, to derive a novel version of the integral decomposition formula \cite{Mizera:2019gea,Mastrolia:2018uzb, Frellesvig:2019kgj,Frellesvig:2019uqt} based on intersection theory.
\end{itemize}

\noindent
Our results apply to Feynman integrals, and, more generally to 
GKZ-hypergeometric functions. \\

Our presentation is organized as follows.
In \secref{sec:gkz-system}, we review basic notions of the GKZ
hypergeometric systems and their Euler integral representation.
In \secref{sec:pfaffian-systems}, we discuss Pfaffian systems of
differential equations, which are intimately related to GKZ systems.
We present the Macaulay matrix algorithm, based only on linear algebra, to compute Pfaffian matrices in \secref{sec:pfaffian-from-macaulay}.
We show its application to examples of differential equations for
Feynman integrals in \secref{sec:macaulay-feynman}.
In \secref{sec:linear-relations}, we show how Pfaffians can be used to derive linear relations for GKZ systems, similar to IBP identities for Feynman integrals.
Finally, in \secref{sec:intersection}, we present the integral decomposition via intersection numbers, using Pfaffians to compute the required intersection matrices.

All algorithms in this paper are implemented in
the computer algebra system \soft{Risa/Asir} \cite{url-asir}, \soft{Maple} \cite{url-maple} and
\soft{Mathematica} \cite{url-Mathematica} with \soft{FiniteFlow} \cite{url-FiniteFlow},
while the calculations involving Feynman integrals are checked with \soft{LiteRed} \cite{Lee:2012cn, Lee:2013mka}.
Programs used in this paper and machine readable data can be obtainable
from \cite{url-mm-data}.

\section{GKZ hypergeometric systems}
\label{sec:gkz-system}
\done{\seva{In this section we \ldots}}
In this section, we briefly review some basic properties of the
GKZ-hypergeometric systems to fix our notation.
\secref{subsec:Integral representation} introduces a particular integral
representation related to the GKZ systems we work with,
and \secref{subsec:GKZ and de Rham} covers its relation to the algebraic de Rham cohomology groups.
In \secref{ssection:ints_in_D-mod}, we describe how to represent a cohomology class by an element of Weyl algebra.
Finally, in \secref{ssec:rescaling}, we discuss the homogeneity property of GKZ
systems, which allows us to reduce the number of independent variables.

\subsection{Integral representation of GKZ-hypergeometric system}\label{subsec:Integral representation}

In this work, we consider Euler integrals of the form
\eq{
    \label{f_Gamma(z)}
    f_\Gamma(z) = \int_\Gamma g(z;x)^{\beta_0} \, x_1^{-\beta_1} \cdots x_n^{-\beta_n} \, \frac{\dd x}{x}
    \quad , \quad
    \frac{\dd x}{x} \defas \frac{\dd x_1}{x_1} \wedge \cdots \wedge \frac{\dd x_n}{x_n} \, .
}
Here $\Gamma$ is a {\it twisted cycle}%
\footnote{
    A twisted cycle is an integration contour with no boundary, along which the branch of the integrand is specified. For details, see \cite[Chapter 3]{aomoto2011theory}
},
$\beta = (\beta_0, \ldots, \beta_n) \in \mathbb{C}^{n+1}$ are complex parameters, and $g(z;x)$ is a Laurent polynomial in $x$
\eq{
    \label{g(z;x)}
    g(z;x) = \sum_{i=1}^N z_i \, x^{\alpha_i} \, .
}
The monomials above are written in multivariate exponent notation:
given an integer vector $\alpha_i \in \mathbb{Z}^n$ we set
\begin{align}
    x^{\alpha_i} \defas 
    x_1^{\alpha_{i,1}} 
    \cdots 
    x_n^{\alpha_{i,n}} \ ,
    \label{eq:multivar-exp}
\end{align}
where $\alpha_{i,j}$ stands for the $j$-th component of the vector $\alpha_i$.
Crucially, in \eqref{g(z;x)} we regard each coefficient $z_i$ as an {\it independent} variable of $f_\Gamma(z)$.

Let us construct the $(n+1) \times N$ matrix
\eq{
    \label{A_mat}
    A = \begin{pmatrix} a_1 & \ldots & a_N \end{pmatrix} \ ,
}
whose columns $a_i$ are built from the monomial exponents
$\alpha_i$ as $a_i := (1, \alpha_i)$, 
with the assumption that  $\text{Span}\{a_1,\ldots,a_N\} = \ZZ^{n+1}$.
Moreover, we introduce the (left) kernel of $A$, defined as,
\eq{
    \text{Ker}(A) =
    \big\{
    u = (u_1,\ldots,u_N) \in \mathbb{Z}^N \, \big | \, {\sum_{j=1}^N u_j \, a_j} = \mathbf{0}
    \big\} \, .
}
Then, by using $A$ and $\beta$ as input, 
we build the following set of differential operators:
\begin{alignat}{2}
    \label{E_j}
    E_j &=
    \sum_{i=1}^N a_{j,i} \, z_i \, \frac{\partial}{\partial z_i} - \beta_j
    \, , \quad
    && j= 1, \ldots, n+1 
    \\
    \Box_u &=
    \prod_{u_i > 0} \left( \frac{\partial}{\partial z_i} \right)^{u_i} -
    \prod_{u_i < 0} \left( \frac{\partial}{\partial z_i} \right)^{-u_i}
    \, , \quad
    && \forall u \in \text{Ker}(A) \, .
    \label{Box_u}
\end{alignat}
The function $f_\Gamma(z)$, 
defined in \eqref{f_Gamma(z)}, satisfies the system of partial differential equations (PDE)
\eq{
    E_j \, f_\Gamma(z) &= 0
    \ ,
    \\
    \Box_u \, f_\Gamma(z) &= 0
    \ ,
}
therefore it is dubbed an ${\cal A}$-hypergeometric function \cite{GKZ-1989}.

\subsection{GKZ \texorpdfstring{$\D$}{}-modules and de Rham cohomology}\label{subsec:GKZ and de Rham}

The operators in \eqref{E_j}-\eqref{Box_u} can be regarded as elements of a
Weyl algebra
\eq{
    \D_N = \mathbb{C}[z_1,\ldots,z_N]\langle \partial_1,\ldots,\partial_N\rangle
    \quad , \quad
    [\partial_i,\partial_j] = 0 \quad , \quad [\partial_i,z_j] = \delta_{ij} \, .
}
In multivariate exponent notation, the elements of $\D_N$ take the form $\sum_{k \in K} h_k(z) \partial^k$ for some finite collection of sets $K = \{K_i \in \mathbb{N}_0^{N} \}_i $, where the $h_k(z)$ are polynomials in $z$ with complex coefficients.
The symbol $\pd{i}$ is an alias of $\frac{\partial}{\partial z_i}$.

We introduce the \emph{GKZ system} as the left $\D_N$-module $\D_N/H_A(\beta)$, where
$H_A(\beta)$ 
is the left ideal generated by
$E_j$ and $\Box_u$,
\eq{
    \label{HA(beta)}
    H_A(\beta) =
    \sum_{j=1}^{n+1} \D_N \cdot E_j + \sum_{u \in \text{Ker}(A)} \D_N \cdot \Box_u \, .
}
Further details on $\D$-modules theory can be found in the \Appref{sec:D-modules-appendix}.

Let us list a few important properties of GKZ systems and their relations
to de Rham cohomology groups, which are expressed through the following
theorems and propositions. \\

\noindent
$\square$
First, we recall a theorem on the number of solutions to GKZ systems.
Let $\Delta_A$ denote the convex polytope spanned by the columns of $A$.
We say $\beta$ is \emph{non-resonant} when it does not belong to any set of the form
${\rm span}_{\C}\{a_i\mid a_i\in F\}+\ZZ^{n+1}$ where $F$ is a facet of $\Delta_A$. \\

\noindent
\begin{minipage}{11.5cm}
For example, if we take a $2\times 2$ matrix
$$A=\left(
\begin{array}{cc}
1 & 1 \\
0 & 1 \\
\end{array}
\right) \ ,
$$
the polytope $\Delta_A$ is a segment, and there are exactly two facets,
respectively indicated by the thick segment and the black dots in the nearby picture.
Each facet defines a linear subspace on which $\beta$ is resonant.
\\
\end{minipage}
\begin{minipage}{3cm}
\hspace*{0.5cm}
\setlength{\unitlength}{1.5mm}
\begin{picture}(20,20)
\put(0,10){\line(1,0){20}}
\put(0,0){\line(1,1){20}}
\put(9,8){(0,0)}
\put(15,10){\circle*{1}}
\put(14,8){(1,0)}
\put(15,15){\circle*{1}}
\put(16,13){(1,1)}
\linethickness{0.6mm}
\put(15,10){\line(0,1){5}}
\thinlines
\put(0,0){\circle{1}}\put(0,5){\circle{1}}\put(0,10){\circle{1}}
                     \put(0,15){\circle{1}}\put(0,20){\circle{1}}
\put(5,0){\circle{1}}\put(5,5){\circle{1}}\put(5,10){\circle{1}}
                     \put(5,15){\circle{1}}\put(5,20){\circle{1}}
\put(10,0){\circle{1}}\put(10,5){\circle{1}}\put(10,10){\circle{1}}
                      \put(10,15){\circle{1}}\put(10,20){\circle{1}}
\put(15,0){\circle{1}}\put(15,5){\circle{1}}\put(15,10){\circle{1}}
                      \put(15,15){\circle{1}}\put(15,20){\circle{1}}
\put(20,0){\circle{1}}\put(20,5){\circle{1}}\put(20,10){\circle{1}}
                      \put(20,15){\circle{1}}\put(20,20){\circle{1}}
\end{picture}
\end{minipage}
\begin{theorem}[\cite{Adolphson-1994}]
    \label{thm:Ado}
    \hfill
    \begin{enumerate}

        \item $H_A(\beta)$ is a holonomic ideal\,%
        \footnote{
            For the definition of a holonomic ideal, see \Appref{sec:D-modules-appendix} or p. 31 of \cite{SST}.
        }

        \item When $\beta$ is non-resonant, the holonomic rank $r$ of $H_A(\beta)$ is given by the volume%
        \footnote{
            ${\rm vol}$ stands for the Lebesgue measure and can be calculated with software such as \texttt{Polymake} \cite{polymake:2000}. The holonomic rank is the number of
            standard monomials of $\Ring H_A(\beta)$ \brk{see \Appref{sec:D-modules-appendix}}.
        }
        \eq{
            \label{holonomic_rank}
            r = n! \cdot {\rm vol}(\Delta_A) \, .
        }

    \end{enumerate}
\end{theorem}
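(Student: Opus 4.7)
The plan is to attack both parts via the principal symbol map on the Weyl algebra $\D_N$ and the induced order filtration, passing to the associated graded ring $\mathrm{gr}(\D_N) = \C[z_1,\ldots,z_N,\xi_1,\ldots,\xi_N]$ where $\xi_i = \sigma(\pd{i})$. The characteristic ideal of $H_A(\beta)$ then becomes a very concrete object built from the combinatorics of $A$.

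For part (1), I would take symbols of the generators. The symbol of $\Box_u$ is $\xi^{u_+}-\xi^{u_-}$, so the symbols of the $\Box_u$ generate precisely the toric ideal $I_A\subset\C[\xi]$. The symbol of $E_j$ is $\sum_i a_{j,i}z_i\xi_i$, since the constant term $\beta_j$ has order zero. The characteristic variety $\mathrm{Ch}(H_A(\beta))\subset T^*\C^N$ is therefore cut out by $I_A$ together with the $n+1$ linear forms $\sum_i a_{j,i}z_i\xi_i$. Since $V(I_A)\subset\C^N_\xi$ is the affine toric variety of dimension $n+1$, fibering $\mathrm{Ch}$ over $\C^N_z$ and using the spanning assumption $\mathrm{Span}\{a_1,\ldots,a_N\}=\ZZ^{n+1}$ to see that the Euler linear forms generically cut this fibre down by $n+1$ dimensions, one obtains $\dim\mathrm{Ch}(H_A(\beta))=N$. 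This matches Bernstein's inequality, so the ideal is holonomic.

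For part (2), I would use the flat-deformation principle: for a generic weight vector $w$ on the $\pd{i}$, the holonomic rank is preserved under the passage $H_A(\beta)\leadsto \LT H_A(\beta)$. The initial ideal admits a clean description in terms of the regular triangulation $T_w$ of the point configuration $A$ induced by $w$: the initial toric part decomposes along the maximal simplices of $T_w$, and combined with the Euler relations it yields a set of standard monomials whose count, simplex by simplex, equals $n!\,\mathrm{vol}(\sigma)$ for each $\sigma\in T_w$. Summing gives $n!\,\mathrm{vol}(\Delta_A)$, which by \eqref{holonomic_rank} is the claimed holonomic rank.

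The main obstacle is the role of the non-resonance hypothesis. Without it, contributions from proper faces $F\subset\Delta_A$ can fail to cancel cleanly, producing the well-known rank-jumps at resonant parameters (the facial resonance hyperplanes are precisely the loci where the Euler action on candidate face-supported monomials degenerates). The careful argument — either through Adolphson's analytic dimension count of the solution space at a generic basepoint, or via the Euler--Koszul homology framework showing that all higher homology of the associated toric complex vanishes when $\beta$ avoids the facial resonance arrangement — is what guarantees that every simplex contributes its full volume and nothing more. I expect this face-by-face accounting under non-resonance to be the technically demanding step; the holonomicity in part (1) is in comparison a fairly direct dimension count on the characteristic variety.
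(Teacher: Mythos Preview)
The paper does not prove this theorem; it is stated with a citation to \cite{Adolphson-1994} and used as a black box throughout. There is no proof in the paper to compare your proposal against.

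Your sketch is nonetheless a reasonable outline of the standard arguments in the literature. Two technical points are worth flagging. For part (1), the characteristic ideal $\initial_{({\bf 0},{\bf 1})}(H_A(\beta))$ can in principle strictly contain the ideal generated by the principal symbols of the listed generators $E_j$ and $\Box_u$; however, since you only need the upper bound $\dim\mathrm{Ch}(H_A(\beta))\leq N$ (Bernstein's inequality supplies the lower bound), the containment direction suffices, and your description indeed gives a variety containing $\mathrm{Ch}$. You should also argue that the closed locus of $z\in\C^N$ over which the $n+1$ Euler forms fail to cut the $(n+1)$-dimensional toric fiber down to dimension zero does not raise the total dimension above $N$. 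For part (2), your approach via initial ideals and regular triangulations is closer to the Saito--Sturmfels--Takayama treatment in \cite{SST} than to Adolphson's original argument, which proceeds by analytically constructing an explicit basis of series solutions at a generic point; both routes are valid, and your identification of non-resonance as the hypothesis blocking facial rank-jump contributions is correct.
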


\noindent
The holonomic rank equals the number of independent solutions to the system of PDEs \eqref{E_j}-\eqref{Box_u} at a generic point $z\in\mathbb{C}^N$.
The first statement ensures that the rank is finite, while the second statement gives an exact formula for computing it in terms of combinatorial data. \\

\noindent
$\square$
Next, letting $\mathbb{G}_m$ (resp. $\mathbb{A}$) stand for the complex torus (resp. complex Affine line) equipped with the Zariski topology%
\footnote{
    As a set, the torus $\mathbb{G}_m$ (resp. the complex Affine line $\mathbb{A}$) is equivalent to $\mathbb{C}^\ast := \mathbb{C} \setminus \{0\}$ (resp. $\mathbb{C}$).
}
and
\eq{
    X \defas
    \big\{(z,x)\in\mathbb{A}^N \times (\mathbb{G}_m)^n    \big| \, g(z;x) \neq 0 \big\}
    \quad , \quad
    Y \defas
    \mathbb{A}^N \, ,
}
we denote by $\pi : X \ \to \ Y$ the natural projection from the space of GKZ and integration variables to the space of GKZ variables only.

Setting $\mathcal{O}(X)\defas \C[z_1,\dots,z_N,x_1^{\pm 1},\dots,x_n^{\pm 1},\frac{1}{g}]$, we define an action of $\D_N$ on
$f=f(z,x) \in \mathcal{O}(X)$ by
\begin{align}
        \frac{\partial}{\partial z_i}\bullet f &=
     \frac{\partial f}{\partial z_i} + \beta_0
     \left(
     {1 \over g(z;x)}
     {\partial g(z;x) \over \partial z_i}
     \right)
     f
    \ ,
     \\
    \frac{\partial}{\partial x_i}\bullet f &=
    \frac{\partial f}{\partial x_i} +
    \beta_0
    \left(
     {1 \over g(z;x)}
     {\partial g(z;x) \over \partial x_i}
    \right)f
    - \beta_i \frac{f}{x_i}
    \ .
\end{align}
The symbol
$\mathcal{O}(X)\, g^{\beta_0}x_1^{-\beta_1}\dots x_n^{-\beta_n}$ denotes the left $\D_N$-module $\mathcal{O}(X)$ endowed with this action.
Formally, we have the identities
\begin{eqnarray}
&&
\frac{\partial}{\partial z_i} \Big(g^{\beta_0} \, x_1^{-\beta_1}\dots x_n^{-\beta_n} \,
f
\Big) = g^{\beta_0} \, x_1^{-\beta_1}\dots x_n^{-\beta_n}
 \left(\frac{\partial}{\partial z_i}\bullet f
 \right)
    \ ,
    \\
&&
\frac{\partial}{\partial x_i} \Big(g^{\beta_0} \, x_1^{-\beta_1}\dots x_n^{-\beta_n} f \,
\Big) = g^{\beta_0}\, x_1^{-\beta_1}\dots x_n^{-\beta_n}
 \left(\frac{\partial}{\partial x_i} \bullet f
 \right)
    \ .
\end{eqnarray}
The direct image $\D$-module $\int_\pi\mathcal{O}(X) \, g^{\beta_0} \, x_1^{-\beta_1}\dots x_n^{-\beta_n}$ is defined canonically as in \Appref{sec:D-modules-appendix}.

\begin{theorem}[\cite{GKZ-Euler-1990}]\label{thm:Gel}
Suppose that $\beta$ is non-resonant.
Then there is a canonical isomorphism of left $\D_N$-modules
\begin{equation}
    \D_N / H_A(\beta) \simeq \int_\pi \mathcal{O}(X) \,  g^{\beta_0} \,  x_1^{-\beta_1}\dots x_n^{-\beta_n} \ .
\end{equation}
\end{theorem}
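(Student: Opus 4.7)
The plan is to exhibit a canonical surjection $\tilde{\varphi}:\D_N/H_A(\beta)\twoheadrightarrow M$, where $M:=\int_\pi \mathcal{O}(X)\,g^{\beta_0}x_1^{-\beta_1}\cdots x_n^{-\beta_n}$, and then promote it to an isomorphism by a holonomic-rank comparison. First, I would identify $M$, via the standard construction of the algebraic direct image under the smooth affine projection $\pi:X\to Y$, with the top cohomology of the relative twisted de Rham complex: classes in $M$ are represented by forms $h\cdot g^{\beta_0}x^{-\beta}\frac{\dd x}{x}$ with $h\in\mathcal{O}(X)$, modulo the image of the twisted differential along the fibres of $\pi$. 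Since $\mathcal{O}(X)\,g^{\beta_0}x^{-\beta}$ is cyclic over the total Weyl algebra in $(z,x)$ with generator $g^{\beta_0}x^{-\beta}$, the class $[g^{\beta_0}x^{-\beta}\frac{\dd x}{x}]$ generates $M$ as a $\D_N$-module, yielding a canonical surjection $\varphi:\D_N\twoheadrightarrow M$, $P\mapsto P\bullet[g^{\beta_0}x^{-\beta}\frac{\dd x}{x}]$.

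Second, I would verify $H_A(\beta)\subset\ker\varphi$ by direct computation. For the toric generators, one shows $\prod_{u_i>0}\partial_{z_i}^{u_i}\bullet(g^{\beta_0}x^{-\beta})=\beta_0(\beta_0-1)\cdots(\beta_0-|u^+|+1)\prod_i x^{u_i^+\alpha_i}\,g^{\beta_0-|u^+|}x^{-\beta}$, and the constraints $\sum_i u_i a_i=0$ force $|u^+|=|u^-|$ and $\prod_i x^{u_i^+\alpha_i}=\prod_i x^{u_i^-\alpha_i}$, so this coincides with the corresponding expression for $u^-$; hence $\Box_u\bullet(g^{\beta_0}x^{-\beta})=0$ already at the level of $\mathcal{O}(X)\,g^{\beta_0}x^{-\beta}$. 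For the Euler generators, $\sum_i a_{j,i}z_i\partial_{z_i}\bullet(g^{\beta_0}x^{-\beta})$ combines, via $\sum_i\alpha_{i,k}z_i x^{\alpha_i}=x_k\,\partial g/\partial x_k$, with the $-\beta_j$ shift to produce the $z$-part of the action $(x_k\partial/\partial x_k)\bullet(g^{\beta_0}x^{-\beta})$, which represents a total derivative along the fibre and hence vanishes in $M$; the row $j=1$ of $A$ simply captures the total homogeneity of $g$ in $z$. Thus $\varphi$ descends to a surjection $\tilde{\varphi}:\D_N/H_A(\beta)\twoheadrightarrow M$.

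Third, I would close the argument by a rank comparison. By \thmref{thm:Ado}, $\D_N/H_A(\beta)$ is holonomic of generic rank $n!\,{\rm vol}(\Delta_A)$ under non-resonance of $\beta$. The module $M$ is holonomic as well, being the direct image of a holonomic $\D$-module along the affine map $\pi$, and its generic rank at a point $z\in Y$ equals the dimension of the top twisted algebraic de Rham cohomology of the fibre $(\mathbb{G}_m)^n\setminus\{g(z;\cdot)=0\}$ with the integrable connection induced by $g^{\beta_0}x^{-\beta}$; under non-resonance this dimension again equals $n!\,{\rm vol}(\Delta_A)$ by the Adolphson-Sperber volume formula. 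A $\D_N$-linear surjection between two holonomic modules of equal generic rank has a holonomic kernel of generic rank zero, which must therefore be zero, so $\tilde{\varphi}$ is an isomorphism.

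The main obstacle is the rank computation for $M$: one must use non-resonance of $\beta$ to guarantee both that the lower relative twisted de Rham cohomologies vanish (so the derived direct image collapses to $M$ placed in top degree) and that the top cohomology of a generic fibre has the expected dimension $n!\,{\rm vol}(\Delta_A)$. Both properties fail on resonant loci, where facets of $\Delta_A$ with trivial monodromy contribute extra cohomology and produce a non-trivial kernel for $\tilde{\varphi}$, which is precisely why the theorem is stated under the non-resonance hypothesis.
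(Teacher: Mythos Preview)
The paper does not give its own proof of this theorem: it is quoted with attribution to \cite{GKZ-Euler-1990}, and the surrounding text only makes the isomorphism explicit by identifying the direct image with the top relative twisted de Rham cohomology, citing \cite{AdoSpa}. So there is no in-paper argument to compare against; your sketch is being measured against the literature proof it alludes to.

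Your first two steps are sound and match the standard strategy: cyclicity of $\mathcal{O}(X)\,g^{\beta_0}x^{-\beta}$ gives the canonical surjection, and your computations showing that $\Box_u$ and $E_j$ annihilate the class $[g^{\beta_0}x^{-\beta}\frac{\dd x}{x}]$ are correct (the $\Box_u$ case uses exactly $A u=0$, and the $E_j$ case produces a fibrewise total derivative).

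The third step, however, has a genuine gap. From a surjection of holonomic $\D_N$-modules of equal generic rank you correctly deduce that the kernel is holonomic of generic rank zero, but you then assert ``which must therefore be zero''. That inference is false in general: a holonomic module of generic rank zero can be nonzero, supported on a proper closed subvariety of $Y$ (e.g.\ $\D_1/\D_1 z$). To close the argument you must supply an extra ingredient, such as: (i) $\D_N/H_A(\beta)$ has no nonzero submodule supported on a hypersurface (``torsion-freeness''), or (ii) $\D_N/H_A(\beta)$ is irreducible for non-resonant $\beta$, so any nonzero quotient map is injective. Both statements are true under the non-resonance hypothesis and are part of the Adolphson/GKZ theory, but they are substantial results in their own right and cannot be absorbed into a rank count. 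Alternatively, the original proofs compare characteristic cycles rather than just generic ranks, which does pin down the kernel. You should name whichever of these inputs you intend to use and cite it, rather than leaving the step implicit.
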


\noindent
Let us make this isomorphism explicit \cite{AdoSpa}.
We let
\eq{
    \Omega_{X/Y}^k =
    \bigoplus_{J \subset \{1, \ldots, n\}, \, |J| = k}
    {\mathcal{O}(X)} \, \dd x^J
}
represent the space of relative $k$-forms%
\footnote{
    $X/Y$ does not denote a quotient space. Rather, it is a symbol for relative differential forms, by which we mean that we do not consider differential forms $\dd z^J$ where $z \in Y$.
}
onto which we act with the covariant derivative in integration variables
\eq{
\label{covariant_derivative_GKZ}
    \nabla_x  =
    \dd_x + \beta_0 \frac{\dd_x g(z;x)}{g(z;x)} \wedge -
    \sum_{i=1}^n \beta_i \frac{\dd x_i}{x_i}  \wedge \, .
}
We then obtain a chain complex
\eq{
    \cdots \overset{\nabla_x}{\longrightarrow} \
    \Omega_{X/Y}^k \
    \overset{\nabla_x}{\longrightarrow} \
    \Omega_{X/Y}^{k+1} \
    \overset{\nabla_x}{\longrightarrow} \cdots \, .
}
The $k$-th \emph{relative de Rham cohomology group} is defined as follows:
\eq{
\label{de_Rham_cohomology_group}
    \mathbb{H}^k &\defas
    \mathrm{Ker}
    \Bigbrk{
        \nabla_x : \, \Omega_{X/Y}^k
        \longrightarrow
        \Omega_{X/Y}^{k+1}
    }
    \ \Big / \
    \mathrm{Im}\Bigbrk{
        \nabla_x : \, \Omega_{X/Y}^{k - 1}
        \longrightarrow
        \Omega_{X/Y}^{k}
    } \ .
}

\noindent
It can be shown that the direct image $\mathcal{D}$-module
$\int_\pi\mathcal{O}(X) \, g^{\beta_0} \, x_1^{-\beta_1}\dots x_n^{-\beta_n}$
is isomorphic to the $n$-th relative de Rham cohomology group $\mathbb{H}^n$, for which reason the latter is a left $\D_N$-module by \thmref{thm:Gel}.
In fact, \thmref{thm:Gel} can be rephrased as
\begin{proposition}\label{prop:D-mod_deRham_iso}
Suppose that $\beta$ is non-resonant.
Then there is a unique isomorphism of $\D_N$-modules
\begin{equation}
    \D_N / H_A(\beta) \simeq \mathbb{H}^n
\end{equation}
such that $[1]\in \D_N/H_A(\beta)$ is sent to 
$\left[ \frac{dx}{x} \right]\in \mathbb{H}^n$ \ .
\end{proposition}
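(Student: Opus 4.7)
The plan is to build the isomorphism by combining Theorem \ref{thm:Gel} with the standard identification of the direct image $\int_\pi$ with the relative de Rham complex, and then use the fact that $[1]$ generates $\D_N/H_A(\beta)$ to force uniqueness.

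First I would make precise the computation of $\int_\pi M$ for $M = \mathcal{O}(X)\, g^{\beta_0} x_1^{-\beta_1}\cdots x_n^{-\beta_n}$. Since $\pi: X \to Y$ is an affine morphism of relative dimension $n$, the derived direct image reduces (by vanishing of higher quasi-coherent cohomology on affines) to the relative de Rham complex $\Omega^\bullet_{X/Y}$ tensored with $M$, equipped with the integrable connection $\nabla_x$ of \eqref{covariant_derivative_GKZ}. A short check shows the action of $\D_N$ on $M$ described in the excerpt coincides with the one induced on the cohomology of this complex via the Gauss--Manin connection on $\pi$. Moreover, for the non-resonant $\beta$, the only nonvanishing cohomology is the top one, $\mathbb{H}^n$, so
\begin{equation*}
    \int_\pi \mathcal{O}(X)\, g^{\beta_0} x_1^{-\beta_1}\cdots x_n^{-\beta_n} \;\simeq\; \mathbb{H}^n
\end{equation*}
canonically as left $\D_N$-modules. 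Composing with the isomorphism of Theorem \ref{thm:Gel} gives an isomorphism $\varphi: \D_N/H_A(\beta) \xrightarrow{\sim} \mathbb{H}^n$.

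Next I would track the image of the generator $[1]$. Under the canonical map $M \to \int_\pi M \simeq \mathbb{H}^n$, the element $1 \in \mathcal{O}(X)$ corresponds to the relative top form $\frac{\dd x}{x}$, since the twist by $g^{\beta_0}x^{-\beta}$ has been absorbed into the connection $\nabla_x$. To confirm this is consistent with the map of Theorem \ref{thm:Gel}, I would verify on generators: applying $E_j$ (resp.\ $\Box_u$) to $\frac{dx}{x}$ through $\nabla_x$ produces an element of the image of $\nabla_x$, i.e.\ zero in $\mathbb{H}^n$, precisely because the same operators annihilate $f_\Gamma(z)$ in \eqref{f_Gamma(z)} after integration by parts. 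This matches the annihilation of $[1]$ by $H_A(\beta)$ in $\D_N/H_A(\beta)$, so the assignment $[1] \mapsto [\frac{dx}{x}]$ descends to a well-defined $\D_N$-module map which must agree with $\varphi$ by non-resonance.

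Uniqueness is then immediate: the source $\D_N/H_A(\beta)$ is cyclic, generated by $[1]$, so any $\D_N$-linear map out of it is determined by the image of $[1]$. Therefore the condition $[1] \mapsto [\frac{dx}{x}]$ fixes the isomorphism.

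The main obstacle is the compatibility check underlying the composed isomorphism, i.e.\ making precise that the canonical identification of $\int_\pi M$ with the top relative de Rham cohomology carries the class of $1\in M$ to $[\frac{dx}{x}]$, rather than to a rescaled representative. This requires unwinding the conventions for the direct image functor (in particular the shift by the relative dimension and the transfer bimodule $\D_{X \to Y}$) and matching them against the concrete action formulas in \eqref{covariant_derivative_GKZ}. Once these conventions are pinned down, the rest of the argument is formal.
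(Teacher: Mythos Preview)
Your proposal is correct and follows the same route the paper indicates: the paper does not give a formal proof but simply presents the proposition as a rephrasing of Theorem~\ref{thm:Gel}, relying on the identification of $\int_\pi \mathcal{O}(X)\,g^{\beta_0}x^{-\beta}$ with $\mathbb{H}^n$ stated just before the proposition. Your argument unpacks precisely this composition and adds the (straightforward) uniqueness from cyclicity of $[1]$, which the paper leaves implicit.
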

\noindent

A consequence of \prpref{prop:D-mod_deRham_iso}, which will be essential for
our application of $\D_N$-module theory to Feynman integrals, is the following:
given a cohomology class $[\omega(z)] \in \mathbb{H}^n$, there exists a
differential operator $P \in \D_N$, which is unique modulo $H_A(\beta)$, such
that
\eq{
    P \left[ \frac{\dd x}{x} \right] = [\omega(z)] \, .
}
The partial differential operators $\partial_i$ in $P$ act on a cohomology class $[\omega(z)] \in \mathbb{H}^{n}$ via
\eq{
    \partial_i\bullet \, [\omega(z)] =
    \left[
            \partial_i \, \omega(z) + \beta_0 \frac{x^{\alpha_i}}{g(z;x)} \omega(z)
    \right].
    \label{eqn:GM-Der}
}
The action \eqref{eqn:GM-Der} comes from differentiation under the integral sign:
\eq{
\frac{\partial}{\partial z_i}
\int_\Gamma g(z;x)^{\beta_0} \, x_1^{-\beta_1} \cdots x_n^{-\beta_n} \, \omega(z)
=
\int_\Gamma g(z;x)^{\beta_0} \, x_1^{-\beta_1} \cdots x_n^{-\beta_n} \, \partial_i\bullet\omega(z)
\ .
}

Since a Feynman integral can be represented by a cohomology class
\cite{Mastrolia:2018uzb}, \emph{we may equally well consider the operator $P$
as representing that integral}. An algorithm for computing $P$ was developed
in \cite{Matsubara-Heo-Takayama-2020b} and will be outlined in the following
section.
Moreover, in the view of the relation of GKZ-systems and Feynman integrals (to be elaborated on in \secref{sec:macaulay-feynman}), we observe that the finiteness of the rank, established
by the first statement of \thmref{thm:Ado}, can be related to the finiteness of the number of
master integrals \cite{Smirnov:2010hn}.  The formula for its
evaluation, given in the second statement of \thmref{thm:Ado}, offers an alternative way
of determining ${\rm dim}({\mathbb H}^n)$, which is ordinarily computed in terms of Betti
numbers, by counting the number of certain critical points, or by Euler
characteristics - all of which are related to the number of master integrals.

\subsection{Representing integrals inside \texorpdfstring{$\D_N$}{}-modules}
\label{ssection:ints_in_D-mod}

Let us define the following family of differential forms:
\eq{
  \label{omega_q}
  \omega_q = g(z;x)^{-q_0} \, x^{q'} \, \frac{\dd x}{x}
  \quad , \quad
  q = (q_0 , \, q') \in \ZZ \times \ZZ^n.
}
Pairing $\omega_q$ with a cycle $\Gamma$ yields the integral
\eq{
    \label{omega_q_Gamma}
    \langle \omega_q \rangle_\Gamma =
    \int_\Gamma g(z;x)^{\beta_0-q_0} \,  x_1^{-\beta_1+q'_1} \cdots x_n^{-\beta_n+q'_n} \frac{\dd x}{x}
    \ .
}
We observe that differentiating the integral w.r.t $z_i$ shifts the $q$ vector
by $+a_i$ (defined after \eqref{A_mat}):
\eq{
    \label{step_down}
   \partial_i \langle \omega_q \rangle_\Gamma =
   (\beta_0-q_0) \langle \omega_{q+a_i} \rangle_\Gamma \, .
}

It is also possible to construct an operator which does the opposite, i.e. shifting
$q$ by $-a_i$. In \cite{SSTip} it was shown that there exists a so-called
step-up (or creation) operator $U_i$ and polynomial $b_i(\beta)$ such that
\eq{
    U_i \partial_i - b_i(\beta) = 0 \quad \text{mod} \quad H_A(\beta) \, .
}
The operator $U_i$ then acts on the integral \eqref{omega_q_Gamma} via
\eq{
    \label{step_up}
    U_i \langle \omega_q \rangle_\Gamma = b_i(\beta-a_i) \langle \omega_{q-a_i}\rangle_\Gamma
    \ .
}
The polynomial $b_i(\beta)$ is called the $b$-function \cite{SSTip}. Its
computation is algorithmic and is implemented in computer algebra packages such
as \code{mt\_gkz.rr} \cite{Matsubara-Heo-Takayama-2020b}.

Now, assume an integral of the form \eqref{omega_q_Gamma} is given for some
choice of the integer vector $q$. Since the $a_i$'s span $\ZZ^{n+1}$ by
assumption, we may write
\eq{\label{eq:q_sum_definition}
    q = \sum_{i=1}^N r_i \, a_i \, .
}
The coefficients $r_i \in \ZZ$ allow us to balance the shifts by $\pm a_i$ in the $q$ of
$\langle \omega_q \rangle_\Gamma$ via equations \eqref{step_down} and
\eqref{step_up}. We may therefore construct the operator
\eq{
    P(q) = \prod_{r_i<0} U_i^{-r_i} \prod_{r_i>0} \frac{1}{B(\beta)B'(\beta)}
    \> \partial^{r_i}
    \ ,
}
with the property that
\eq{
    \label{P(q)}
    P(q) \langle \omega_0 \rangle_\Gamma =
    \langle \omega_q \rangle_\Gamma
    \quad , \quad
    \omega_0 = \frac{\dd x}{x}
    \ .
}
The two functions $B(\beta)$ and $B'(\beta)$ are built from the prefactors on
the RHSs of \eqref{step_down} and \eqref{step_up}, and their explicit expressions
can be found in \cite{Matsubara-Heo-Takayama-2020b}. The operator $P(q)$ will
be used extensively in our applications to Feynman integrals, in which case the
vector $q$ will be related to the propagator powers and space-time dimension of
a given integral.
Let us observe that the vector $(r_1, \ldots, r_N)$ in \eqref{eq:q_sum_definition} is not unique, because we can shift it by adding a term like $\sum_i c_i u_i$ for some choice of $c_i \in \ZZ$, and $u_i \in \ZZ^N$ span $\text{Ker}(A)$.
This freedom can be exploited to simplify the expression of the operators $P\brk{q}$ appearing in \eqref{P(q)}, and 
for practical purposes, we find it convenient 
to choose $r_i>0$ for all $i$, in order to avoid the appearance of $U_i$ in \eqref{P(q)}, because it may contain monomials in $\partial_i$ of large degree.

\subsection{Homogeneity and integrand rescaling}
\label{ssec:rescaling}
Under rescaling of the $z_i$ variables,
Euler integrals scale as
\eq{
    \label{f_Gamma_homogeneity}
    f_\Gamma(t^{a_1} z_1, \ldots, t^{a_N} z_N) =
    t_0^{\beta_0} t_1^{\beta_1} \cdots t_n^{\beta_n} f_\Gamma(z)
    \ ,
}
where we used the multivariate exponent notation from \eqref{eq:multivar-exp} on the LHS.
Indeed, \eqref{f_Gamma_homogeneity} is equivalent to homogeneity equations
\begin{equation}
E_j f_\Gamma = 0 \ , \quad
j = 1, \dots, n + 1  \ .
\end{equation}
This property can be exploited to work on a simpler set of integrals obtained
from the original definition by {\it freezing} $(n+1)$ out of $N$ variables $z_i$ to e.g. $1$. 
The residual functional dependence is then given by $N - \brk{n + 1}$ ratios of
$z_i$-variables. More details can be found in \Appref{sec:rescaling-details}.

\begin{example}\rm
    Consider the following Euler integral:
    \begin{align}
        \eulerInt\brk{z} = \int_\Gamma \brk{z_1 + z_2 x_1 + z_3 x_2 + z_4 x_1 x_2}^{\beta_0} \, x_1^{-\beta_1} x_2^{-\beta_2} \, \frac{\dd x}{x} \ ,
        \label{eq:euler-int-example}
    \end{align}
    which, according to eq. \eqref{f_Gamma(z)}, corresponds to a $3 \times 4$
    matrix:
    \begin{align}
        A = \lrbrk{
            \begin{array}{cccc}
                1 & 1 & 1 & 1 \\
                0 & 1 & 0 & 1 \\
                0 & 0 & 1 & 1
            \end{array}
        } \ .
        \label{eq:rescale-example-A}
    \end{align}
    The homogeneity property from \eqref{f_Gamma_homogeneity} involves
    three rescaling parameters $\brc{t_0, t_1, t_2}$:
    \begin{align}
        \eulerInt\brk{t_0 \, z_1, t_0 t_1 \, z_2, t_0 t_2 \, z_3, t_0 t_1 t_2 \, z_4}
        = t_0^{\beta_0} t_1^{\beta_1} t_2^{\beta_2} \> \eulerInt\brk{z_1, z_2, z_3, z_4}
        \ .
    \end{align}
    To derive this relation we need to rescale integration variables as
    $x_1 \mapsto x_1 / t_1$ and $x_2 \mapsto x_2 / t_2$.
    The homogeneity property allows us to factorize the dependence on 
    $z_1$, $z_2$, and $z_3$ by the following choice of the 
    rescaling parameters $t_i$:
    \begin{align}
        t_0 = 1 / z_1 \ , \quad
        t_1 = z_1 / z_2 \ , \quad
        t_2 = z_1 / z_3 \ .
    \end{align}
    The original Euler integral from \eqref{eq:euler-int-example} then
    becomes
    \begin{align}
        \eulerInt\brk{z_1, z_2, z_3, z_4}
        = z_1^{-\beta_0 + \beta_1 + \beta_2} z_2^{-\beta_1} z_3^{-\beta_2} \>
        \eulerInt\brk{1, 1, 1, w} \ , \quad w = \tfrac{z_1 \, z_4}{z_2 \, z_3}
        \ ,
        \label{eq:rescale-example}
    \end{align}
    and so we have effectively rescaled away $\brk{n + 1}$ of the $z_i$
    variables.

    Finally, we show the differential equation obeyed by the rescaled Euler
    integral $\eulerInt\brk{1, 1, 1, z}$. The generator \eqref{Box_u} of the
    original GKZ system with the $A$-matrix shown in
    \eqref{eq:rescale-example-A} reads as
    \begin{align}  
        \Box_u \eulerInt =
        \big(\pd{1} \pd{4} - \pd{2} \pd{3} \big) \>
        \eulerInt\brk{z_1, z_2, z_3, z_4} = 0
        \ .
        \label{eq:rescale-box}
    \end{align}
    Upon substitution of the representation \eqref{eq:rescale-example}
    on the RHS, we can determine what the two terms in $\Box_u$ map to when changing variables to $w$:
    \begin{align}
        \pd{1} \pd{4} &\longrightarrow -w^{-1}
        \brk{\beta_0 - \beta_1 - \beta_2 - w \, \pd{w}} \brk{w \, \pd{w}}
        \ ,
        \\
        \pd{2} \pd{3} &\longrightarrow
        \brk{\beta_1 + w \, \pd{w}} \brk{\beta_2 + w \, \pd{w}}
        \ .
    \end{align}
    These expressions can also be derived from \prpref{prp:simplex}. Thus the
    original equation \eqref{eq:rescale-box} becomes
    \begin{align}
        \Bigsbrk{
            w \brk{1 - w} \, \pd{w}^2
            + \bigbrk{1 - \beta_0 + \beta_1 + \beta_2 - \brk{1 + \beta_1 + \beta_2}w} \, \pd{w}
            - \beta_1 \beta_2
        } \> \eulerInt\brk{1, 1, 1, w} = 0 \, ,
    \end{align}
    which we recognize as the differential equation for the Gau\ss{}
    hypergeometric function.
\end{example}

This concludes our short overview of the basic properties of GKZ systems and
Euler integrals. In the following section, we describe how to obtain a Pfaffian system of first-order PDEs given a basis of Euler integrals.

\section{Pfaffian systems}
\label{sec:pfaffian-systems}
In this section, we introduce notation related to Pfaffian systems to be used in the rest of this work.
In \secref{ssec:D-module to DEQ}, we recall the passage from a holonomic $\mathcal{D}_N$-module to a Pfaffian system, while in \secref{ssec:basis-change} and \secref{ssec: basis change without derivatives} we present two methods for basis change.

For any positive integer $N$, we write $\Field\brk{z} = \Field\brk{z_1, \dots, z_N}$
for the field of rational functions in variables $z_1,\dots,z_N$.
Let $\mathcal{R}:=\mathbb{C}(z_1,\dots,z_N)\otimes_{\mathbb{C}[z_1,\dots,z_N]}\mathcal{D}_N$ be the rational Weyl algebra%
\footnote{
    The subscript of the tensor product $\otimes$ denotes the ring over
    which the product is defined. For example, it implies the identity:
    $r \otimes p \, Q = r \, p \otimes Q$, for
    $r \in \Field\brk{z}$, $p \in \Field\sbrk{z}$, and $Q \in \D_N$
    .
};
note that $\Ring$ also has the structure of a non-commutative ring.
Any element $Q \in \Ring$ can be uniquely written in the so-called
normally ordered form
with all the partial derivatives commuted to the right:
\begin{equation}
    Q = \sum_{k} \, q_k \cdot \partial^k \, , \quad
    q_k \in \Field\brk{z} \ .
    \label{eq:normal-form}
\end{equation}

As we review in \appref{sec:D-modules-appendix}, given a Gr\"obner basis $G$ of a left ideal $\Ideal$ in $\Ring$, we denote by%
\footnote{When considering $\Std$ as a basis, we instead write $\estd$.}
$\Std$ the standard monomials w.r.t $G$ and a term order $\prec$.
For our purposes it is enough to consider a zero-dimensional left ideal $\Ideal$ generated by a finitely many elements:
\begin{align}
    \Ideal \defas \vev{\gen_1,\gen_2,\dots,\gen_\dimIdeal} \ , \quad \gen_i \in \Ring
    \ .
    \label{eq:ideal-def}
\end{align}
The ideal $\Ideal$ is zero-dimensional iff the corresponding set of standard monomials
$\Std$ is finite \brk{see \cite[\S 6]{dojo}}.

\subsection{From \texorpdfstring{$\mathcal{D}_N$}{}-module to DEQ system}\label{ssec:D-module to DEQ}

We consider a left $\mathcal{R}$-module $M$
which is finite dimensional over $\mathbb{C}(z)$.  Such a left
$\mathcal{R}$-module arises as an extension
$\mathcal{R}\otimes_{\mathcal{D}_N}M^\prime$ for some other holonomic
$\mathcal{D}_N$-module $M^\prime$.
When $\{u_1,\dots,u_r\}\subset M$ is a
$\mathbb{C}(z)$-basis, one can find an $r\times r$ matrix $P_j(z)$ with entries
in $\mathbb{C}(z)$ such that
\begin{equation}\label{eqn:Pfaffian}
    \partial_j (u_1,\dots,u_r)^T = P_j(z) \cdot (u_1,\dots,u_r)^T \, .
\end{equation}
We call $P_j(z)$ the Pfaffian matrix in direction $z_j$.
Let us now consider the GKZ $\mathcal{D}_N$-module.
Namely, we consider a left $\mathcal{R}$-module $M=\mathcal{R}\otimes_{\mathcal{D}_N} (\D_N / H_A(\beta))$, where $H_A(\beta)$ is as in \eqref{HA(beta)}.
In view of \prpref{prop:D-mod_deRham_iso}, each $u_i$ corresponds to a
cohomology class $[\omega_i]\in\mathbb{H}^n$.
Integrating \eqref{eqn:Pfaffian} over a cycle $\Gamma$, we obtain an identity
\begin{equation}
    \partial_j
    \big( \langle \omega_1\rangle_\Gamma,\dots,\langle \omega_r\rangle_\Gamma \big)^T
    = P_j(z) \cdot
    \big( \langle \omega_1\rangle_\Gamma,\dots,\langle \omega_r\rangle_\Gamma \big)^T,
\end{equation}
which is precisely the system of differential equations that the master Euler
integrals are subject to.
A standard method of computing the Pfaffian matrix $P_j(z)$ is to use a
Gr\"obner basis of $\mathcal{R}$ (\cite[Chapter 6]{dojo}).
Unfortunately, this approach is far from effective.  We develop a different
approach based on the Macaulay matrix in \secref{sec:pfaffian-from-macaulay}.

Our construction of Pfaffian matrices by the Macaulay matrix method, to be given in \secref{sec:pfaffian-from-macaulay}, will assume that a set of standard monomials $S$ for GKZ ideal $\mathcal{R}H_A(\beta)$ is given.
Note that the set $S$ gives rise to a $\CC(z)$-basis of $M$.
Although one usually computes a Gr\"obner basis of the ideal $\mathcal{R}H_A(\beta)$ to find the set $S$, we claim that the set $S$ can be found by computing a Gr\"obner basis of an ideal in a \underline{commutative} subring of $\mathcal{R}$.
This reduces the cost of computation in a significant way.

For this purpose, we fix a term order $\prec$ on the commutative ring $\CC[\pd{1}, \ldots, \pd{N}]$.
It naturally induces a term order $\prec$ on the ring $\mathcal{R}$ because a term order on the ring $\mathcal{R}$ is a total order on the set of monomials $\pd{}^k$ (cf. \cite[\S6.1]{dojo}).
We denote by $\mathrm{in}_\prec(\Ideal_\Box)$
the ideal of $\mathbb{C}[\pd{1},\ldots,\pd{N}]$ generated by initial terms of $\Ideal_\Box = \langle \Box_u \rangle$ by the order $\prec$, where the box operators are given in \eqref{Box_u}.
Any generator of $\mathrm{in}_\prec(\Ideal_\Box)$ can be written as
$\pd{}^k=\prod_{i=1}^N \pd{i}^{k_i}$.
Let us write $\theta_i$ for the Euler operator $z_i\pd{i}$ and consider a subring of $\mathcal{R}$ generated by $\theta_1,\dots,\theta_N$ over $\CC$.
We will denote this ring by $\CC[\theta]$, which is clearly a commutative ring.
By the correspondence $\theta^k\leftrightarrow\partial^k$, we equip a term order $\prec$ on the ring $\CC[\theta]$ induced from $\prec$.
We then define the {\it distraction} of the element $\pd{}^k$ (\cite[p.68]{SST}) by
\begin{equation}
  \prod_{i=1}^N \theta_{i} (\theta_{i}-1) \cdots (\theta_{i}-k_i+1)
    \ .
\end{equation}
We consider the ideal $\Ideal'$ of the ring $\CC[\theta]$ generated by the distractions of the generators of $\mathrm{in}_\prec(\Ideal_\Box)$ and the $E_1,\dots,E_{n+1}$ from \eqref{E_j}.
The following theorem is essential whose proof is based on a technique called {\it Gr\"obner deformation} as we will see in \appref{sec:D-modules-appendix}.

\begin{theorem}\label{thm:3.1}
The set of standard monomials of $\Ideal'$ coincides with that of GKZ ideal $\mathcal{R}H_A(\beta)$ when $\beta$ is generic.
More concretely, if $\{ \theta^{a_1},\dots,\theta^{a_r}\}$ is a set of standard monomials of $\Ideal'$, the set $\{ \pd{}^{a_1},\dots,\pd{}^{a_r}\}$ is a set of standard monomials of $\mathcal{R}H_A(\beta)$.
\end{theorem}
The ideal $\Ideal_\Box$ is known as {\it toric ideal}.
Computation of a Gr\"obner basis of such an ideal is effective as it is generated by binomials.
Thus, the computation of a set of generators of the ideal $\Ideal'$ is efficient.
Running Buchberger's algorithm for the commutative ideal $\Ideal'$, we obtain a set of standard monomials of GKZ ideal $\mathcal{R}H_A(\beta)$ by \thmref{thm:3.1}.
We emphasize that the process explained above does not involve any computation in the non-commutative ring $\mathcal{R}$.
Finally, we remark that the process above is implemented as a function \texttt{cbase\_by\_euler} in the \texttt{Risa/Asir} package \texttt{mt\_gkz.rr}.

\begin{example}
    Let us consider a matrix
$$A=\left(\begin{array}{cccc}
 1&1&1&1 \\
 0&1&3&4 \\
\end{array}\right).
$$
Let $\prec$ be the lexicographic order such that $\pd{1} \succ \cdots \succ \pd{4}$.
The ideal ${\rm in}_\prec(\Ideal_\Box)$ is generated by monomials
$$ \pd{2}\pd{4}^2, \pd{1}\pd{4}, \pd{1}\pd{3}^2, \pd{1}^2\pd{3}. $$
The distractions of them are
$$ \theta_{2}\theta_{4} (\theta_{4}-1), \theta_{1}\theta_{4}, \theta_{1}\theta_{3}(\theta_{3}-1), 
\theta_{1}(\theta_{1}-1)\theta_{3}. $$
We compute the Gr\"obner basis of an ideal generated by these distractions and
$ \theta_{1} + \theta_{2} + \theta_{3} + \theta_{4} - b_1$ and $
\theta_{2} + 3 \theta_{3} + 4\theta_{4} - b_2
$ with respect to $\succ$.
The set of the standard monomials is
$\{ \theta_{3}, \theta_{4}^2, \theta_{4}, 1 \}$.
Thus, the set of the standard monomials of GKZ ideal $\mathcal{R}H_A(\beta)$ is given by 
$$\{ \pd{3}, \pd{4}^2, \pd{4}, 1 \}.$$
\end{example}

\if
Our construction of Pfaffian matrices by the Macaulay matrix method, to be given in \secref{sec:pfaffian-from-macaulay}, will assume that a set of standard monomials is given.
We briefly summarize the result of \cite{Hibi-Nishiyama-Takayama-2017},
which gives an algorithm to effectively obtain standard monomials of the GKZ ideal $\Ideal = \mathcal{R} H_A(\beta)$.
We fix a term order $\prec$ in the commutative ring $\CC[\pd{1}, \ldots, \pd{N}]$.
We denote by $\mathrm{in}_\prec(\Ideal_\Box)$
the ideal of $\mathbb{C}[\pd{1},\ldots,\pd{N}]$ generated by initial terms of $\Ideal_\Box = \langle \Box_u \rangle$ by the order $\prec$, where the box operators are given in \eqref{Box_u}.
Any generator of $\mathrm{in}_\prec(\Ideal_\Box)$ can be written as
$\prod_{i=1}^N \pd{i}^{k_i}$.
We then define the {\it distraction} of the generator \cite[p.68]{SST} by
\begin{equation}
  \prod_{i=1}^N \pd{i} (\pd{i}-1) \cdots (\pd{i}-k_i+1)
    \ .
\end{equation}
We consider the ideal $\Ideal'$ generated by the distractions of the generators
of $\mathrm{in}_\prec(\Ideal_\Box)$
and the $E_j|_{z=(1, \ldots, 1)}$ from \eqref{E_j}.
It is shown in \cite{Hibi-Nishiyama-Takayama-2017} that
the set of the standard monomials of $\Ideal'$ gives
the set of standard monomials for $\Ideal=\mathcal{R} H_A(\beta)$.
This algorithm only involves
computation in commutative rings, which makes it more efficient than the calculation of standard monomials via Buchberger's algorithm.
\fi

\subsection{Basis change}\label{Basis_change}
\label{ssec:basis-change}
Assume that we know Pfaffians $\Pstd_i$ in a basis of standard monomials $\estd$.
In this section we show how to gauge transform the system into an arbitrary
basis $e$.

Consider an ideal $\mathcal{I}$ over a rational Weyl algebra $\Ring$.
Let a basis
$e = (e_1, \ldots, e_r)^T \subset \Ring/\mathcal{I}$
be given. In the following, the symbol $\circ$ denotes multiplication inside
of $\Ring$, i.e. we write $\partial_i \circ h$ when $h \in \Ring$. We
introduce $\circ$ to distinguish composition in the ring from
the action $\partial_i H$ when, say, $H$ is a matrix with entries in the
field $\mathbb{C}(z)$.

We seek the Pfaffian matrices $P_i \in \mathbb{C}(z)^{r \times r}$ satisfying
the differential equation for the basis $e$:
\eq{
     \partial_i \circ e = P_i \cdot e \, .
}
Assume that we know $\Pstd_i \in \mathbb{C}(z)^{r \times r}$ in
\eq{
    \label{Ps}
     \partial_i \circ \estd = \Pstd_i \cdot \estd
     \ ,
}
where $\estd = \{\partial^\alpha\} \subset \Ring/\mathcal{I}$ are the standard
monomials. If we could find a matrix $G \in \mathbb{C}(z)^{r \times r}$
relating the two bases,
\eq{
    e = G \cdot \estd
    \ ,
}
then we could relate the Pfaffian matrices in different bases via a gauge
transformation:
\eq{
    \label{Ps_to_Pe}
    P_i = \left( \partial_i G + G \cdot \Pstd_i \right) \cdot G^{-1}
    \ .
}
Note that $G$ is invertible by assumption, since both $\estd$ and $e$ are bases.

By \eqref{Ps_to_Pe}, our problem is reduced to finding the
transformation matrix $G$. To this end, we begin by expanding $e$ in
terms of $\estd$:
\eq{
    e = \sum_{k \in K} G^{(1)}_k \cdot \left( \partial^k \circ \estd \right) \ , \
    K \subset \mathbb{N}_0^N
    \ ,
}
for some matrix $G^{(1)}_k \in \mathbb{C}(z)^{r \times r}$.
Here we relied on the fact that any monomial of $e$ is divisible by an entry of $\estd$.
Next we transform each $\partial^k \circ \estd$ into the form $G^{(2)}_k
\cdot \estd$ via repeated application of \eqref{Ps}.
Let us illustrate the last step with an example.
\begin{example}
    \rm
    Given $k = \{1, 1, 0, \ldots, 0\}$, we can rewrite $\partial^k \circ \estd$ as
    \eq{
        \partial_1 \circ \partial_2 \circ \estd
        &=
        \partial_1 \circ \left( \Pstd_2 \cdot \estd \right) \\
        &=
        \partial_1 \Pstd_2 \cdot \estd + \Pstd_2 \cdot \left( \partial_1 \circ \estd \right) \\
        &=
        \partial_1 \Pstd_2 \cdot \estd + \Pstd_2 \cdot P_1^{(s)} \cdot \estd
        \ .
    }
    We hence have that $G^{(2)}_k = \partial_1 \Pstd_2 + \Pstd_2 \cdot \Pstd_1$.
\end{example}

We conclude that the gauge transformation matrix $G$ decomposes as follows:
\eq{
    \label{gauge_mat}
    G = \sum_{k \in K} G^{(1)}_k \cdot G^{(2)}_k \, ,
}
where each $G^{(2)}_k$ is built from the matrix products between Pfaffians
and derivatives thereof. Thus, equations \eqref{Ps_to_Pe, gauge_mat} constitute one
way to translate the Pfaffian matrices between different bases.

\subsection{Basis change without derivatives}\label{ssec: basis change without derivatives}
The matrix $G^{(2)}_k$ appearing in the gauge transformation matrix
\eqref{gauge_mat} involves derivatives of Pfaffian matrices. If we were to
work over a finite field, there would be no $z_i$ variables with respect to
which we could take the derivative. This situation is amended, in the GKZ case, by the following
proposition.
\begin{proposition}
The derivative of a Pfaffian matrix can be computed solely via addition and matrix multiplication from
\eq{
    (\partial_i P_j)(\beta) =
    \big( P_j(\beta-a_i) - P_j(\beta) \big) P_i(\beta)
    \ .
}
\end{proposition}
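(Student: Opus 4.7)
The plan is to exploit the fundamental GKZ shift identity that relates the derivative of the Euler integral $f_\Gamma(z)$ with respect to $z_i$ to the integral at a shifted parameter $\beta - a_i$. Since $\partial_i g(z;x) = x^{\alpha_i}$ and $a_i = (1, \alpha_i)$, differentiating under the integral sign in the definition of $f_\Gamma(z)$ gives
\begin{equation*}
\partial_i f_\Gamma(z, \beta)
= \beta_0 \int_\Gamma g^{\beta_0 - 1} \, x^{\alpha_i} \, x_1^{-\beta_1} \cdots x_n^{-\beta_n} \, \frac{\dd x}{x}
= \beta_0 \, f_\Gamma(z, \beta - a_i).
\end{equation*}
Taking the basis $F(z, \beta) = e \cdot f_\Gamma(z, \beta)$ so that each entry of $e$ commutes with $\partial_i$, as is the case for the $\partial$-monomial standard monomials naturally arising in the Macaulay matrix construction, this scalar identity lifts to the vector identity $\partial_i F(z, \beta) = \beta_0 F(z, \beta - a_i)$. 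Combining with the Pfaffian equation $\partial_i F(z, \beta) = P_i(\beta) F(z, \beta)$ yields the key relation $\beta_0 F(z, \beta - a_i) = P_i(\beta) F(z, \beta)$.

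Next, I would compute the mixed partial derivative $\partial_i \partial_j F(z, \beta)$ in two distinct ways. Applying $\partial_i$ to $\partial_j F(z, \beta) = P_j(\beta) F(z, \beta)$ gives
\begin{equation*}
\partial_i \partial_j F(z, \beta) = \bigl[ \partial_i P_j(\beta) + P_j(\beta) P_i(\beta) \bigr] F(z, \beta).
\end{equation*}
On the other hand, using commutativity $\partial_i \partial_j = \partial_j \partial_i$ together with the shift identity and then the Pfaffian system at the shifted parameter,
\begin{equation*}
\partial_j \partial_i F(z, \beta)
= \beta_0 \, \partial_j F(z, \beta - a_i)
= \beta_0 P_j(\beta - a_i) F(z, \beta - a_i)
= P_j(\beta - a_i) P_i(\beta) F(z, \beta),
\end{equation*}
where the last step reabsorbs $\beta_0 F(z, \beta - a_i)$ using the key relation obtained in the previous paragraph.

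Equating the two expressions and cancelling the basis vector $F(z, \beta)$ produces exactly the claimed identity
\begin{equation*}
\partial_i P_j(\beta) = \bigl( P_j(\beta - a_i) - P_j(\beta) \bigr) P_i(\beta).
\end{equation*}
The main obstacle I anticipate lies not in the algebraic manipulations but in justifying the shift identity $\partial_i F(\beta) = \beta_0 F(\beta - a_i)$ for a completely general basis: when the basis operators $e_k$ contain $z$ variables the commutator $[\partial_i, e_k]$ is nonzero and one must re-express it modulo $H_A(\beta)$ using the Euler-type operators $E_j$, which may deform the simple proportionality. For the standard $\partial$-monomial bases relevant to the Macaulay matrix framework of this paper, however, this complication is absent and the argument goes through verbatim.
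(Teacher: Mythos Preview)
Your argument is correct and follows essentially the same route as the paper: compute $\partial_i\partial_j F$ in two ways, once via the Pfaffian system and once via the GKZ shift identity, then compare. The only cosmetic difference is that the paper absorbs your $\beta_0$ prefactor by normalizing $F(\beta) \defas \frac{1}{\Gamma(\beta_0+1)}\bigl(s_1 f_\Gamma,\ldots,s_r f_\Gamma\bigr)$, so that the shift identity reads simply $\partial_i F(\beta)=F(\beta-a_i)$ rather than $\partial_i F(\beta)=\beta_0\,F(\beta-a_i)$; either way the factor cancels in the final comparison, and the paper likewise assumes a $\partial$-monomial basis so that $s_k\partial_i=\partial_i s_k$.
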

\begin{proof}
    Let
    \eq{
        F(\beta) \defas
        \frac{1}{\Gamma(\beta_0+1)} \big(s_1 f_\Gamma(\beta), \ldots, s_r f_\Gamma(\beta) \big)
    }
    be a solution to a GKZ Pfaffian system given some monomial basis
    $s_i = \partial^{k_i}$, $k_i \in \NN_0^N$, for which $s_i \partial_j = \partial_j s_i$.
    The following relations hold true:
    \eq{
        \label{P_F}
        \partial_j F(\beta) &= P_j(\beta) F(\beta)
        \ , \\
        \partial_j F(\beta) &= F(\beta-a_j)
        \ .
        \label{F(b-a)}
    }
    Differentiating \eqref{P_F} w.r.t $z_i$, we get
    \eq{
        \partial_i \partial_j F &=
        \big( \partial_i P_j \big) F + P_j \big( \partial_i F \big) \\&=
        \big( \partial_i P_j \big) F + P_j P_i F
        \ ,
        \label{di_dj_F_1}
    }
    where we omitted the argument $\beta$ for clarity.
    On the other hand, differentiating \eqref{F(b-a)} we get
    \eq{
        \partial_i \partial_j F(\beta) &=
        \partial_i F(\beta - a_j) \\&=
        F(\beta - a_j - a_i) \\&=
        P_j(\beta - a_i) P_i(\beta) F(\beta)
        \ ,
        \label{di_dj_F_2}
    }
    where we applied the identity $P_k(\beta) F(\beta) = F(\beta-a_k)$ twice in
    the last step.
    The proposition follows upon equating
    \eqref{di_dj_F_1} and \eqref{di_dj_F_2} and isolating $\big( \partial_i P_j
    \big) (\beta)$.
\end{proof}

Pfaffian systems introduced above are systems of {\it linear} partial differential
equations (SPDE), satisfied by the solutions of a given GKZ system.
As we will see later on, these equations are extremely useful in physical
applications. Next we present an efficient way to calculate the Pfaffian systems,
essentially via linear algebra.

\section{Constructing Pfaffian systems from Macaulay matrices}
\label{sec:pfaffian-from-macaulay}
In this section we describe a method for building the Pfaffian systems defined
in eq. \eqref{eqn:Pfaffian}. The method amounts to first building an auxiliary
matrix $M$ called the Macaulay matrix, and then solving a special system
of linear equations.
In \secref{ssec:pfaffian-macaulay} we derive the Macaulay matrix
\eqref{eq:macaulay-matrix} and the linear system \eqref{eq:C1sys, eq:C2sys} that
it satisfies. In \secref{ssec:macaulay-pfaffian} we then present
\Algref{alg:macaulay-matrix} for calculation of Pfaffian systems. In
\secref{ssec:macaulay-discussion} we give several remarks about the algorithm
and its efficiency. We close this section with several examples in
\secref{ssec:examples}, showcasing the steps and runtime statistics of the
algorithm in practice.

\subsection{From Pfaffian to Macaulay matrix}
\label{ssec:pfaffian-macaulay}
We present how the Macaulay matrix arises from a Pfaffian system in the basis of standard monomials.
Since we will focus our discussion on the case of GKZ systems later on, based on the comments at the end of \secref{ssec:D-module to DEQ} we may safely assume that
\begin{center}
    a set of standard monomials 
    $
    \Std \defas \brc{ \partial^{k} } 
    $ is given,
\end{center}
and that its size equals the holonomic rank $|{\rm Std}| = r$, defined in \eqref{holonomic_rank}.
We remind that $\partial^k$ denotes a monomial in derivatives, while $\partial_i$ denotes a single derivative w.r.t.\ $z_i$.

Now, the Pfaffian matrix $P_i$ in the direction $z_i$, defined in \eqref{eqn:Pfaffian},
is specified by the following relation:
\begin{equation}
    \partial_i \circ \Std_\alpha = \sum_{\beta} \, \brk{P_i}_{\alpha \beta} \cdot \Std_\beta \quad \text{in $\Ring/\Ideal$}
    \ ,
    \label{eq:pfaff-eq-quotient}
\end{equation}
where $\circ$ denotes composition in a
ring $\Ring$, $\alpha \in \brc{1, \ldots, \abs{\Std}}$ and $\Ideal$ is a general holonomic ideal for now.
Using the expression \eqref{eq:ideal-def} for $\Ideal$, we carry over the relation \eqref{eq:pfaff-eq-quotient}
from the quotient $\Ring/\Ideal$ to the whole ring $\Ring$ as
\begin{equation}
    \partial_i \circ \Std_\alpha = \sum_{j} \, Q_{\alpha j} \circ \gen_j
    + \sum_{\beta} \, \brk{P_i}_{\alpha \beta} \cdot \Std_\beta \quad \text{in $\Ring$}
    \ ,
    \label{eq:pfaff-eq-ring}
\end{equation}
where each $Q_{\alpha j}$ is an element of $\Ring$ and $j \in \brc{1, \ldots, \dimIdeal}$ labels the ideal
generators $\gen_j$ from \eqref{eq:ideal-def}. Now we bring the
operators $Q_{\alpha j}$ into normal ordered form as in \eqref{eq:normal-form}:
\begin{align}
    Q_{\alpha j} = \sum_{k} \, q_{\alpha j k} \cdot \partial^k, \quad \partial^k \in \Der
    \ ,
    \label{eq:Q-normal-form}
\end{align}
where $q_{\alpha j k } \in \Field\brk{z}^{\abs{\Std} \times \dimIdeal \times
\abs{\Der}}$ is a newly introduced rank 3 tensor%
\footnote{
    The notation $a_{n m} \in \Field\brk{z}^{N \times M}$ means that the
    indices range over the sets: $n \in \brc{1, \ldots, N}$ and $m \in
    \brc{1, \ldots, M}$.
}, and
we denote by $\Der$ the set of all partial derivatives $\partial^k$ appearing on the RHS.

Now we are ready to introduce the main player of this section: the Macaulay matrix.
It arises due to the normal ordered form of the first sum in \eqref{eq:pfaff-eq-ring}.
Substitution of the $Q_{\alpha j}$ operators from \eqref{eq:Q-normal-form} back into
\eqref{eq:pfaff-eq-ring} results in the action of $\partial^k \in \Der$ on the ideal generators $h_j$,
which we bring to normal ordered form as follows:
\begin{align}
    \partial^k \circ \gen_j = \sum_{L} \, M_{k j L} \cdot \partial^L, \quad \partial^L \in \Mons
    \ .
    \label{eq:macaulay-tensor}
\end{align}
Here we introduced a rank 3 tensor $M_{k j L} \in \Field\brk{z}^{\abs{\Der} \times \dimIdeal \times \abs{\Mons}}$
and a set $\Mons$, which collects the partial derivatives $\partial^L$ emerging on the RHS.
Upon concatenation%
\footnote{
    In \mathematica $,$ this operation can be done with \code{Flatten[\#, \{\{1, 2\}, \{3\}\}]\&}.
} of the first two indices $k$ and $j$, the rank 3 tensor $M_{k j L}$ turns
into the Macaulay matrix
\begin{align}
    M_{R L} \defas M_{\brk{k j} L} \in \Field\brk{z}^{\brk{\abs{\Der} \cdot \dimIdeal} \times \abs{\Mons}}
    \ ,
    \label{eq:macaulay-matrix}
\end{align}
where the combined index $R \defas \brk{k j}$ runs over a product of sets $R \in \Der \times \brc{1, \ldots, \dimIdeal}$.
Plugging equations \eqref{eq:Q-normal-form, eq:macaulay-tensor, eq:macaulay-matrix} back into the main relation \eqref{eq:pfaff-eq-ring}
for the Pfaffian, we observe that the first term can be written as a product of matrices:
\begin{align}
    \sum_{j} \, Q_{\alpha j} \circ \gen_j = \sum_{R L} \, q_{\alpha R} \, M_{R L} \cdot \partial^L
    \equiv \brk{C \cdot M \cdot \Mons}_\alpha
    \ ,
    \label{eq:macaulay-gen}
\end{align}
where $C \in \mathbb{C}(z)^{|\Std| \times (|\Der| \cdot \dimIdeal)}$ stands for the coefficient matrix $C_{\alpha R} \defas q_{\alpha \brk{j k}}$ of \eqref{eq:Q-normal-form}, and we regard the set $\Mons$ as a column vector.

The set $\Mons$, containing all monomials in $\partial_i$, can be partitioned into two disjoint sets consisting of standard and, what we call, exterior monomials:
\begin{align}
    \Mons = \Ext \sqcup \Std
    \ , \quad
    \Ext \defas \Mons \setminus \Std
    \ .
\end{align}
This partition naturally induces a column partition of the
Macaulay matrix:
\begin{align}
    M = \brk{\Mext | \Mstd}
    \ ,
    \label{eq:macaulay-split}
\end{align}
with the two column blocks $\Mext \in \Field\brk{z}^{\brk{\abs{\Der} \cdot \dimIdeal} \times \abs{\Ext}}$ and
$\Mstd \in \Field\brk{z}^{\brk{\abs{\Der} \cdot \dimIdeal} \times \abs{\Std}}$. Similarly, the LHS of \eqref{eq:pfaff-eq-ring}
can be decomposed as
\footnote{
    Note that $C^\prime, C_\Ext^\prime$ and $C_\Std^\prime$ depend on the direction $z_i$, since these matrices are born from the expression $\partial_i \circ \Std$.
}
\begin{align}
    \partial_i \circ \Std & \safed
    C^\prime \cdot \Mons \\& \safed
    C_\Ext^\prime \cdot \Ext + C_\Std^\prime \cdot \Std
    \ ,
    \label{eq:partial-std-split}
\end{align}
where we introduced the coefficient matrix $C^\prime \in \Field\brk{z}^{\abs{\Std} \times \abs{\Mons}}$
in addition to its column blocks
$C_\Ext^\prime \in \Field\brk{z}^{\abs{\Std} \times \abs{\Ext}}$ and
$C_\Std^\prime \in \Field\brk{z}^{\abs{\Std} \times \abs{\Std}}$, such that
$C^\prime = \brk{C_\Ext^\prime | C_\Std^\prime}$ in accordance with
\eqref{eq:macaulay-split}.

Finally substituting equations \eqref{eq:macaulay-gen, eq:macaulay-split, eq:partial-std-split} into the main relation \eqref{eq:pfaff-eq-ring}, we arrive at the following matrix relation:
\begin{align}
    \brk{C_\Ext^\prime - C\cdot \Mext} \cdot \Ext
    + \brk{C_\Std^\prime - C\cdot \Mstd} \cdot \Std
    = P_i \cdot \Std
    \ .
    \label{eq:pfaff-eq-matrix}
\end{align}
 Due to the linear independence of $\Ext$ and $\Std$, this is equivalent to a system
\begin{empheq}[box=\fbox]{align}
    C_\Ext^\prime - C\cdot \Mext &= 0
    \ ,
    \label{eq:C1sys}
    \\
    C_\Std^\prime - C\cdot \Mstd &= P_i
    \ .
    \label{eq:C2sys}
\end{empheq}
This system relates the Macaulay matrix $M$ to the Pfaffian $P_i$ in the direction $z_i$, and is central to our computational method presented next.

\subsection{From Macaulay matrix to Pfaffian}
\label{ssec:macaulay-pfaffian}
Now we reverse the logic in the derivation above: although we started from \eqref{eq:pfaff-eq-ring}, we would like to regard it as the goal.
We view \eqref{eq:pfaff-eq-ring} and its matrix version \eqref{eq:pfaff-eq-matrix} as a system of linear equations
for the unknown matrix of coefficients $C$. Given the Macaulay matrix $M$, the idea is to solve first the system
\eqref{eq:C1sys}, and then substitute the solution into \eqref{eq:C2sys} in order to obtain the desired Pfaffian matrix $P_i$.

Let us now turn this idea into a practical algorithm.
Say we would like to compute $P_i$ in direction $z_i$ given a zero-dimensional ideal $\Ideal$.
First, given a $\Degree \in \Integers_{\ge 0}$ we construct the set of all derivatives whose degree is bounded by $\Degree$:
\begin{align}
    \Der_\Degree \defas \brc{\partial^k}_{\abs{k} \leq \Degree}
    \ .
    \label{eq:der-degree-def}
\end{align}
Next, similarly to \eqref{eq:macaulay-tensor}, we write $\Mons_\Degree$ for the
set of monomials in derivatives appearing in the normal ordered form of
$\brc{\partial^k \circ \gen_j}_{\partial^k \in \Der_\Degree, \, j = 1, \dots, \dimIdeal}$.
We have
\eq{
    \brc{\partial^k \circ \gen_j}_{\partial^k \in \Der_\Degree, \, j = 1, \dots, \dimIdeal}
    &= \brk{M_\Degree}_{\brk{k j} L} \cdot \partial^L \\
    &\equiv M_\Degree \cdot \Mons_\Degree
    \ ,
    \label{eq:macaulay-degree}
}
where we call $M_\Degree$ the Macaulay matrix of degree $\Degree$ \brk{compare to \eqref{eq:macaulay-matrix}}.
We adjust the maximal degree $\Degree$ and the sets $\Der_\Degree$ and $\Mons_\Degree$ until the linear system \eqref{eq:C1sys}
can be resolved w.r.t.\,the unknown matrix $C$.
Finally, the desired Pfaffian follows from using $C$ in \eqref{eq:C2sys}.
We summarize this procedure in \Algref{alg:macaulay-matrix}.

\begin{algorithm}
    \begin{tabbing}
        \underline{Input}: Standard monomials $\Std$, generators $\gen_1,\gen_2,\dots$ of a zero-dimensional ideal $\Ideal \subset \Ring$,  direction $i$.
        \\[3pt]
        \underline{Output}: Pfaffian matrix $P_i$.
    \end{tabbing}
    \begin{algorithmic}[1]
        \State $\Degree:=0$;
        \State $M:=M_0=(\Mext|\Mstd)$;
        \State ${\rm Mons}:={\rm Mons}_0$;
        \While{
            ($\partial_i \circ \Std$ cannot be written as $C^\prime\cdot{\rm Mons}$)
            or ($C_\Ext^\prime=C\cdot \Mext$ is not solvable w.r.t. $C$)
        }
            \State $\Degree++;$
            \State $M\leftarrow M_\Degree;$
            \State ${\rm Mons}\leftarrow{\rm Mons}_\Degree$;
        \EndWhile
        \State Solve $\partial_i \circ \Std=C^\prime\cdot{\rm Mons}$ and $C_\Ext^\prime=C\cdot \Mext$;
        \State Return $P_i:=C_\Std^\prime-C\cdot \Mstd$;
    \end{algorithmic}
    \caption{: Pfaffian matrix by Macaulay matrix method}
    \label{alg:macaulay-matrix}
\end{algorithm}

\begin{proposition}
    \Algref{alg:macaulay-matrix} outputs the Pfaffian matrix $P_i$ in a finite number of steps.
    \label{prp:alg-finite}
\end{proposition}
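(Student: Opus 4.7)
The plan is to reduce termination of \Algref{alg:macaulay-matrix} to the existence of a bounded-degree representation of the Pfaffian identity in $\Ring$, and then to show that once the cutoff $\Degree$ exceeds that bound, both conditions of the while loop are automatically satisfied. First I would invoke zero-dimensionality of $\Ideal$: the quotient $\Ring/\Ideal$ is a finite-dimensional $\Field\brk{z}$-vector space with basis $\Std$, so for each standard monomial there exist unique scalars $\brk{P_i}_{\alpha\beta}\in\Field\brk{z}$ and (non-unique) operators $Q_{\alpha j}\in\Ring$ with
\[
\partial_i \circ \Std_\alpha \;=\; \sum_{j} Q_{\alpha j} \circ \gen_j \;+\; \sum_{\beta} \brk{P_i}_{\alpha\beta}\, \Std_\beta.
\]
The scalars $\brk{P_i}_{\alpha\beta}$ are precisely the target Pfaffian entries. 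Putting each cofactor in normally ordered form, $Q_{\alpha j} = \sum_{k} q_{\alpha j k}\,\partial^k$, let $D_0$ denote the largest $|k|$ for which some $q_{\alpha j k}$ is nonzero.

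Next I would check that for any $\Degree \geq D_0$ both termination conditions are met. By construction $\Der_\Degree$ contains every $\partial^k$ appearing in the cofactors, so summing the normal forms of the $\partial^k \circ \gen_j$ with coefficients $q_{\alpha j k}$ reproduces $\sum_{j} Q_{\alpha j} \circ \gen_j$ exactly, and the lifted identity then forces every monomial with nonzero coefficient on either side to belong to $\Mons_\Degree$. Hence the first condition $\partial_i \circ \Std = C^\prime \cdot \Mons$ is satisfied. Setting $C_{\alpha,(k,j)} \defas q_{\alpha j k}$ and using $\Field\brk{z}$-linear independence of the distinct monomials of $\Mons$ inside $\Ring$, the identity decouples into $\Ext$- and $\Std$-components,
\[
C \cdot \Mext \;=\; C_\Ext^\prime \qquad \text{and} \qquad C_\Std^\prime - C \cdot \Mstd \;=\; P_i,
\]
which at once certifies solvability of the while-loop system and identifies the returned matrix with the Pfaffian. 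Since $\Degree$ grows by one per iteration, the loop must exit after at most $D_0 + 1$ steps, proving finite termination and correctness.

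I expect the delicate step to be the solvability argument above: because the cofactors $Q_{\alpha j}$ are not unique (they can be modified by syzygies among the $\gen_j$), one must show both that \emph{some} lift produces a valid solution $C$ of the $\Ext$-equation, and that \emph{any} such solution — not necessarily coming from a lift — returns the same $P_i = C_\Std^\prime - C \cdot \Mstd$. The second half follows from the $\Field\brk{z}$-basis property of $\Std$ in $\Ring/\Ideal$: if $C$ and $C^\prime$ both satisfy the $\Ext$-equation, then $(C - C^\prime) \cdot M \cdot \Mons$ is a $\Field\brk{z}$-combination of $\Std$ lying in $\Ideal$, hence vanishes, and therefore $(C - C^\prime) \cdot \Mstd = 0$.
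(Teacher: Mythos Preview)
Your proof is correct and follows essentially the same approach as the paper: both invoke zero-dimensionality of $\Ideal$ to obtain cofactors $Q_{\alpha j}$, define a degree bound (your $D_0$, the paper's $N_0$) as the maximal $\abs{k}$ appearing in their normal forms, and conclude that the loop terminates once $\Degree$ reaches this bound. The paper's proof is terser and stops at stating the bound, whereas you additionally verify explicitly that both while-loop conditions are satisfied at $\Degree \geq D_0$ and, importantly, argue that the returned matrix is independent of the choice of solution $C$; the paper defers this uniqueness point to a separate ``Degree of freedom'' discussion after the proof rather than including it in the proof itself.
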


\begin{proof}
    Since the left ideal $\Ideal$ is zero-dimensional, there exists a matrix $P_i$ with entries in $\Field\brk{z}$ such that
    $\pd{i} \circ \Std - P_i \cdot \Std \equiv 0 \> {\rm mod} \, \Ideal$ (see e.g. \cite[Chap 6]{dojo}).
    This means that the LHS can be expressed in terms of
    the generators $\brc{\gen_1, \dots, \gen_\dimIdeal}$ inside the ring $\Ring$, as in \eqref{eq:pfaff-eq-ring}.
    Let $N_0$ be the maximal degree of the monomials $\partial^{k} \in \Der$ from \eqref{eq:Q-normal-form}.
    Then \Algref{alg:macaulay-matrix} outputs the Pfaffian matrix $P_i$ until the degree parameter $\Degree$
    reaches $N_0$.
\end{proof}

\subsection{Discussion}
\label{ssec:macaulay-discussion}
Some comments about \Algref{alg:macaulay-matrix} are in order.

\paragraph{Other directions.} 
The condition on line 4 can be extended by checking $\pd{i}
\circ \Std$ for all $i = 1, \ldots, N$
to generate a Macaulay matrix $M_\Degree$ large enough for calculating 
Pfaffian matrices in all directions $\brc{P_i}_{i = 1, \dots, N}$.
\paragraph{Rank of $\Mext$.} The Macaulay matrix $\Mext$ produced by the
algorithm may not be of full rank. This makes the matrices $\Mext, \Mstd$
smaller and therefore the whole algorithm more efficient than the algorithm
of \cite{Ohara-Takayama-2015}.
\paragraph{Degree of freedom.} The unknown coefficient matrix $C$, which solves
the system \eqref{eq:C1sys}, can be chosen up to the left nullspace of the
Macaulay matrix $M$. Indeed, suppose that both $C=C^{(1)}$ and $C=C^{(2)}$ solve
the system \eqref{eq:C1sys}. Clearly, we have $\brk{C^{(1)} - C^{(2)}} \cdot
\Mext = 0$. Uniqueness of the Pfaffian matrix alongside 
\eqref{eq:C2sys} imply $\brk{C^{\brk{1}} - C^{\brk{2}}} \cdot \Mstd =
0$ as well. Therefore, we deduce
\begin{align}
    \brk{C^{\brk{1}} - C^{\brk{2}}} \cdot M = 0
    \ ,
    \label{eq:C1sys-freedom}
\end{align}
according to the partition \eqref{eq:macaulay-split} of the Macaulay matrix.
\paragraph{Solvability condition.} The system \eqref{eq:C1sys} has a solution
when the rows of the $C^\prime_1$ matrix lie in the row space of the $\Mext$
block of the Macaulay matrix. This can be formulated as a condition on the rank of
the following matrix:
\begin{align}
    \renewcommand{\arraystretch}{1.4} \mathrm{rank}
    \lrsbrk{
        \begin{array}{c}
            \Mext \\
            \hline
            C_\Ext^\prime
        \end{array}
    } = \mathrm{rank}\bigsbrk{\Mext}
    \ .
    \label{eq:C1sys-solvability}
\end{align}
\paragraph{Probabilistic method.} Checking the solvability condition
\eqref{eq:C1sys-solvability} can be computationally demanding. In order to
reduce the computation time and memory usage we may employ the so-called
probabilistic method. In the GKZ case, it amounts to setting the all the variables
$z_i$ and parameters $\beta$ \brk{see \eqref{f_Gamma(z)}} to numbers in
$\Rationals$ or a finite field
$\Finite_p$, which greatly simplifies the row reduction required for the
solvability check.
\paragraph{Smaller Macaulay matrix.} Once the condition
\eqref{eq:C1sys-solvability} is proved \brk{almost surely, if one employs the
probabilistic method}, we further simplify the system \eqref{eq:C1sys} by
selecting a subset of rows of the Macaulay matrix necessary for solving the
system. In practice, we perform such a selection using the probabilistic method
together with a determination of the left nullspace of the matrix on the LHS of \eqref{eq:C1sys-solvability}.
\paragraph{Finite fields.} For fixed values of the variables $z_i$
and the exponents $\beta$, numerical values of the Pfaffian matrix $P_i$
can be efficiently obtained via the Macaulay matrix method. Note that one
only needs numerical values of $P_i$ to obtain the values of 
normalizing constants for a certain class of statistical distributions, see
\cite{Ohara-Takayama-2015}. In our applications, however, we would like to
\brk{re}construct the full rational dependence of $P_i$ on all the
variables $z_i$ and $\beta$. When it becomes too demanding to solve the system
\eqref{eq:C1sys} analytically, the probabilistic approach can be
used in with tandem with rational function reconstruction over finite
fields $\Finite_p$. Indeed, using modular arithmetic (over prime numbers) and the
Chinese remainder theorem, the rational entries of $P_i \in \mathbb{C}(z)^{r \times r}$
can be efficiently obtained by means of functional reconstruction
algorithms for multivariate, rational functions, which were
developed in the context of scattering amplitudes in \cite{Peraro:2016wsq}. This
algorithm reconstructs analytic expressions for polynomials and rational
functions via repeated numerical evaluation over finite fields by means
of iterative application of Newton's polynomial representation and Thiele's
interpolation formula. The reconstruction algorithm is 
implemented in the public code \soft{FiniteFlow} \cite{Peraro:2019svx}, which
uses so-called dataflow graphs for building numerical algorithms over
finite fields.
\paragraph{Tests.} Once we reconstruct the Pfaffian matrices from
their numerical values over the finite fields, we check correctness as follows.
Recall that Pfaffian equations
\eqref{eq:pfaff-eq-quotient} can be regarded as a Gr\"obner basis in $\Ring$
\brk{see the discussion below eq. \eqref{eq:CI}}.  Therefore, if the
solution rank of the left ideal \eqref{eq:ideal-def} is known to be $r$, the
correct $r \times r$ Pfaffian matrices should satisfy the integrability
condition \eqref{eq:CI}, and should also reduce the generators $\gen_i$ of the
ideal \eqref{eq:ideal-def} to $0$ via the Pfaffian equations
\eqref{eq:pfaff-eq-quotient}.
\paragraph{Further improvements.} Let us comment on potential ways to
improve the efficiency of the \Algref{alg:macaulay-matrix}. By virtue of 
\eqref{eq:macaulay-degree}, to generate the Macaulay matrix of degree
$\Degree$ we are instructed to act with $\Der_\Degree$, the set of all derivatives of
degree $\le \Degree$ \brk{see \eqref{eq:der-degree-def}}, on the generators of ideal. The set $\Der_\Degree$ might be larger than necessary, so one way to improve the
\Algref{alg:macaulay-matrix} is to find a smaller set of derivatives needed for
the construction of the Macaulay matrices.
Another important direction of study is the prediction of singularities for the 
Pfaffian matrix in question, since such knowledge would greatly
simplify the solution of \eqref{eq:C1sys} using rational reconstruction. \\

\done{In conclusion, the Macaulay matrix method for determination of Pfaffian
matrices is an efficient alternative to the direct computation of the Gr\"obner
basis. In this section we presented the linear system of matrix equations
\eqref{eq:C1sys, eq:C2sys} as well as the \Algref{alg:macaulay-matrix},
which will be used for calculation of the Pfaffian matrices in various non-trivial
examples later on.
\henrik{Does this add anything in addition to the introduction to section 4?}
\seva{Nah, it's just repetition at this point, let's remove it from here.}
}

\subsection{Examples}
\label{ssec:examples}
\begin{example}
\label{ex:mm-F1}
\rm
%
Let us apply the Macaulay matrix method to compute the Pfaffian system for the
Appell function $F_1\brk{\alpha,\beta,\beta',\gamma;z_1, z_2}$.

\paragraph{Setup}
Consider the following integral representation:
%
%
\begin{align}
    &\frac{\Gamma\brk{\beta} \, \Gamma\brk{\beta'} \, \Gamma\brk{\gamma - \beta - \beta'}}{\Gamma\brk{\gamma}} \>
    F_1\brk{\alpha, \beta, \beta', \gamma; z_1, z_2}\nonumber\\
    =& \int_{\Delta^2}
    \brk{1 - x_1 - x_2}^{\gamma - \beta - \beta' - 1}
    \brk{1 - z_1 x_1 - z_2 x_2}^{-\alpha} \,
    x_1^\beta x_2^{\beta'} \,
    \frac{\dd x_1 \, \dd x_2}{x_1 x_2}\\
    =& \frac{\Gamma(1-\gamma+\alpha+\beta+\beta^\prime)}{\Gamma(1-\gamma+\beta+\beta^\prime)\Gamma(\alpha)}
    \int_{(0,+\infty)\times\Delta^2} g\brk{z; x}^{\gamma-\alpha-\beta-\beta^\prime-1} \,
    x_0^{\alpha}x_1^{\beta}x_2^{\beta^\prime} \,
    \frac{\dd x_0 \, \dd x_1 \, \dd x_2}{x_0 x_1 x_2} \, ,
\end{align}
where\,%
\footnote{
    Although this polynomial is not written in the standard form \eqref{g(z;x)}, it can be shown that $F_1$ does indeed satisfy a GKZ system.
}
\begin{equation}
    g\brk{z; x} = 1-x_1-x_2-x_0(1-z_1x_1-z_2x_2) \, ,
\end{equation}
and $\Delta^2$ is the two-dimensional simplex
$\Delta^2 \defas \brc{(x_1,x_2)\in\mathbb{R}^2\mid x_1,x_2,1-x_1-x_2\geq 0}$.
The ideal of the corresponding GKZ system is generated by three
independent differential operators $\Ideal = \vev{\gen_1, \gen_2, \gen_3}$.
However, to compute the Pfaffian matrices with generic parameters
$\brc{\alpha, \beta, \beta^\prime, \gamma}$ and $\brc{z_1, z_2}$,
it is enough to consider only the following pair:
\begin{equation}
    \begin{split}
        \gen_1 &= p_2 \, \pd{1}\pd{2} + p_3 \, \pd{2}^2 + p_4 \, \pd{1} + p_5 \, \pd{2} + p_6 \, ,\\
        \gen_2 &= q_2 \, \pd{1}\pd{2}                   + q_4 \, \pd{1} + q_5 \, \pd{2} \, ,
    \end{split}
    \label{eq:appell-gens}
\end{equation}
where the abbreviated coefficients are
%
%
\begin{alignat}{3}
    p_2&=z_1 (1-z_2)\, , && p_3=z_2(1-z_2) \, , \quad && p_4=-\beta'z_1 \, ,\nonumber\\
    p_5&=\gamma-z_2(\alpha+\beta'+1) \, ,\quad && p_6=-\alpha \beta' \, , &&\\
    q_2&=z_1-z_2 \, ,&& q_4=- \beta' \, , && q_5= \beta \, .\nonumber
\end{alignat}
%
%
\paragraph{Basis}
The GKZ system in question has rank $3$. 
The column vector of standard monomials reads
\begin{align}
    \Std = \lrbrk{
        \begin{array}{c}
            \pd{1} \\ \pd{2} \\ 1
        \end{array}
    } \, .
    \label{eq:appell-std}
\end{align}

\paragraph{Macaulay matrix}
For brevity, let us only consider the Pfaffian matrix $P_1$ in the direction $z_1$.
According to \eqref{eq:pfaff-eq-quotient}, our task is to express
\begin{align}
    \pd{1} \circ \Std = \lrbrk{
        \begin{array}{c}
            \pd{1}^2 \\ \pd{1} \pd{2} \\ \pd{1}
        \end{array}
    }
    \label{eq:appell-partial-std}
\end{align}
as a linear combination of $\Std$ over the rational function field $\QQ(\alpha,\beta,\beta',\gamma)(z_1,z_2)$.

Following the steps of \Algref{alg:macaulay-matrix}, we build the Macaulay matrix of degree $\Degree$ \eqref{eq:macaulay-degree} as follows.
First we act with $\Der = \brc{1,\pd{1},\pd{2}}$, the list of all possible derivatives of degree $\Degree \le 1$,
on the two generators $\brc{\gen_1, \gen_2}$ from \eqref{eq:appell-gens}
Then we bring the ensuing expressions into normally ordered form \brk{see \eqref{eq:normal-form}} and rearrange the result into a matrix.
The rows of the matrix are labeled by the list of operators $\brc{\gen_1, \pd{1} \gen_1, \pd{2} \gen_2, \gen_2, \pd{1} \gen_2, \pd{2} \gen_2}$,
while the columns correspond to the monomials $\Mons$ appearing in the normally
ordered form of these operators. Explicitly, we get the Macaulay matrix
of degree%
\footnote{
    Had we used all three generators of the ideal, it would have been enough to consider the degree $\Degree = 0$ Macaulay matrix.
    However, the shortened setup \eqref{eq:appell-gens} considered here provides a better showcase for the steps of \Algref{alg:macaulay-matrix}.
} $\Degree = 1$
\begin{align}
    \begin{blockarray}{cccccccccc}
        &  \partial_2   \partial_1^{ 2} &    \partial_2^{ 2}   \partial_1&   \partial_2^{ 3} &   \partial_1^{ 2} &   \partial_2  \partial_1&   \partial_2^{ 2} &  \partial_1&  \partial_2&  1 \\
        \begin{block}{c[cccccc|ccc]}
            \gen_1 & 0& 0& 0& 0&  p_{2}&  p_{3}&  p_{4}&  p_{5}&  p_{6} \\
            \pd{1}\gen_1 & p_{2}&  p_{3}& 0&  p_{4}&   p_{5}+ p_{2,1}&  p_{3,1}&   p_{6}+ p_{4,1}&  p_{5,1}&  p_{6,1} \\
            \pd{2}\gen_1 & 0&  p_{2}&  p_{3}& 0&   p_{4}+ p_{2,1}&   p_{5}+ p_{3,2}&  p_{4,2}&   p_{6}+ p_{5,2}&  p_{6,2} \\
            \gen_2& 0& 0& 0& 0&  q_{2}& 0&  q_{4}&  q_{5}& 0 \\
            \pd{1}\gen_2& q_{2}& 0& 0&  q_{4}&   q_{5}+ q_{2,1}& 0&  q_{4,1}&  q_{5,1}& 0 \\
            \pd{2}\gen_2 &0&  q_{2}& 0& 0&   q_{4}+ q_{2,2}&  q_{5}&  q_{4,2}&  q_{5,2}& 0 \\
        \end{block}
    \end{blockarray} \, ,
    \label{eq:appell-macaulay}
\end{align}
%
where $p_{i,j}=\partial_{z_j} p_i$ and $q_{i,j}=\partial_{z_j} q_i$. As dictated by \eqref{eq:macaulay-split}, we decompose the Macaulay matrix into
two blocks: the second block $\Mstd$ has columns labeled by $\Std$, and the first one $\Mext$ by the rest of the derivatives
\begin{align}
    \Ext = \Mons \setminus \Std = \bigbrk{
        \pd{2} \pd{1}^{2}, \, \pd{2}^{2} \pd{1}, \, \pd{2}^{3}, \, \pd{1}^{2}, \, \pd{2} \pd{1}, \, \pd{2}^{2}
    }^T.
    \label{eq:appell-ext}
\end{align}
Explicitly, we obtain the blocks
\begin{align}
    \Mext = \lrbrk{
        \begin{array}{cccccc}
            0& 0& 0& 0& z_2^- {z}_{1}&   z_2 z_2^- \\
              z_2^-  {z}_{1}&   z_2 z_2^-& 0&  -\beta'  {z}_{1}&    (-\beta' - \alpha - 2)  {z}_{2} + \gamma + 1& 0 \\
            0&   z_2^-  {z}_{1}&   z_2 z_2^-& 0&   -\beta'  {z}_{1} + z_2^-&    (-\beta' - \alpha - 3)  {z}_{2}+  \gamma+ 1 \\
            0& 0& 0& 0&   {z}_{1}- {z}_{2}& 0 \\
              {z}_{1}- {z}_{2}& 0& 0&  - \beta'&   \beta+ 1& 0 \\
            0&   {z}_{1}- {z}_{2}& 0& 0&   - \beta'- 1&  \beta \\
        \end{array}
    },
\end{align}
where $z_2^- \defas 1 - z_2$, and
%
\begin{align}
    \Mstd = \lrbrk{
        \begin{array}{ccc}
             -  \beta'  {z}_{1}&    (  - \beta'  - \alpha- 1)  {z}_{2}+ \gamma&  -  \alpha  \beta' \\
              (  - \alpha- 1)  \beta'& 0& 0 \\
            0&    (  - \alpha- 1) ( \beta'  +1)& 0 \\
             - \beta'&  \beta& 0 \\
            0& 0& 0 \\
            0& 0& 0 \\
        \end{array}
    }.
\end{align}

\paragraph{Pfaffian matrix}
We now turn to the linear system \eqref{eq:C1sys,eq:C2sys}.
The coefficient matrices $C^\prime_\Ext$ and $C^\prime_\Std$, defined in  \eqref{eq:partial-std-split}, are obtained by expressing \eqref{eq:appell-partial-std} in terms of the sets \eqref{eq:appell-std}
and \eqref{eq:appell-ext}:
\begin{align}
    \pd{1} \circ \Std = \lrbrk{
        \begin{array}{c}
            \pd{1}^2 \\ \pd{1} \pd{2} \\ \pd{1}
        \end{array}
    }
    =
    \lrbrk{
        \begin{array}{cccccc}
            0&0&0&1&0&0 \\
            0&0&0&0&1&0 \\
            0&0&0&0&0&0
        \end{array}
    }
    \lrbrk{
        \begin{array}{c}
            \pd{2}   \pd{1}^{ 2} \\    \pd{2}^{ 2}   \pd{1}\\   \pd{2}^{ 3} \\   \pd{1}^{ 2} \\   \pd{2}  \pd{1}\\   \pd{2}^{ 2}
        \end{array}
    }
    +
    \lrbrk{
        \begin{array}{ccc}
            0&0&0 \\
            0&0&0 \\
            1&0&0
        \end{array}
    }
    \lrbrk{
        \begin{array}{c}
            \pd{1} \\  \pd{2} \\  1
        \end{array}
    }
    \equiv
    C^\prime_\Ext \cdot \Ext + C^\prime_\Std \cdot \Std \, .
\end{align}
%
Let $C=(c_{ij} \,|\, 1 \leq i \leq 3, 1 \leq j \leq 6)$ be a matrix of unknowns.
We require that $C$ satisfies the system \eqref{eq:pfaff-eq-ring}.
In this case,
\begin{align}
    C \cdot \lrbrk{
        \begin{array}{c}
            \gen_1 \\ \pd{1} \gen_1 \\ \pd{2} \gen_1 \\ \gen_2 \\ \pd{1} \gen_2 \\ \pd{2} \gen_2
        \end{array}
    }
    + P_1 \cdot \Std
    = \pd{1} \circ \Std \, .
\end{align}
Upon collecting terms proportional to $\Ext$, we arrive at the
system \eqref{eq:C1sys} to be solved with respect to $C$. 
By the rank condition \eqref{eq:C1sys-solvability}, we note that the degree $\Degree=1$ Macaulay matrix \eqref{eq:appell-macaulay} does indeed lead to a solution.
Once $C$ is determined, \eqref{eq:C2sys} yields the Pfaffian matrix
%
%
\begin{align}
    P_1 = \lrbrk{
        \begin{array}{ccc}
            \frac{(-(\alpha+\beta-\beta^{\prime}+1) z_1+\gamma-\beta^{\prime})
            z_2+(\alpha+\beta+1) z_1^2-\gamma z_1}{\brk{1 - z_1} z_1 \brk{z_1 - z_2}} &
            \frac{-\beta z_2^2+\beta z_2}{\brk{1 - z_1} z_1 \brk{z_1 - z_2}} &
            \frac{\beta \alpha}{\brk{1 - z_1} z_1}
            \\
            \frac{\beta^{\prime}}{z_1-z_2} & \frac{-\beta}{z_1-z_2} & 0
            \\
            1 & 0 & 0
        \end{array}
    }.
    \label{eq:appell-pfaffian}
\end{align}
\end{example}

\hfill $\blacksquare$

\begin{example}\rm 
\label{ex:test5}
Consider the matrix
    $$
    A_5=\left(
    \begin{array}{ccccccccccc}
        1 & 1 & 1 & 1 & 1 & 1 & 1 & 1 & 1 & 1 & 1 \\
        1 & 0 & 0 & 0 & 0 & 1 & 1 & 1 & 0 & 0 & 0 \\
        0 & 1 & 0 & 0 & 0 & 0 & 0 & 0 & 1 & 1 & 0 \\
        0 & 0 & 1 & 0 & 0 & 1 & 0 & 0 & 0 & 0 & 1 \\
        0 & 0 & 0 & 1 & 0 & 0 & 1 & 0 & 1 & 0 & 0 \\
        0 & 0 & 0 & 0 & 1 & 0 & 0 & 1 & 0 & 1 & 1 \\
    \end{array}
    \right).
    $$
    The solution rank of this GKZ system is $r=13$, and a standard basis obtained
    via the method of \cite{Hibi-Nishiyama-Takayama-2017} reads:
    $$\Std=
    (
        \pd{9}\pd{11}^2,
        \pd{9}^2,
        \pd{10}^2,
        \pd{8}\pd{11},
        \pd{9}\pd{11},
        \pd{10}\pd{11},
        \pd{11}^2,
        \pd{7},
        \pd{8},
        \pd{9},
        \pd{10},
        \pd{11},
        1
    )^T.
    $$
    The Macaulay matrix $\Mext$ of degree $D = 2$ is a $189 \times 113$ matrix
    of rank $113$ (over $\ZZ/65537\ZZ$).
    The Macaulay matrix is obtained in \code{0.558 sec}%
    \footnote{
        Our implementation is in the \soft{Risa/Asir} system and the language \soft{C}.
        It is executed on a machine with an Intel Core i7-10700K processor
        of \code{3.8GHz} and \code{16Gb} of RAM. We do not use multicore functionality.
        All timings are taken on this machine unless specified otherwise.
        \done{\comment{The pentagon calculation was executed on Henrik's laptop}}
    }.
    It took \code{0.603 sec} to obtain the numerical Pfaffian matrix in the direction
    $z_6$ for fixed numerical values of all $z$ variables and $\beta$ parameters.
    \hfill $\blacksquare$
\end{example}

\begin{example}\rm 
\label{ex:test9}
    Consider the matrix
    \if 0
    The order of columns of $A$ is different with the section.
    We keep here our original order when we found a Macaulay matrix with a standard basis
    constructed by the method of \cite{Hibi-Nishiyama-Takayama-2017} with the reverse lexicographic order.).
    $$
    A=\left(
    \begin{array}{ccccccccccccccc}
        1 & 1 & 1 & 1 & 1 & 1 & 1 & 1 & 1 & 1 & 1 & 1 & 1 & 1 & 1 \\
        1 & 1 & 1 & 1 & 0 & 0 & 0 & 0 & 0 & 0 & 0 & 0 & 0 & 0 & 0 \\
        0 & 0 & 0 & 0 & 1 & 1 & 1 & 1 & 0 & 0 & 0 & 0 & 0 & 0 & 0 \\
        1 & 0 & 0 & 0 & 0 & 0 & 0 & 0 & 1 & 1 & 1 & 0 & 0 & 0 & 0 \\
        0 & 1 & 0 & 0 & 1 & 0 & 0 & 0 & 0 & 0 & 0 & 1 & 1 & 0 & 0 \\
        0 & 0 & 1 & 0 & 0 & 1 & 0 & 0 & 1 & 0 & 0 & 0 & 0 & 1 & 0 \\
        0 & 0 & 0 & 0 & 0 & 0 & 1 & 0 & 0 & 1 & 0 & 1 & 0 & 0 & 1 \\
    \end{array}
    \right)
    $$
    Our choice of a set of the standard basis is
    \begin{eqnarray*}
        \Std&=&(
        \pd{12}^3,\pd{14}^3,\pd{6}\pd{14}\pd{15},\pd{6}\pd{15}^2,\pd{11}\pd{15}^2,
        \pd{13}\pd{15}^2,\pd{15}^3,\pd{6}\pd{12},\pd{10}\pd{12},\pd{12}^2,\pd{13}^2,\\
        & & \quad
        \pd{6}\pd{14},\pd{8}\pd{14},\pd{11}\pd{14},\pd{13}\pd{14},\pd{14}^2,
        \pd{6}\pd{15},\pd{8}\pd{15},\pd{10}\pd{15},\pd{11}\pd{15},\pd{12}\pd{15}, \\
        & & \quad
        \pd{13}\pd{15},\pd{14}\pd{15},\pd{15}^2,\pd{6},\pd{8},\pd{10},\pd{11},
        \pd{12},\pd{13},\pd{14},\pd{15},1)^T
    \end{eqnarray*}
    We use the reduction of variables $z_1=z_2=z_3=z_4=z_5=z_7=z_9=1$.
    \fi
    $$
    A_6=\left(\begin{array}{ccccccccccccccc}
        1 & 1 & 1 & 1 & 1 & 1 & 1 & 1 & 1 & 1 & 1 & 1 & 1 & 1 & 1 \\
        1 & 0 & 0 & 0 & 0 & 0 & 1 & 1 & 1 & 0 & 0 & 0 & 0 & 0 & 0 \\
        0 & 1 & 0 & 0 & 0 & 0 & 0 & 0 & 0 & 1 & 1 & 1 & 0 & 0 & 0 \\
        0 & 0 & 1 & 0 & 0 & 0 & 1 & 0 & 0 & 0 & 0 & 0 & 1 & 1 & 0 \\
        0 & 0 & 0 & 1 & 0 & 0 & 0 & 1 & 0 & 1 & 0 & 0 & 0 & 0 & 1 \\
        0 & 0 & 0 & 0 & 1 & 0 & 0 & 0 & 1 & 0 & 1 & 0 & 1 & 0 & 0 \\
        0 & 0 & 0 & 0 & 0 & 1 & 0 & 0 & 0 & 0 & 0 & 1 & 0 & 1 & 1 \\
    \end{array}\right) \ ,
    $$
    whose solution rank is $r=33$.
    Our choice of standard basis is
    \begin{eqnarray*}
        \Std&=&\big(
        \pd{15}^{ 3} ,  \pd{5}^{ 3} ,   \pd{5}  \pd{6}  \pd{11},   \pd{6}^{ 2}   \pd{11},  \pd{3}   \pd{6}^{ 2} ,  \pd{4}   \pd{6}^{ 2} ,  \pd{6}^{ 3} ,  \pd{11}  \pd{15},  \pd{14}  \pd{15},  \pd{15}^{ 2} ,  \pd{4}^{ 2} , \\
        & & \quad  \pd{5}  \pd{11},  \pd{2}  \pd{5},  \pd{3}  \pd{5},  \pd{4}  \pd{5},  \pd{5}^{ 2} ,  \pd{6}  \pd{11},  \pd{2}  \pd{6},  \pd{6}  \pd{14},  \pd{3}  \pd{6},  \pd{6}  \pd{15}, \\
        & & \quad \pd{4}  \pd{6},  \pd{5}  \pd{6},  \pd{6}^{ 2} , \pd{11}, \pd{2}, \pd{14}, \pd{3}, \pd{15}, \pd{4}, \pd{5}, \pd{6}, 1\big)^T.
    \end{eqnarray*}
    Using the homogeneity propery discussed in \secref{ssec:rescaling},
    we may fix the following variables: $z_1=z_7=z_8=z_9=z_{10}=z_{12}=z_{13}=1$.
    The block $\Mext$ of the Macaulay matrix of degree $\Degree = 2$%
    \done{\seva{is it true?}}%
    turns out to be a sparse $945\times 958$ matrix, whose rank is $534$
    by the probabilistic method.
    Therefore, there are many rows that are not needed for solving the
    system \eqref{eq:C1sys}.
    Again, using the  probabilistic method over the finite field $\ZZ/65537\ZZ$,
    we determine a maximal set of independent rows of the matrix $\Mext$.
    We find that a $534\times 958$ matrix $N_\Ext$ is enough
    to solve a smaller version of the system \eqref{eq:C1sys}, namely $C_\Ext^{\prime}=C\cdot N_\Ext$.
    As was mentioned in the Smaller Macaulay Matrix paragraph of \secref{ssec:macaulay-discussion}, we can further
    reduce this new system as follows:
    since there are exactly%
    \done{\seva{recheck: is it 23 or 27?}}%
    $27$ independent row vectors in $C_\Ext^{\prime}$,
    we may choose only a subset of the row vectors in $N_\Ext$, whose span includes the independent row vectors of $C_\Ext^\prime$.

    The Macaulay matrix is then obtained in \code{15.549 sec}. 
    It took \code{0.66 sec} to calculate the Pfaffian matrix in the $z_6$
    direction for fixed numerical values of variables $z$ and parameters
    $\beta$. We leave the problem of full functional reconstruction of the
    Pfaffian matrices of this example for future work.
    \done{I wrote a code to generate an input for Mathematica. The auto-generated input is 2022-04-20-hexagon-test9-numpf.m. The Mathematica parser makes the job of replacing variables to numbers. The parser does this job faster than the replacement command of asir, which is slow and I have to improve it in a future. Anyway, the timing is now become 0.66 seconds! Note also that z6 is also specialized to a number. It is an advantage of Macaulay matrix.
Enclose me by done after you read.} 

\hfill $\blacksquare$
\end{example}

We have proposed an efficient Macaulay matrix method to construct Pfaffian
systems. 
It can be applied
to relatively large systems which are not feasible by Gr\"obner basis
methods in the ring of differential operators.
The latter methods are usually feasible for systems of up to, approximately, rank $10$.

\section{Macaulay matrix method and generalized Feynman integrals}
\label{sec:macaulay-feynman}

The relation between Feynman integrals and GKZ-hypergeometric functions has been studied in~\cite{Nasrollahpoursamami:2016,Feng:2019bdx, Watanabe:2013ova, Fleischer:2003rm,Klemm:2019dbm}.
Here, we show applications of GKZ systems combined with Macaulay matrices to Feynman integrals.

Let $0 < \e,\d \ll 1$, $d_0 \in 2 \cdot \NN$, $L \in \NN$ and $\nu \defas (\nu_1,\ldots,\nu_n) \in \ZZ^n$.
Moreover, fix the exponents of the Euler integral \eqref{f_Gamma(z)} to
\eq{
    \label{GFI_beta}
    \beta =
    (\e, -\e\d, \ldots, -\e\d) - (d_0/2, \nu_1, \ldots, \nu_n) \, .
}
We define a \emph{generalized Feynman integral} 
as
\begin{eqnarray}
    \label{gen_feyn_int}
    I(d_0,\nu;z) &\defas&
    c(d_0,\nu) f_\Gamma(\beta) \> ,
\end{eqnarray}
where%
\begin{eqnarray}
    f_\Gamma(\beta) &\defas&
    \int_\Gamma \G(z;x)^{\e-d_0/2} \, 
    x_1^{\nu_1+\e\d} \, \cdots \, x_n^{\nu_n+\e\d} \, 
    \frac{\dd x}{x} \ , \\
    \label{cd0nu}
    c(d_0,\nu) &:=&
    \frac
    { \Gamma(d_0/2 - \e) }
    { \Gamma \big( (L+1)(d_0/2-\e)-|\nu|-n\e\d \big) \prod_{i=1}^n \Gamma(\nu_i+\e\d) } 
    \quad , \quad
    |\nu| := \nu_1 + \ldots + \nu_n \ .
\end{eqnarray}
The polynomial $\G$ takes the form
\eq{
    \G(z;x) = \sum_{i=1}^N \, 
    z_i \,  x^{\alpha_i} 
    \quad , \quad
    \alpha_i \in \NN_0^n \, ,
}
and we set $a_i \defas (1, \alpha_i) \in \NN_0^{n+1}$ as usual.
The introduction of the $\delta$ parameter in \eqref{GFI_beta} follows from the analytic regularization of \cite{speer1969theory, Speer1971}

Using the notation of equations \eqref{omega_q} and \eqref{omega_q_Gamma}, we can equally well define the generalized Feynman integral as a pairing between a twisted cycle $\Gamma$ and a twisted form
\eq{
    \label{omega_feynman}
   \omega_{d_0/2, \, \nu} \defas
   c(d_0, \nu) \times
   \G(z;x)^{-d_0/2} x^\nu \frac{\dd x}{x} \, .
}
In particular, we have
\eq{
   \label{eq:generalized_Feynman_integral_definition}
    I(d_0,\nu;z) = \langle \omega_{d_0/2, \, \nu} \rangle_\Gamma \, .
}

The generalized Feynman integral reduces to the Lee-Pomeransky (LP) representation \cite{Lee:2013hzt} of an $L$-loop  integral in $d=d_0-2\e$ dimensions, with propagator powers $\nu_i$ when:
\eq{
\Gamma = {(0,+\infty)^n} \ , 
\quad 
    \d \to 0 \ ,
}
and
\eq{
    \label{zi_limits}
    z_i \in \NN \, \vee \, z_i \in \text{kinematic variables } (m_i^2, p_i^2, p_i \cdot p_j)
    \ ,
} 
where $m_i$ and $p_i$ stand for masses and external momenta, so that
the polynomial $\G$ becomes the LP polynomial built from Symanzik (or graph) polynomials $\mathcal{U}, 
\mathcal{F}$, as $\G = \mathcal{U} + \mathcal{F}$.\footnote{
The limit $\delta \to 0$ of a generalized Feynman integral may yield to ill-defined expressions of the form
$\frac{\dd x_i x^{\nu_i-1}_i}{ \Gamma(\nu_i)}$, when
$\exists \,  \nu_i \in \mathbb{Z}_{\leq 0}$. In these cases, 
we adopt the replacement $\frac{\dd x_i x^{\nu_i-1}_i}{ \Gamma(\nu_i)} \to \partial^{-\nu_i}_i \left( \delta_{D}(x_i) \right)$, with $\delta_{D}(\bullet)$ being the Dirac delta function, as proposed in \cite{LeeTalkMoriond,Lee:2014tja,SameshimaPhD}.
}

Let us observe that within the LP polynomial $\G(z;x)$, each $z_i$ may not necessarily be independent from each other. 
This is different from the polynomial $g(z;x)$ appearing in the Euler integral $f_\Gamma(z)$, for which each $z_i$ is considered independent.
Their independence ensures a non-degenerate correspondence between the monomials in $x_i$ variables and the partial differential operators in the $z_i$ variables - a crucial property to establish the isomorphism between twisted de Rham cohomology group and $\D$-modules. 
This observation implies that Feynman integrals are {\it restrictions} of GKZ integrals, obtained from the latter by choosing suitable values of the variables $z_i$ 
\cite{delaCruz:2019skx,Klausen:2019hrg,Klausen:2021yrt,Tellander:2021xdz,Feng:2019bdx,Vanhove:2018mto}.
We note that the $z_i$ limits in \eqref{zi_limits} are generically not smooth at the level of the Pfaffian system for a GFI.
In a forthcoming publication \cite{Chestnov:2023kww}, we show that it is nevertheless possible to construct the Pfaffian system with all $z_i$ variables identified with their proper kinematic counterparts.

\subsection{Massless one-loop diagrams}
\label{ssec:one-loop}
Let us consider a massless one-loop diagram $G$ with $n$ external legs and $n$ internal edges.
We use the symbol $p_i$ for each external momentum and $x_i$ for the Schwinger parameter of the $i$-th edge.
The first and the second Symanzik polynomials read
\eq{
\mathcal{U}_G&=x_1+\dots+x_n \, ,\\
\mathcal{F}_G&=\sum_{\substack{1\leq i<j\leq n\\ (i,j)\neq (1,2),\dots,(n-1,n),(1,n)}} \sigma_{ij} \, x_ix_j\label{eqn:F_G} \, ,
}
where we set $\sigma_{ij}:=-(p_i+\dots+p_{j-1})^2$.
Since the Lee-Pomeransky polynomial $\mathcal{G}=\mathcal{U}_G+\mathcal{F}_G$ consists of $\binom{n}{2}$ terms, it gives rise to an $(n+1)\times \binom{n}{2}$ $A$-matrix.
In view of the discussion of \secref{ssec:rescaling}, the number of reduced variables is
\eq{
    \binom{n}{2}-n-1=\frac{n(n-3)}{2} \, ,
}
which coincides with the number of independent Mandelstam variables $s_{ij}$ (as was also noted in \cite{Vanhove:2018mto}).
In sum, we conclude that the master integrals for $G$ are precisely subject to the GKZ system of PDEs.

Moreover, the discussion above can be easily extended to the cases when external masses $m_k^2=p_k^2$ are non-zero but treated as independent variables. 
For example, in the case of the one-mass $n$-gon diagram, all external masses are zero but one, say $p_k^2 \ne 0$, therefore a new term $-p_k^2 x_k x_{k+1}$ is added to the second Symanzik polynomial \eqref{eqn:F_G}. 
The number of reduced variables, in this case, reads $\frac{n(n-3)}{2}+1$ which is equal to
$\{\sigma_{ij}\}\cup\{ p_k^2\}$.

\subsection{Example: One-loop box}
\label{sec:one_loop_box}

In the following two sections, we illustrate the Macaulay matrix method for obtaining Pfaffian systems for Feynman integrals.
We have chosen examples where the number of GKZ variables correspond to number of independent kinematic variables.
The general case will be investigated in \cite{Chestnov:2023kww}. \\

\noindent
\begin{minipage}{12cm}
\paragraph{Setup}
Let us first derive the Pfaffian system for the one-loop massless box diagram.
This example was studied in Lee-Pomeransky representation using twisted cohomology in \cite{Mizera:2019vvs}.
The kinematics are
\eq{
    \sum_{i=1}^4 p_i = 0
    \quad , \quad
    p_1^2 = \cdots = p_4^2 = 0
    \quad , \quad
    s = 2 p_1 \cdot p_2
    \quad , \quad
    t = 2 p_2 \cdot p_3 \, .
}
\end{minipage}
\begin{minipage}{3cm}
\includegraphics[scale=0.15]{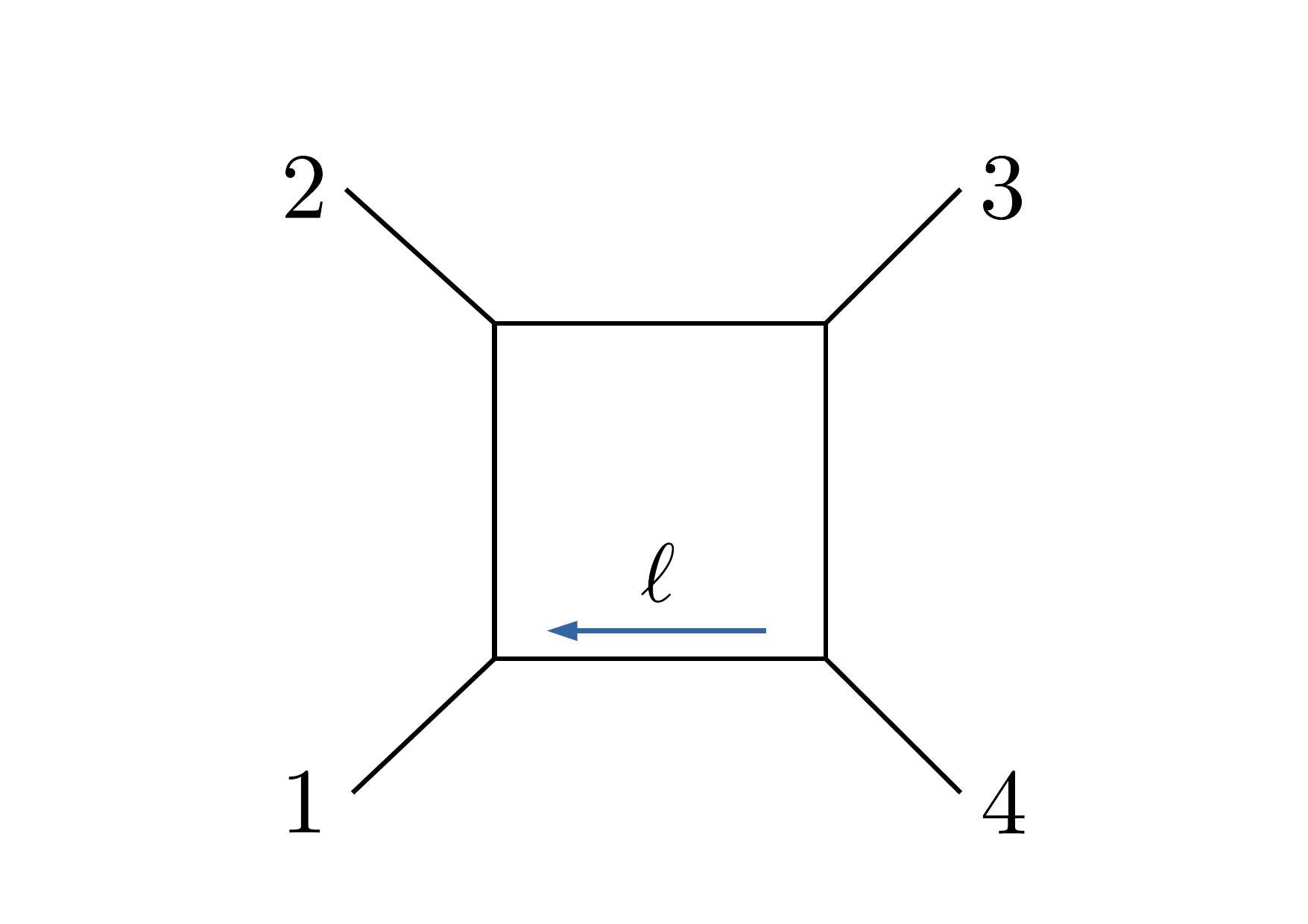}
\end{minipage}

\noindent
The inverse propagators in momentum space are given by%
\footnote{
    Here $\ell^2 = \ell_0^2 - \ell_1^2 - \cdots -\ell_d^2$.
}
\eq{
    \prop_1 = -\ell^2
    \ , \
    \prop_2 = -(\ell-p_1)^2 
    \ , \
    \prop_3 = -(\ell-p_1-p_2)^2 
    \ , \
    \prop_4 = -(\ell-p_1-p_2-p_3)^2 \, .
}
The corresponding generalized Feynman integral reads 
\eq{
    \label{Id0nu_box}
    I(d_0,\nu;\wtz) =
    c(d_0,\nu) (-s)^{d_0/2-\e-|\nu|-4\e\d}
    \int_\Gamma \G(\wtz;x)^{\e-d_0/2} x_1^{\nu_1+\e\d} \cdots x_4^{\nu_4+\e\d} \frac{\dd x}{x} \, ,
}
where
\eq{
    \label{G_box}
    \G(\wtz;x) &=
    \sum_{i=1}^4 \wtz_i x_i + \wtz_5 x_1 x_3 + \wtz_6 x_2 x_4 
    \ , \quad {\rm with} \quad 
    A_4 =
    \begin{pmatrix}
        1 & 1 & 1 & 1 & 1 & 1 \\
        1 & 0 & 0 & 0 & 1 & 0 \\
        0 & 1 & 0 & 0 & 0 & 1 \\
        0 & 0 & 1 & 0 & 1 & 0 \\
        0 & 0 & 0 & 1 & 0 & 1
    \end{pmatrix} \, .
}
The $s$-dependent prefactor in \eqref{Id0nu_box} comes from rescaling integration variables as $x_i \to x_i/(-s)$.
Within the LP-representation, the coefficients $z_i$ in \eqref{G_box} take the explicit form
\eq{
    \wtz_1 &= \cdots = \wtz_5 = 1
    \ , \
    \wtz_6 = \frac{t}{s} =: z \ ,
}
leaving just one external kinematic variable $z$. 
This counting is compatible with the number of GKZ variables after using the homogeneity property discussed in \secref{ssec:rescaling}. Indeed, out of the $N=6$ variables $z_i$ appearing in the generalized Feynman integral, $n+1=4+1=5$ of them  can be rescaled, thereby yielding $z_1 = \cdots = z_5 = 1$ and $z_6 = z$. 

\paragraph{Basis}

Following \cite{Henn:2014qga}, we choose the canonical basis
\eq{
    \label{basis_box_int}
    e =
    (-s)^{\e}
    \Big(z (-s) I(4,0,1,0,2), \ (-s) I(4,1,0,2,0), \ \e z (-s)^2 I(4,1,1,1,1)
    \Big)
}
corresponding to the cohomology basis (see \eqref{omega_feynman})
\eq{
    e^{(\mathrm{dR})} = \Lambda \cdot
    \left(
    \frac{x_4}{x_1 x_3 \G(z;x)^2} \dd x \ , \
    \frac{x_3}{x_2 x_4 \G(z;x)^2} \dd x \ , \
    \frac{1}{\G(z;x)^2} \dd x
    \right) \, ,
    \label{basis_box_D}
}
where $\Lambda$ is a diagonal matrix containing prefactors:
\begin{subequations}
\eq{
    \Lambda_{11} &= \frac
    {(-s)^{-4\e\d}z\Gamma(2-\e)}
    {\Gamma(1-2\e-4\e\d) \Gamma(\e\d)^2 \Gamma(1+\e\d) \Gamma(2+\e\d)} \ , \\[7pt]
    \Lambda_{22} &= \frac
    {(-s)^{-4\e\d}\Gamma(2-\e)}
    {\Gamma(1-2\e-4\e\d) \Gamma(\e\d)^2 \Gamma(1+\e\d) \Gamma(2+\e\d)} \ , \\[7pt]
    \Lambda_{33} &= \frac
    {\e (-s)^{-4\e\d}z\Gamma(2-\e)}
    {\Gamma(-2\e-4\e\d)\Gamma(1+\e\d)^4}
\ . } 
\label{Lambda_box}
\end{subequations}

\noindent
Applying the correspondence in \eqref{P(q)} (or rather \eqref{P(q)_simplex}, 
which employs homogeneity) 
to the basis $e^{(\mathrm{dR})}$,
we obtain the $\D$-module basis 
\eq{
 e^{(\D)} = \Lambda \cdot \bigg( 
        & \frac{\d\e-1}{(\e-1)\e} \partial - \frac{z}{(\e-1)\e} \partial^2 \ , \nonumber \\
        & \frac{(1-3\d)(4\d-1)\e}{\e-1} +
           \frac{z(7\d\e-2\e-1)}{(\e-1)\e} \partial -
           \frac{z^2}{(\e-1)\e} \partial^2 \ , \\ \nonumber
           & \frac{4\d\e-\e-1}{(\e-1)\e} \partial - \frac{z}{(\e-1)\e} \partial^2
 \bigg) \ ,
}
where $\partial := \partial_z$ .

\paragraph{Macaulay and Pfaffian matrices}

Using the Macaulay matrix method, we construct the Pfaffian system in the basis of standard monomials,
\eq{
    \label{Pfaffian_box_std}
    \partial \circ
    \begin{pmatrix}
        \partial^2 \\
        \partial \\
        1
    \end{pmatrix}
    =
    \begin{pmatrix}
        \partial^3 \\
        \partial^2 \\
        \partial
    \end{pmatrix}
    =
    \Pstd \cdot
    \begin{pmatrix}
        \partial^2 \\
        \partial \\
        1
    \end{pmatrix} \ .
} 
Afterwards, we perform a gauge transformation to the basis $e^{(\D)}$, so that $P$ can be read off the equation $\partial \circ e^{(\D)} = P \cdot e^{(\D)}$.

The Macaulay data for the system \eqref{Pfaffian_box_std} is found to be
\begin{subequations}
\eq{ 
    \Mext  &= \begin{pmatrix} z^2(z+1) \end{pmatrix} \ ,\\[4pt]
    \Mstd^T &= 
    \begin{pmatrix}
        z \big((6 \d +2) \e +z (6 \d  \e +\e +3)+3\big) \\
        (3 \d  \e + \e + 1)^2 + z \big ( (9 \d + 2) \e ^2\d + 6 \d  \e + \e + 1 \big) \\
        (4 \d + 1) \e ^3 \d ^2 
    \end{pmatrix} \ , \\[4pt]
    C' &=
    \begin{pmatrix}
        1 & 0 & 0 & 0 \\
        0 & 1 & 0 & 0 \\
        0 & 0 & 1 & 0
    \end{pmatrix} \ , \
    C_{\Ext}' = \begin{pmatrix} 1 & 0 & 0 \end{pmatrix} \ ,  \
    C_{\Std}' = 
    \begin{pmatrix} 
        0 & 0 & 0 \\
        1 & 0 & 0 \\
        0 & 1 & 0
    \end{pmatrix} \ , \\[4pt]
    \Mons &= \Ext \sqcup \Std = \{\partial^3\} \sqcup \{\partial^2, \partial, 1\} \ .
}
\end{subequations}
Setting 
$
    C = \begin{pmatrix} c_{11} & c_{12} & c_{13} \end{pmatrix}
$
and solving $C_{\Ext}' = C \cdot \Mext$, we obtain
\eq{
    C = \begin{pmatrix} \frac{1}{z^2(z+1)} & 0 & 0 \end{pmatrix} \ .
}
The Pfaffian matrix is then
\eq{
    \Pstd &= C_{\Std}' - C \cdot \Mstd \\&=
    \begin{pmatrix}
        \Pstd_{11} & \Pstd_{12} & \Pstd_{13} \\
        1 & 0 & 0 \\
        0 & 1 & 0
    \end{pmatrix} \ ,
}
where
\begin{subequations}
\eq{
    \Pstd_{11} &= 
    -\frac{\e  \big( (z+1)6 \d +z+2 \big) + 3 (z+1)}{z (z+1)} \ , \\
    \Pstd_{12} &= 
    -\frac{(3 \d  \e +\e +1)^2+z \big( (9 \d +2) \e ^2 \d +6 \d  \e +\e +1\big)}{z^2 (z+1)} \ , \\
    \Pstd_{13} &= 
    -\frac{(4 \d +1) \e ^3 \d ^2 }{z^2 (z+1)} \ .
}
\end{subequations}
Finally, according to the algorithm of \secref{ssec:basis-change}, it is possible to build a suitable gauge transformation matrix $G$ such that
\eq{
    P &= (\partial G + G \cdot \Pstd) \cdot G^{-1} \\
      &= \e
    \begin{pmatrix}
        -\frac{1}{z} & 0 & 0 \\
        0 & 0 & 0 \\
        -\frac{2}{z(z+1)} & \frac{2}{z+1} & -\frac{1}{z(z+1)}
    \end{pmatrix} ,
    \label{Pfaffian_box_e}
}
where the limit $\d \to 0$ has been taken.
The matrix \eqref{Pfaffian_box_e} is canonical and in agreement with \soft{LiteRed}.
\hfill $\blacksquare$

\subsection{Example: One-loop pentagon}
\label{sec:one_loop_pentagon}

\begin{minipage}{12cm}
\paragraph{Setup}
We consider a one-loop massless pentagon integral with one massive leg.
The kinematics are 
\eq{
    \sum_{i=1}^5 p_i = 0 
    \quad , \quad
    p_1^2 = \cdots = p_4^2 = 0 
    \quad , \quad
    p_5^2 \defas p^2
    \quad , \quad
    s_{ij} \defas 2 p_i \cdot p_j \ .
}
\end{minipage}
\begin{minipage}{3cm}
\includegraphics[scale=0.12]{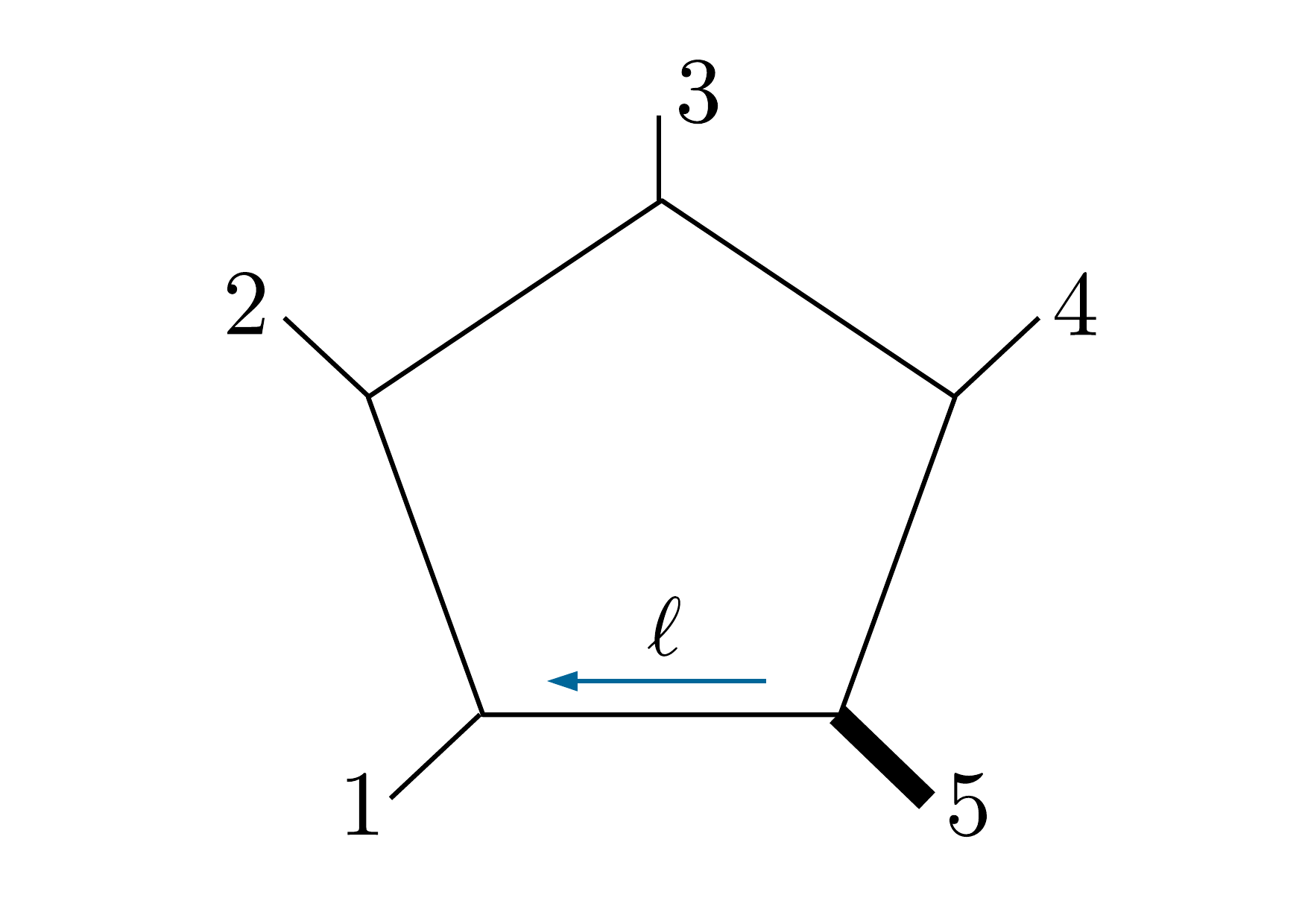}
\end{minipage}

\noindent
Note that the identity $(p_1+p_2+p_3+p_4)^2 = (-p_5)^2 = p^2$ imposes a relation on the Mandelstam variables $s_{ij}$.
The propagators are given by
\begin{alignat}{3}
    & \prop_1 = -\ell^2
    \ , \
    \prop_2 = -(\ell-p_1)^2
    \ && , \
    && \prop_3 = -(\ell-p_1-p_2)^2  \ ,
    \\
    & \prop_4 = -(\ell-p_1-p_2-p_3)^2
    \ && , \
    && \prop_5 = -(\ell-p_1-p_2-p_3-p_4)^2 \ .
    \nonumber
\end{alignat}

\noindent
We can write the generalized Feynman integral as
\eq{
    \label{I_pentagon}
    I(d_0,\nu;\wtz) =
    c(d_0,\nu) \, (-s_{12})^{d_0/2-\e-|\nu|-5\e\d}
    \int_0^\infty \G(\wtz;x)^{\e-d_0/2} \, x_1^{\nu_1+\e\d} \cdots x_5^{\nu_5+\e\d}
    \frac{\dd x}{x}
    \ ,
}
with
\eq{
    \G(\wtz;x) =
    \sum_{i=1}^5 \wtz_i x_i +
    \wtz_6 x_1 x_3 + \wtz_7 x_1 x_4 + \wtz_8 x_1 x_5 + 
    \wtz_9 x_2 x_4 + \wtz_{10} x_2 x_5 + \wtz_{11} x_3 x_5
    \ ,
}
and the corresponding $A$ matrix $A_5$ is given in \exref{ex:test5}.\\
\noindent
The $s_{12}$-dependent prefactor in \eqref{I_pentagon} comes from rescaling integration variables by $x_i \to x_i/(-s_{12})$.
In the Lee-Pomeransky representation, the monomial coefficients are given by
\eq{
    &
    \wtz_1 = \ldots = \wtz_6 = 1
    \ ,  \
    \wtz_7 = 1+y_2+y_4
    \ , \
    \wtz_8 = y_1
    \ , \
    \wtz_9 = y_4 \ , \\
    &
    \wtz_{10} = -1+y_1-y_2-y_3
    \ ,  \
    \wtz_{11} = -1+y_1-y_2-y_3-y_4-y_5
    \ , 
    \nonumber
}
where the $y_i$ are ratios of kinematic variables:
\eq{
    y_1 = \frac{p^2}{s_{12}}
    \ , \
    y_2 = \frac{s_{13}}{s_{12}}
    \ , \
    y_3 = \frac{s_{14}}{s_{12}}
    \ , \
    y_4 = \frac{s_{23}}{s_{12}}
    \ , \
    y_5 = \frac{s_{24}}{s_{12}}
    \ .
}
The Lee-Pomeransky polynomial hence contains 5 monomial coefficients different from unity.
Similar to the previous example, we can obtain equally many non-unity GKZ variables via homogeneity.
In particular, we start with $N=11$ generic variables $z_i$, after which we rescale $n+1=5+1=6$ GKZ variables as $z_1 = \ldots z_6 = 1$, leaving us with 5 variables $z_7, \ldots, z_{11}$.

\paragraph{Basis}

Solving an IBP system numerically, we can swiftly identify a set of master integrals.
Simplifying notation to $I(d_0,\nu)$, we choose
\begin{alignat}{3}
   \nonumber
   e_1 &= (-s_{12})^\e I(2,0,0,1,0,1)
   && e_2 &&= (-s_{12})^\e I(2,0,1,0,0,1) \\ \nonumber
   e_3 &= (-s_{12})^\e I(2,0,1,0,1,0)
   && e_4 &&=  (-s_{12})^\e I(2,1,0,0,0,1) \\ \nonumber
   e_5 &= (-s_{12})^\e I(2,1,0,0,1,0) 
   && e_6 &&= (-s_{12})^\e I(2,1,0,1,0,0) \\ \label{I_basis_pentagon}
   e_7 &= \e (-s_{12})^\e I(4,1,0,1,0,1) 
   && e_8 &&= \e (-s_{12})^{\e+1} I(4,0,1,1,1,1) \\ \nonumber
   e_9 &= \e (2\e-1) (-s_{12})^\e I(6,1,0,1,1,1) \qquad
   && e_{10} &&= \e (-s_{12})^{\e+1} I(4,1,1,0,1,1) \\ \nonumber
   e_{11} &= \e (-s_{12})^{\e+1} I(4,1,1,1,0,1) 
   && e_{12} &&= \e (-s_{12})^{\e+1} I(4,1,1,1,1,0) \\ \nonumber
   e_{13} &= \e^2 (-s_{12})^{\e+1} I(6,1,1,1,1,1) \, . && &&
\end{alignat}
The dimensions $d_0$ are chosen so as to have non-negative $r$-vectors, as per the ending remark of  \secref{ssection:ints_in_D-mod}.
Using \eqref{P(q)} (or formula \eqref{P(q)_simplex} employing homogeneity), we find the $\D$-module basis corresponding to $e$.
We write $e^{(\D)} = \Lambda \cdot {e'}^{(\D)}$ for a diagonal matrix $\Lambda_{ij}$ containing prefactors,
\eq{
    \Lambda_{ii} \defas
    c \big( d_0^{(i)}, \nu^{(i)} \big) \, 
    (-s_{12})^{d_0^{(i)}/2 - \e - \left|\nu^{(i)} \right| - 5 \e \d} \times
    \Lambda'_{i}
    \ ,
}
where $(d_0^{(i)},\nu^{(i)})$ refers to the indices of the $i$-th integral in \eqref{I_basis_pentagon} and%
\footnote{
    The numerators in $\Lambda'_i$ come from \eqref{I_basis_pentagon}, while the denominators arise due to taking derivatives of $\G^\e$ when constructing $P$ such that $P\langle\omega_0\rangle = \langle\omega_q\rangle$, in the notation of \eqref{P(q)}.
}
\eq{
    &
    \Lambda'_{1} = (-s_{12})^\e \frac{1}{\e} \ \ , \ \
    \Lambda'_{2} = (-s_{12})^\e \frac{1}{\e} \ , \ \
    \Lambda'_{3} = (-s_{12})^\e \frac{1}{\e} \ \ , \ \
    \Lambda'_{4} = (-s_{12})^\e \frac{1}{\e} \ \ , \ \
    \\ \nonumber
    &
    \Lambda'_{5} = (-s_{12})^\e \frac{1}{\e} \ \ , \ \
    \Lambda'_{6} = (-s_{12})^\e \frac{-1}{\e} 
    \Lambda'_{7} = \e (-s_{12})^\e \frac{1}{(\e-1)\e} \ \ , \ \
    \Lambda'_{8} = \e (-s_{12})^{\e+1} \frac{1}{(\e-1)\e} \ \ , \ \
    \\ \nonumber
    &
    \Lambda'_{9} = \e(2\e-1)(-s_{12})^\e \frac{-1}{(\e-2)(\e-1)\e} \ \ , \ \
    \Lambda'_{10} = \e (-s_{12})^{\e+1} \frac{1}{(\e-1)\e} 
    \Lambda'_{11} = \e (-s_{12})^{\e+1} \frac{-1}{(\e-1)\e} \ \ , \ \
    \\ \nonumber
    &
    \Lambda'_{12} = \e (-s_{12})^{\e+1} \frac{-1}{(\e-1)\e} \ \ , \ \
    \Lambda'_{13} = \e^2 (-s_{12})^{\e+1} \frac{-1}{(\e-2)(\e-1)\e} 
    \ .
}
Moreover,
\begin{alignat}{2}
\nonumber
&
e'^{(\D)}_1 &&= 
\partial_{11} 
\ , \ 
 e'^{(\D)}_2 = 
\partial_{10} 
\ , \ 
 e'^{(\D)}_3 = 
\partial_9 
\ , \ 
 e'^{(\D)}_4 = 
\partial_8 
\ , \ 
 e'^{(\D)}_5 = 
\partial_7 
 \\[5pt] \nonumber 
 & 
 e'^{(\D)}_6 &&= 
\e (5 \d +1)+z_7 \partial_7+z_8 \partial_8+z_9 \partial_9+z_{10} \partial_{10}+z_{11} \partial_{11} 
 \\[5pt] \nonumber 
 & 
 e'^{(\D)}_7 &&= 
(4 \d  \e +\e +1) \partial_{11} + z_{11} \partial_{11}^2+z_9 \partial_9 \partial_{11}+z_{10} \partial_{10} \partial_{11} 
 \\[5pt] 
 & 
 e'^{(\D)}_8 &&= 
\partial_9 \partial_{11} 
 \\[5pt] \nonumber 
 & 
 e'^{(\D)}_9 &&= 
\d \e \left(4 \d \e +1\right)\partial_{11} +
\d  z_{11} \e  \partial_{11}^2 +
z_7 \left(4 \d \e + \e +1\right) \partial_7 \partial_{11}  +
z_9 \left(5 \d \e + \e +2\right) \partial_9 \partial_{11} +
\d  z_{10} \e  \partial_{10} \partial_{11} 
\\[5pt] \nonumber
&  && +
z_7 z_{11} \partial_7 \partial_{11}^2+z_9 z_{11} \partial_9 \partial_{11}^2+z_9^2 \partial_9^2 \partial_{11}+z_7 z_9 \partial_7 \partial_9 \partial_{11}+z_7 z_{10} \partial_7 \partial_{10} \partial_{11}+z_9 z_{10} \partial_9 \partial_{10} \partial_{11} 
 \\[5pt] \nonumber
 & 
 e'^{(\D)}_{10} &&= 
\partial_7 \partial_{10} 
 \\[5pt] \nonumber 
 & 
 e'^{(\D)}_{11} &&= 
(5 \d  \e +\e +1) \partial_{10} + z_{10} \partial_{10}^2+z_7 \partial_7 \partial_{10}+z_8 \partial_8 \partial_{10}+z_9 \partial_9 \partial_{10}+z_{11} \partial_{11} \partial_{10} 
 \\[5pt] \nonumber 
 & 
 e'^{(\D)}_{12} &&= 
(5 \d  \e +\e +1) \partial_9 + z_9 \partial_9^2+z_7 \partial_7 \partial_9+z_8 \partial_8 \partial_9+z_{10} \partial_{10} \partial_9+z_{11} \partial_{11} \partial_9 
 \\[5pt] \nonumber 
 & 
 e'^{(\D)}_{13} &&= 
(4 \d  \e +\e +2) \partial_{11} \partial_9 + z_9 \partial_{11} \partial_9^2+z_{10} \partial_{10} \partial_{11} \partial_9+z_{11} \partial_{11}^2 \partial_9 \, .
\end{alignat}

\paragraph{Macaulay and Pfaffian matrices}

The Macaulay matrix $\Mext$ has dimensions $189 \times 113$, so in this case there are $113$ exterior monomials $\Ext$.
The basis of 13 standard monomials is given by
\eq{
    \estd =
    \big(
        \partial_9 \partial_{11}^2 , \,
        \partial_9^2 , \,
        \partial_{10}^2 , \,
        \partial_8 \partial_{11} , \,
        \partial_9 \partial_{11} , \,
        \partial_{10} \partial_{11} , \,
        \partial_{11}^2 , \,
        \partial_7 , \,
        \partial_8 , \,
        \partial_9 , \,
        \partial_{10} , \,
        \partial_{11} , \,
        1
    \big)^T
}
We identify $133$ independent rows of $\Mext$ by row reducing it numerically.
We therefore construct an unknown matrix $C$ of dimensions $13 \times 113$ which must satisfy 
$C_{\Ext,i}'-C \cdot N_{\Ext} = 0$, 
where $N_{\Ext}$ contains the $133$ independent rows of $\Mext$.
This linear system can be solved in reasonable time on a laptop using \textsc{FiniteFlow} \cite{url-FiniteFlow}.
The Pfaffian matrix in direction $i$ then follows from 
$C_{\Std,i}'-C \cdot N_{\Std} = \Pstd_i$,
where $N_{\Std}$ is built from $\Mstd$ by taking the same rows as in $N_{\Ext}$.

Using the basis change algorithm outlined in \secref{ssec:basis-change}, we perform a gauge transformation to relate $\Pstd$ to $P_i^{(e)}$, written in the basis $e$.
At this stage, we may safely take the limit $\d \to 0$.
We have verified that the resulting Pfaffian matrices are in agreement with \textsc{LiteRed} \cite{Lee:2012cn, Lee:2013mka}.
\hfill $\blacksquare$ \\ 

A similar analysis can be extended to diagrams with either more legs or more loops. For the purposes of this first work on this subject, we limited the application to one-loop integrals up to six external legs. 
In fact, \exref{ex:test9} refers to the one-loop massless hexagon.
Application to higher-loop cases will be discussed in future works.

\section{Linear relations for generalized Feynman integrals}
\label{sec:linear-relations}
Let $e = (I_1, \ldots, I_r)$ be a basis of master integrals, where each $I_i$ is a generalized Feynman integral of the form \eqref{gen_feyn_int}, i.e depending on generic variables $z_i$.
The vector $e$ will be regarded as the column vector in the sequel; when the distinction between row and column vectors is clear from context, we will omit the transpose symbol ${\bullet}^T$.

Denote by $e' = (I'_1, \ldots, I'_{r'})$ a set of integrals in the same family as $e$.
In both $e$ and $e'$, we replace the integrand factors
$x^{\nu+\e\d}$ by $x_1^{\nu_1 + \e\d_1} \cdots x_n^{\nu_n+\e\d_n}$
for a set of new parameters $\d_i$.

\begin{theorem}
    \label{th:recurrence}
    \hfil
    \begin{enumerate}
        \item There exists a matrix $U \in \mathbb{Q}^{r' \times r}(z)$ such that
        \eq{
            \label{epUe}
            e' = U \cdot e
            \ .
        }
        \item Let $L(e') = (\ell_1 I'_1, \ldots, \ell_{r'} I'_{r'})$, where the $\ell_i$ are differential operators w.r.t. $z$ with rational function coefficients.
        There exists a matrix $V \in \mathbb{Q}^{r' \times r}(z)$ such that
        \eq{
            \label{LepVe}
            L(e') = V \cdot e
            \ .
        }
    \end{enumerate}
    Moreover, there are construction algorithms for $U$ and $V$.
\end{theorem}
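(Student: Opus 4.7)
The plan is to leverage \prpref{prop:D-mod_deRham_iso} together with the Pfaffian systems that the Macaulay matrix method of \secref{sec:pfaffian-from-macaulay} delivers for the basis $e$. By \prpref{prop:D-mod_deRham_iso} and \thmref{thm:Ado}, the cohomology classes of all generalized Feynman integrals in this family live in $\D_N / H_A(\beta) \simeq \mathbb{H}^n$, a $\mathbb{C}(z)$-vector space of dimension $r$. Since $e = (I_1, \dots, I_r)$ is a basis, each $I'_j$ admits a unique expansion $I'_j = \sum_k U_{jk}\, I_k$ with $U_{jk} \in \mathbb{Q}(z)$, and each $\ell_i I'_i$ is likewise in the span of $e$ because multiplication by a differential operator is well-defined modulo $H_A(\beta)$. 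Hence $U$ and $V$ exist; the substance of the statement is the constructive aspect.

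The algorithm for $U$ combines two ingredients. First, by \secref{ssection:ints_in_D-mod}, each $I'_j$ is realised as $\langle P(q_j)\cdot \omega_0\rangle_\Gamma$ for an explicit operator $P(q_j) \in \D_N$ built from the step-up and step-down relations \eqref{step_down, step_up}, with $q_j$ determined by the indices of $I'_j$. Second, once the Pfaffian matrices $P_k$ of $e$ are known, the matrix-factorial recursion of the holonomic gradient method \cite{Tachibana-Goto-Koyama-Takayama-2020},
\begin{equation}
M_{\partial_k \circ R} = \pd{k} M_R + M_R \cdot P_k \ , \qquad M_{\mathrm{id}} = \mathrm{I}_r \ ,
\end{equation}
assigns to every $Q \in \D_N$ a matrix $M_Q \in \mathbb{Q}(z)^{r \times r}$ such that $Q\cdot e \equiv M_Q \cdot e$ modulo $H_A(\beta)$, the recursion being applied monomial by monomial to the normal form of $Q$. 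Expressing the reference class $\langle \omega_0 \rangle_\Gamma = v_0 \cdot e$ once and for all by a single reduction modulo $H_A(\beta)$, the $j$-th row of $U$ is then obtained as $v_0 \cdot M_{P(q_j)}$.

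Part (2) is an immediate variation. Either one composes operators as $L_i \defas \ell_i \circ P(q_i)$ and reads off the $i$-th row of $V$ from $v_0 \cdot M_{L_i}$ using the same recursion; or, starting from the identity $I'_i = (U e)_i$ of part (1) and expanding $\ell_i = \sum_\alpha c_{i,\alpha}(z) \partial^\alpha$ in normal form, one applies the Leibniz rule
\begin{equation}
\ell_i I'_i = \sum_{\alpha, k, \beta} c_{i,\alpha}(z) \binom{\alpha}{\beta} \bigbrk{\partial^{\alpha-\beta} U_{ik}} \bigbrk{M_{\partial^\beta} \cdot e}_k \ ,
\end{equation}
reducing each $\partial^\beta e_k$ via the Pfaffian system. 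Both routes produce $V$ from $U$, the Pfaffian matrices, and the coefficients of $\ell_i$, by purely algebraic manipulations over $\mathbb{Q}(z)$.

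The principal obstacle is not existence but complexity control. The decomposition $q_j = \sum_i r_i a_i$ used to build $P(q_j)$ is non-unique, and a poor choice produces high-order operators, causing the matrix-factorial recursion to traverse many products and derivatives of the $P_k$ whose rational-function entries can inflate quickly. The heuristic noted below \eqref{P(q)}, namely choosing $r_i \ge 0$ to avoid the creation operators $U_i$, together with caching of intermediate matrices $M_{\partial^\alpha}$ shared across the different $P(q_j)$ and $L_i$, is what I expect to determine the practical feasibility of the algorithm on non-trivial Feynman families.
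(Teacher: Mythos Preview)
Your existence argument via \prpref{prop:D-mod_deRham_iso} is fine, and your overall strategy---represent each $I'_j$ by an operator $P(q_j)$, then reduce modulo $H_A(\beta)$ using the Pfaffian system---is a legitimate route. But it is not the paper's route, and the difference matters. What you call the ``matrix-factorial recursion'' is really the normal-form reduction of \secref{ssec:basis-change}: your rule $M_{\partial_k \circ R} = \partial_k M_R + M_R \cdot P_k$ requires differentiating the rational-function entries of $M_R$ with respect to $z_k$ at every step. The paper's matrix factorial is a different object. It exploits the parameter-shift identity $\partial_i F(\beta) = F(\beta - a_i)$, so that $P_i(\beta) F(\beta) = F(\beta - a_i)$, and then builds
\[
[P_i(\beta)]_j = P_i\bigl(\beta - (j-1)a_i\bigr) \cdots P_i(\beta - a_i)\, P_i(\beta)
\]
as a pure product of Pfaffian matrices evaluated at \emph{shifted} $\beta$, with no $z$-derivative anywhere. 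The payoff is that \Algref{alg:RMM} still works when the $z_i$ are frozen to numbers (for instance over finite fields), whereas your recursion cannot be run in that regime. The two formulations are related through the identity of \secref{ssec: basis change without derivatives}, but you do not invoke it.

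There is also a concrete gap in your formula for $U$. You write $\langle \omega_0 \rangle_\Gamma = v_0 \cdot e$ and then claim that the $j$-th row of $U$ is $v_0 \cdot M_{P(q_j)}$. But $\omega_0 = \dd x/x$ is not one of the master integrals in $e$, so $v_0 \in \mathbb{Q}(z)^r$ is $z$-dependent, and hence
\[
P(q_j)\bigl(v_0 \cdot e\bigr) \neq v_0 \cdot \bigl(P(q_j)\, e\bigr) = v_0 \cdot M_{P(q_j)} \cdot e,
\]
because the differential operator $P(q_j)$ hits $v_0$ as well. You acknowledge the Leibniz rule in part~(2) but omit it in part~(1). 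The paper sidesteps this issue altogether: it constructs a monomial basis $s_i = \partial^{k_i}$ with $k_i$ chosen so that each master integral $I_i$ is (up to a $\Gamma$-prefactor) literally a component of $F(\beta) = \bigl(s_1 f(\beta),\ldots,s_r f(\beta)\bigr)$. Then $f(\beta - A k_0)$ is simply the first entry of the matrix-factorial product applied to $F(\beta)$, and no auxiliary vector $v_0$ is needed. Your construction can be repaired---work in the standard-monomial basis where $1 \in \Std$ so that $v_0$ is a constant unit vector, then gauge-transform to $e$---but as written the formula is incorrect.
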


\begin{remark}
    \rm
    This theorem can be regarded as an analogue of IBP relations for generalized Feynman integrals.
    Recall the rank $r$ for generalized Feynman integrals is possibly larger than for conventional Feynman integrals.
    However, once we set the GKZ variables $z_i$ equal to their physically relevant values in e.g. $e' = U \cdot e$, some of the master integrals in $e$ might vanish or become equal, in which case we arrive at a conventional IBP relation.
\end{remark}

Although the theorem can be proved by a method analogous to Algorithm 1 of \cite{Matsubara-Heo-Takayama-2020b}, here we present a different approach based on Pfaffian matrices and the matrix factorial \cite{Tachibana-Goto-Koyama-Takayama-2020}.
We begin by constructing recurrence relations in the general framework of GKZ hypergeometic systems, after which we specialize to the case of generalized Feynman integrals.

\subsection{Recurrence relations for GKZ systems}

For convenience, in this section we will work with a rescaled version of the Euler integral \eqref{f_Gamma(z)}:
\eq{
    f_\Gamma(z) \to f(\beta) :=
    \frac{1}{\Gamma(\beta_0+1)} \int_\Gamma g(z;s)^{\beta_0} x_1^{-\beta_1} \cdots x_n^{-\beta_n} \frac{\dd x}{x}
    \ ,
}
where $\beta$ is assumed to be non-resonant.
We have the relation $\pd{i}f(\beta)=f(\beta-a_i)$ with this rescaling.
Since $z$ and $\Gamma$ will stay fixed in what follows, we only emphasize the dependence on $\beta$ in $f(\beta)$.
The integrals in $e$ and $e'$ are constant multiples of $f(\beta)$ with
\eq{  \label{eq:beta_to_e_delta}
  \beta=(\e,-\delta_1 \e,\ldots, -\delta_n \e)-(d_0/2,\nu) \, .
}
Let $s = \{s_1, \ldots, s_r\}$ be a basis of $\mathcal{D}_N/H_A(\beta)$ consisting of monomials in $\partial_i$.
The vector $F(\beta) = \big(s_1 f(\beta), \ldots, s_r f(\beta) \big)$ is then a solution to a Pfaffian system.
Now, because $\partial_i s_j = s_j \partial_i$ and $\partial_i f(\beta) = f(\beta-a_i)$ we have that
\eq{
    \label{rec_minus_a}
    \partial_i F(\beta) = P_i(\beta) F(\beta) = F(\beta-a_i)
    \ .
}
In other words, $P_i(\beta)$ yields a difference equation for $F(\beta)$.
Moreover, since $\beta$ is non-resonant, the matrix $P_i(\beta+a_i)$ is invertible, in which case we also have
\eq{
    \label{rec_plus_a}
    Q_i(\beta) F(\beta) = F(\beta+a_i)
    \quad , \quad
    Q_i(\beta) := P_i(\beta+a_i)^{-1}
    \ .
}
To make the analogy clear between equations \eqref{rec_minus_a} and \eqref{rec_plus_a}, let us introduce the operator $\partial^{-1}_i$ acting as
\eq{
    \label{partial_inv}
    \partial^{-1}_i F(\beta) = Q_i(\beta) F(\beta) = F(\beta+a_i)
    \ .
}

We define the \emph{falling matrix factorial} as
\eq{
    [P_i(\beta)]_j :=
    P_i(\beta-(j-1)a_i) \cdots P_i(\beta-a_i)P_i(\beta)
    \quad , \quad
    j > 0
    \ ,
}
and the \emph{rising matrix factorial} as
\eq{
    \big( Q_i(\beta)\big)_j :=
    Q_i(\beta+(j-1)a_i) \cdots Q_i(\beta+a_i)Q_i(\beta)
    \quad , \quad
    j > 0
    \ .
}
These matrix products can be swiftly computed using rational reconstruction over finite fields.

When the subscript $j$ is replaced by an integer vector $\kappa \in \NN_0^N$, we extend the definition of matrix factorials%
\footnote{
    When $i=N$ in the product we use the convention $\sum_{j=N+1}^N \kappa_j a_j = 0$.
} to:
\eq{
    \label{P(beta)_kappa}
    [P(\beta)]_\kappa
    & :=
    \prod_{i=1}^N \
    \Big[P_i \Big( \beta - \sum_{j=i+1}^N \kappa_j a_j  \Big) \Big]_{\kappa}
    \ ,
    \\
    \big( Q(\beta) \big)_\kappa
    & :=
    \prod_{i=1}^N \
    \Big(Q_i \Big( \beta - \sum_{j=i+1}^N \kappa_j a_j  \Big) \Big)_{\kappa}
    \ .
}

\begin{lemma} \label{lem:matrix_fac}
For $\kappa \in \NN_0^N$ we have the recurrence relations
\eq{
    \label{lem:matrix_fac_falling}
    F\Big( \beta - \sum_{i=1}^N \kappa_i a_i \Big)
    &=
    [P_i(\beta)]_\kappa \, F(\beta) = \partial^\kappa F(\beta)
    \ ,
    \\
    F\Big( \beta + \sum_{i=1}^N \kappa_i a_i \Big)
    &=
    \big( Q_i(\beta) \big)_\kappa \, F(\beta) = \partial^{-\kappa} F(\beta)
    \ .
    \label{lem:matrix_fac_rising}
}
\end{lemma}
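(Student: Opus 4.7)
The plan is a straightforward double induction built from the single-step identities already established. The starting point is the relation $\partial_i F(\beta) = P_i(\beta) F(\beta) = F(\beta - a_i)$ from \eqref{rec_minus_a}, together with its inverse counterpart $\partial_i^{-1} F(\beta) = Q_i(\beta) F(\beta) = F(\beta + a_i)$ from \eqref{partial_inv}; the non-resonance hypothesis on $\beta$ guarantees the invertibility of every $P_i$ evaluated at the shifted arguments that appear along the way, so the second identity is available throughout.

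First I would handle a single direction: fix an index $i$ and show by induction on $j \in \mathbb{N}_0$ that $\partial_i^{\,j} F(\beta) = [P_i(\beta)]_j F(\beta) = F(\beta - j a_i)$. The induction step is immediate: applying $\partial_i$ to $F(\beta - (j-1)a_i)$ gives $F(\beta - j a_i)$ on one hand and $P_i(\beta - (j-1)a_i) F(\beta - (j-1)a_i)$ on the other, which by the inductive hypothesis equals $P_i(\beta - (j-1)a_i) \, [P_i(\beta)]_{j-1} F(\beta)$; comparing with the definition of $[P_i(\beta)]_j$ closes the step.

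Next I would handle multi-direction shifts by peeling off one direction at a time. Using the commutativity $\partial_i \partial_j = \partial_j \partial_i$ in $\D_N$, write $\partial^\kappa = \partial_1^{\kappa_1} \cdots \partial_N^{\kappa_N}$. The rightmost block $\partial_N^{\kappa_N}$ produces $F(\beta - \kappa_N a_N)$ and contributes $[P_N(\beta)]_{\kappa_N}$ on the matrix side, by the single-direction case. Applying $\partial_{N-1}^{\kappa_{N-1}}$ to $F(\beta - \kappa_N a_N)$ then yields $F(\beta - \kappa_{N-1} a_{N-1} - \kappa_N a_N)$ together with the matrix factor $[P_{N-1}(\beta - \kappa_N a_N)]_{\kappa_{N-1}}$, and iterating this procedure reproduces exactly the nested shift pattern $\beta - \sum_{j > i} \kappa_j a_j$ specified in \eqref{P(beta)_kappa}. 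The rising-factorial statement \eqref{lem:matrix_fac_rising} follows by the same argument with $\partial_i$ replaced by $\partial_i^{-1}$ and $P_i$ by $Q_i$, the sign of each shift reversed, and the convention for $\partial^{-\kappa}$ read off from \eqref{partial_inv}.

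The main obstacle, if one can call it such, is purely bookkeeping: the definition \eqref{P(beta)_kappa} fixes a particular left-to-right order in the product and a particular shift $\beta - \sum_{j > i} \kappa_j a_j$ inside each factor, and one must confirm that the peeling procedure from right to left reproduces exactly this arrangement. Once this match between the recursive shift pattern and the summation convention is verified, the identity $\partial^\kappa F(\beta) = F(\beta - \sum_i \kappa_i a_i)$ follows directly by componentwise iteration of $\partial_j F(\gamma) = F(\gamma - a_j)$, which is the defining property of $F$ on the $\D_N$-module side.
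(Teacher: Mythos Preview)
Your proposal is correct and follows essentially the same approach as the paper: both prove the single-direction identity $\partial_i^{\,j} F(\beta) = [P_i(\beta)]_j F(\beta) = F(\beta - j a_i)$ by induction on $j$, then extend to multi-indices $\kappa$ by iterating over the directions. Your treatment is in fact more explicit than the paper's, which compresses the multi-direction step into a single sentence; your careful verification that peeling off directions right-to-left reproduces the shift pattern in \eqref{P(beta)_kappa} is exactly the bookkeeping the paper leaves to the reader.
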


\begin{proof}

By induction, let us derive that
\eq{
    \label{matrix_fac_induction}
    \partial^j_i F(\beta) =
    F(\beta - j a_i) =
    [P_i(\beta)]_j \, F(\beta)
    \ .
}
The case $j=1$ is shown in equation \eqref{rec_minus_a}.
Suppose that \eqref{matrix_fac_induction} holds for $j-1$.
Then we have
\begin{subequations}
\begin{alignat}{2}
    \partial_i \left( \partial^{j-1}_i F(\beta) \right)
    &=
    \partial_i F(\beta-(j-1)a_i) && \qquad \text{\eqref{matrix_fac_induction}}
    \\
    &=
    P_i(\beta-(j-1)a_i) \, F(\beta-(j-1)a_i) && \qquad \text{(Pfaffian equation)}
    \\
    &=
    P_i(\beta-(j-1)a_i) \, [P_i(\beta)]_{j-1} \, F(\beta) && \qquad \text{\eqref{matrix_fac_induction}}
    \\
    &=
    [P_i(\beta)]_j \, F(\beta) && \qquad \text{\eqref{P(beta)_kappa}}
    \ .
\end{alignat}
\end{subequations}
Applying \eqref{matrix_fac_induction} to each $(\kappa)_j$, we obtain the relation \eqref{lem:matrix_fac_falling} for the falling matrix factorial.
The case of the rising matrix factorial \eqref{lem:matrix_fac_rising} is proved in a similar fashion.

\end{proof}
\begin{remark}
    \rm
    Comparing
    $
        F\big((\beta-a_i)-a_j\big) =
        P_i(\beta-a_j) \, P_j(\beta) \, F(\beta)
    $
    to
    $
        F\big((\beta-a_j)-a_i\big) =
        P_j(\beta-a_i) \, P_i(\beta) \, F(\beta) \, ,
    $
    we note that the commutation relation $P_i(\beta-a_j) \, P_j(\beta) = P_j(\beta-a_i) \, P_i(\beta)$ holds.
    The explicit matrix factorial representation of \eqref{lem:matrix_fac_falling} is hence not unique.
    The same statement holds for \eqref{lem:matrix_fac_rising}.
\end{remark}

Observing that $\big( F(\beta) \big)_i = s_i f(\beta)$, we can write $F(\beta)$ as
\eq{
    F(\beta) = \big(f(\beta-A \cdot k_1), \ldots, f(\beta-A \cdot k_r) \big)
    \ ,
}
where the vectors $k_i \in \NN_0^N$ are fixed according to $s_i = \partial^{k_i}$.
Suppose that we want to obtain a recurrence relation for $f(\beta-A \cdot k_0)$ given some choice of $k_0 \in \ZZ^n$.
Notice that $f(\beta-A \cdot k_0)$ is the first element of $F(\beta-A\cdot(k_0-k_1))$.
We propose to obtain the recurrence relation by \Algref{alg:RMM}, whose
correctness follows from \namedref{Lemma}{lem:matrix_fac}.
Note that the algorithm does not perform differentiation with respect to $z_i$, meaning that it can still produce recurrence relations when the $z_i$ are fixed to numbers in the Pfaffian matrices.

\begin{algorithm}
    \begin{tabbing}
        \underline{Input}:
        Vector $k_0 \in \ZZ^N$,
        indeterminate vector $\beta$,
        monomial basis $s = \{s_1, \ldots, s_r\}$ where $s_i = \partial^{k_i}$, $k_i \in \NN_0^N$.
        \\[7pt]
        \underline{Output}:
        Recurrence relation
        $f(\beta-A \cdot k_0) = \sum_{i=1}^r u_i(\beta) f(\beta-A \cdot k_i)$,
        $u_i \in \QQ(\beta)$
    \end{tabbing}
    \begin{algorithmic}[1]
        \State Construct $P_i(\beta)$, $i=1,\ldots,r$, w.r.t basis $s$ by
        calling \Algref{alg:macaulay-matrix}
        \State Decompose $k_0 - k_1 = \kappa^+ - \kappa^- =: \kappa$ where $\kappa^\pm \in \NN_0^N$.
        \State Compute the matrix factorial
        $
            \big[P(\beta - \sum_{\kappa_j<0} \kappa_j \, a_j) \big]_{\kappa^+} \,
            \big(Q(\beta)\big)_{\kappa^-} \,
            F(\beta)
        $
        \State Output the first element of step 3.
    \end{algorithmic}
    \caption{: Recurrence relation by Macaulay matrix}
    \label{alg:RMM}
\end{algorithm}

\subsection{Recurrence relations for generalized Feynman integrals}

We can employ \Algref{alg:RMM} to find relations among generalized Feynman integrals by specializing $\beta$ to \eqref{eq:beta_to_e_delta}.
The matrices $Q_i = P_i^{-1}$ exist for this generic choice of $\e$ and $\d_i$'s.
In certain cases, we can even take a limit $\delta_i \rightarrow 0$ to obtain conventional IBP relations as we observe in the example below.\\

\subsection{Example: One-loop bubble}
\label{ex:1-loop-2-point}

\noindent
\begin{minipage}{12cm}
\paragraph{Setup}
We use \Algref{alg:RMM} to obtain a linear relation for the one-loop bubble integral with one massive propagator (see also Example $1.1$ of \cite{Klausen:2019hrg}), whose denominators are defined as
\eq{
    \prop_1 = -\ell^2 + m^2
    \quad , \quad
    \prop_2 = -(\ell+p)^2
    \ .
}
\end{minipage}
\begin{minipage}{3cm}
\includegraphics[scale=0.15]{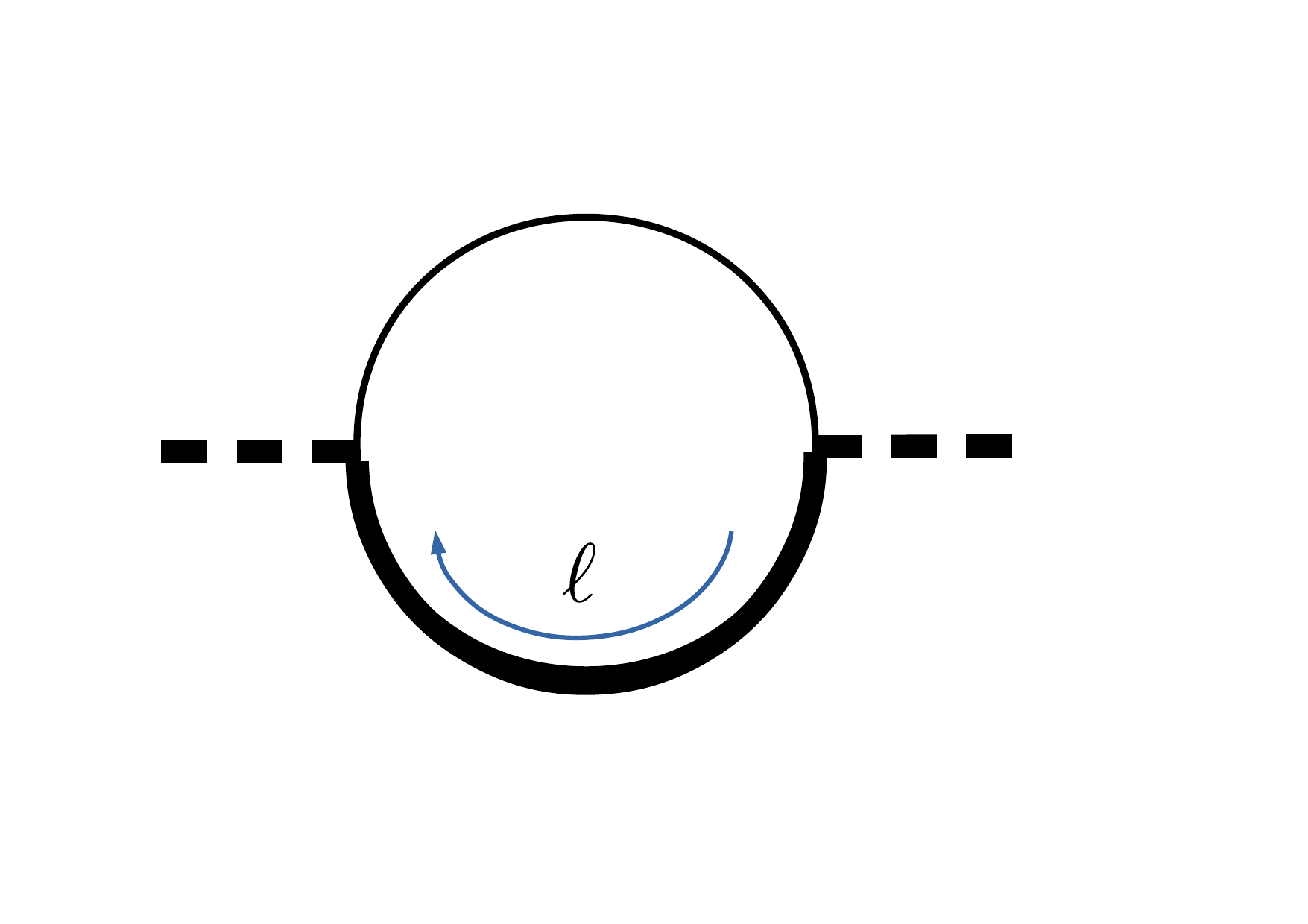}
\end{minipage}

\noindent
From the Lee-Pomeransky representation we read off
\eq{
    \G(z;x) = z_1 x_1 + z_2 x_2 + z_3 x_1 x_2 + z_4 x_1^2
    \quad , \quad
    \wtz = (1, 1, m^2-p^2, m^2)
    \ ,
}
and
\eq{
    A =
    \begin{pmatrix}
    1 & 1 & 1 & 1 \\
    1 & 0 & 1 & 2 \\
    0 & 1 & 1 & 0
    \end{pmatrix}
    \ .
}
We pick master integrals $e = \big(I(4,1,1), \> I(4,2,0) \big)$ and wish to decompose $e' = \big(I(4,1,2)\big)$.
For this diagram, the choice $\d_1 = \d_2 := \d$ is allowed.

In order to apply \Algref{alg:RMM}, we require $k_i$
corresponding to a relevant set of standard monomials. 
We use auxiliary vectors $k'_i$ and $\alpha$
such that $k_i$ correspond to master integrals $e$.
Their definitions are given by \eqref{eq:def_of_k_prime} and \eqref{alpha_max}
in the proof of \thmref{th:recurrence}.

\paragraph{Step 0: Input}

The master integrals correspond to the choice
\begin{alignat}{3}
    k'_1 &= (1,1,0,0)
    \quad && \text{s.t} \quad
    A \cdot k'_1 &&= (2,1,1)
    \ ,
    \\
    k'_2 &= (2,0,0,0)
    \quad && \text{s.t} \quad
    A \cdot k'_2 &&= (2,2,0)
    \ .
\end{alignat}
Given that these $k_i'$ only have positive entries, we have $\alpha = (0,0,0,0)$ using \eqref{alpha_max}.
Then the $A \cdot \alpha$ terms drop, and we have
\begin{alignat}{3}
    k_0 &= (0,1,1,0)
    \quad && \text{s.t} \quad
    A \cdot k_0 &&= (2,1,2)
    \ ,
    \\
    \beta &= \beta'
    \quad && \text{with} \quad
    \beta' &&= (\e,-\e\d,-\e\d)
    \ ,
\end{alignat}
using notation defined in \eqref{f_RMM}.
The vector $k_0$ represents the integral we want to reduce, $I(4,1,2)$.

\paragraph{Steps 1 and 2}

Calling \Algref{alg:macaulay-matrix}, we obtain Pfaffian matrices $P_i$, $i=1,\ldots,4$.
Moreover, we decompose
\begin{subequations}
\eq{
    k_0 - k_1 &= (-1,0,1,0) \\&=
    (0,0,1,0) - (1,0,0,0)   \\&=
    \kappa_+ - \kappa_-     \defas
    \kappa
    \ .
}

\end{subequations}
Note that $k_1=k'_1+\alpha=k'_1$ in this case.

\paragraph{Step 3}

Next, we are instructed to compute the matrix factorial corresponding to the operator $\partial_3^{+1} \partial_1^{-1}$, where the $\partial_3$ stems from $\kappa_+$ and the $\partial_1$ stems from $\kappa_-$.
Explicitly, we calculate
\eq{
    P_3(\beta+a_1) Q_1(\beta)
    \ .
}

\paragraph{Step 4: Output}

The recurrence relations follow from the 1st element of the matrix factorial above, apart from an adjustment of $\Gamma$-prefactors.
This adjustment is due to the constants $c$ and $c^{(i)}$ in equations \eqref{f_RMM_LHS} and \eqref{f_RMM_RHS}.

Given all of the above, \Algref{alg:RMM} produces the result
\begin{eqnarray}
    \nonumber
    I(4,1,2)
    &=&
    \frac
    { (-1)\big((4{\e}{\d}^{2}+(6{\e}-2){\d}+2{\e}-1){p}^{2}+(5{\e}{\d}^{2}+(6{\e}-1){\d}+2{\e}-1){m}^2\big) }
    { ({\d}+1)({\e}{\d}+1)({p}^2-{m}^{2})^2 }
    I(4,1,1)
    \\ && +
    \frac
    { (-1){m}^{2}(3{\d}+2) }
    { ({\d}+1)({p}^2-{m}^{2})^2 }
    I(4,2,0)
    \\[7pt] &=&
    \frac{(1-2\e)(p^2+m^2)}{(p^2-m^2)^2} I(4,1,1) -
    \frac{2m^2}{(p^2-m^2)^2} I(4,2,0)
    \quad \text{as} \quad \d \to 0
    \ ,
    \nonumber
\end{eqnarray}
in agreement with \textsc{LiteRed}.
\hfill $\blacksquare$

\paragraph{Critical case.}
We observe that our method may not give linear relations
for some $A$ and {\it non-generic} choices of $\e$ and $\delta_i$'s.
Put $\delta_1 = \ldots = \delta_n := \delta$
and take
\eq{
    A_2 =
    \begin{pmatrix}
        1 & 1 & 1 & 1 \\
        0 & 1 & 1 & 2 \\
        0 & 1 & 0 & 0
    \end{pmatrix}.
}
Define the linear form $b(\beta) := \beta_2 - \beta_3$, where $\beta_j$ denotes the $j$-th component of $\beta$.
We have $b(a_1) = b(a_2) = 0, \, b(a_3) = 1, \, b(a_4) = 2$.
Then $b(\beta)$ is a primitive supporting function of the facet $\{a_1, a_2\}$ of the convex polytope $\Delta_A$.
It is known that the supporting function gives a denominator
of $\pd{i}^{-1}$ modulo $H_A(\beta)$, $i=3,4$ \cite{SSTip}.
In fact, we have
$$
\big(   -    2  {z}_{3}  {z}_{4}  \partial_{4}  -    4  {z}_{1}  {z}_{4}  \partial_{3}+  (  -  2  {\beta}_{1}+  {\beta}_{2}+ {\beta}_{3})  {z}_{3} \big)  \partial_{3} \equiv    ( - 1)  (  {\beta}_{2}- {\beta}_{3})  (    2  {\beta}_{1}- {\beta}_{2}- {\beta}_{3})
\ \mathrm{mod}\, H_A(\beta)
\ ,
$$
and
$$
\big(   -    2  {z}_{3}  {z}_{4}  \partial_{3}  -   {z}_{3}^{ 2}   \partial_{1}+  (  -  2  {\beta}_{2}+   2  {\beta}_{3}+ 2)  {z}_{4} \big)  \partial_{4} \equiv    ( - 1)  (  {\beta}_{2}- {\beta}_{3})  (   {\beta}_{2}- {\beta}_{3}- 1)
\ \mathrm{mod}\, H_A(\beta)
\ .
$$
Note that $\partial_3$ and $\partial_4$ stand for $P_i$.
We have $b\big( (\e, -\e\d, -\e\d) \big)=0$, for which reason $Q_i$,
$i=3,4$ acquires a $0$ in the denominator.
Since $Q_i$ acts similarly to $\partial_i^{-1}$,
the recurrence relations involving $Q_i$ do not exist.
Moreover, $b(\beta)=0$ has the effect of breaking several isomphism theorems for $\mathcal{D}$-modules
(see, e.g., \cite{SSTip}, \cite[\S 4.5]{SST} and \cite{newBateman} for recent references on isomorphism theorems of GKZ systems). 
\hfill $\blacksquare$
\\ 

\noindent
We close this section with proving \thmref{th:recurrence}.
\begin{proof} (\thmref{th:recurrence}.)
We begin by noting that the matrices $Q_i$ exist since $P_i$ are invertible for generic parameters $\d_i$.

Without loss of generality, suppose that $r'=1$.
Then $e' = (I'_1)$ is given by an integral $I'_1 = I(d_0,\nu)$ in the notation of \secref{sec:macaulay-feynman}.
We week to reduce $I'_1$ in terms of master integrals $e = (I_1,\ldots,I_{r})$, where we use the notation
$I_i = I(d_0^{(i)} , \nu^{(i)})$.

Take vectors $k'_i \in \ZZ^N$ such that
\eq{  \label{eq:def_of_k_prime}
    A \cdot k'_i = \big( d_0^{(i)}/2, \nu^{(i)} \big)
    \ .
}
We decompose $k'_i$ into a sum of positive and negative parts: $k'_i = \kappa^{+}_i - \kappa^{-}_i$ for
$\kappa^{\pm}_i \in \NN_0^N$.
Then we may define the vector
\eq{
    \label{alpha_max}
    \alpha =
    \Big(
        \max_{1 \, \leq \, i \, \leq \, r'} \kappa^{-}_{i,1}
        \, , \ldots ,
        \max_{1 \, \leq \, i \, \leq \, r'} \kappa^{-}_{i,N}
    \Big)
    \ .
}
where $\kappa^-_{i,j}$ is the $j$-th component of $\kappa^-_i$.
Put%
\footnote{
    $\alpha$ is constructed in order to have a common shift for all $k'_i$ such that their entries become non-negative.
}
$k_i = k'_i + \alpha \in \NN_0^N$ and take $k_0 \in \ZZ^N$ such that $A \cdot k_0 = A \cdot \alpha+(d_0/2,\nu)$.
We apply \Algref{alg:RMM} using input $k_0$, $\beta = \beta' + A \cdot \alpha$ for $\beta' = (\e, -\e\d_1, \ldots, -\e\d_n)$ and $s_i = \partial^{k_i}$.
The output of the algorithm takes the form
\eq{
    \label{f_RMM}
    f(\beta - A \cdot k_0) =
    \sum_{i=1}^{r'} u_i(\beta) f(\beta - A \cdot k_i)
    \quad , \quad
    u_i \in \QQ(\beta)
    \ .
}
Let us inspect the expressions for $f$ on both sides of \eqref{f_RMM}.
On the LHS,
\eq{
    f(\beta-A \cdot k_0) &=
    f\left(\beta'-(d_0/2,\nu)\right) \\&=
    I(d_0,\nu) \, \big / \, c
    \label{f_RMM_LHS}
    \ .
}
So we obtain the generalized Feynman integral $I'_1$ apart from constant prefactors $c = c(d_0,\nu)$ defined in \eqref{cd0nu}.
On the RHS,
\begin{subequations}
\eq{
    f \left(\beta - A \cdot k_i \right) &=
    f \left(\beta - \big( d_0^{(i)}/2, \nu^{(i)} \big) - A \cdot \alpha \right) \\&=
    f \left(\beta' - \big( d_0^{(i)}/2, \nu^{(i)} \big) \right) \\&=
    I \big( d_0^{(i)},\nu^{(i)} \big) \, \big / \, c^{(i)}
    \label{f_RMM_RHS}
    \ .
}
\end{subequations}

In other words, we obtain $I_i$ apart from $\Gamma$-prefactors $c^{(i)} = c^{(i)}\big( d_0^{(i)}, \nu^{(i)} \big)$.
The coefficients $u_i(\beta)$ multiplied by $\Gamma$-prefactors
give the matrix $U$.
Thus, we conclude Statement 1. of \thmref{th:recurrence}.

Statement 2 can be proven by noting that $\pd{i}$ induces the parameter shift 
and applying Statement 1.
\end{proof}

\section{Decomposition via cohomology intersection numbers}
\label{sec:intersection}
 Relations between Feynman integrals, equivalent to IBP identities, and, more generally, identities for Euler-Mellin integrals, equivalent to contiguity relations, can be derived by means of intersection theory for twisted de Rham cohomologies~\cite{Mastrolia:2018uzb, Frellesvig:2019kgj, Frellesvig:2019uqt, Frellesvig:2020qot}. According to the mentioned algorithm, the decomposition of any given integrals in terms of an independent basis of MIs can be obtained from the projection of the twisted differential form appearing in the integrand of the integral to decompose into a basis of differential forms that generate a de Rham twisted cohomology group, {\it via} intersection numbers. \\
For the case of generalized Feynman integrals
\eqref{eq:generalized_Feynman_integral_definition}, the covariant derivative \eqref{covariant_derivative_GKZ} reads
\begin{equation}
\label{covariant_derivative_Feynman_integrals}
    \nabla_{x} = \dd_x + \epsilon \frac{\dd_x \mathcal{G} }{\mathcal{G}} \wedge +  \, \epsilon \, \delta \,  \sum_{i=1}^{n} \, \frac{\dd x_i}{x_i} \wedge,
\end{equation}
and we denote the associated $n$-th \emph{de Rham cohomology group} as $\mathbb{H}^{n}$ (see also \eqref{de_Rham_cohomology_group}). We can also introduce a \emph{dual} covariant derivative $\nabla_x^{\vee} = \nabla_x \big|_{\epsilon \to - \epsilon}$ and let $\mathbb{H}^{n \vee}$ be the $n$-th \emph{(dual) de Rham cohomology group} associated to it. The \emph{cohomology intersection number}
\begin{equation}
     \langle \bullet, \bullet \rangle_{ch}: \, \mathbb{H}^{n}  \times \mathbb{H}^{n \vee} \to \mathbb{C}(z),
 \end{equation}
 is a natural pairing between the elements of the two groups.\\

Let $\{ e_i \}_{i=1}^{r}$ be a basis for $\mathbb{H}^{n}$ and $\{h_i \}_{i=1}^{r}$ a basis for $\mathbb{H}^{n \vee}$; the decomposition of any twisted form $\varphi \in \mathbb{H}^{n}$ in terms of $\{ e_i \}_{i=1}^{r}$ can be obtained via chomology intersection numbers according to the master decomposition formula~\cite{Mastrolia:2018uzb, Frellesvig:2019kgj, Frellesvig:2019uqt, Frellesvig:2020qot},
\begin{equation}
    \varphi = \sum_{i=1}^{r} c_i \, e_i \,
    \ , 
    \qquad {\rm with} \quad 
    c_i := \sum_{j=1}^r \, 
    \langle \varphi, h_j \rangle_{ch} \, \left( I^{-1}_{ch} \right)_{ji}, \quad 
    {\rm and} \qquad
    \left(I_{ch}\right)_{ij} := \langle e_i, h_j \rangle_{ch}\, .
    \label{eq:master_deco}
\end{equation}
This formula implies the decomposition of (generalized) Feynman integrals in terms of master integrals, upon the identification in \eqref{eq:generalized_Feynman_integral_definition}.\\
~\\
Looking at \eqref{eq:master_deco} we infer that {\it two} distinct sets of intersection numbers are required, 
namely $ \{ \langle \varphi, h_i \rangle_{ch} \}_{i=1}^{r}$ and $\{ \left(I_{ch}\right)_{ij} \}_{i,j=1}^{r}$. \ 
Therefore, in order to apply the decomposition formula \eqref{eq:master_deco}, it is required the determination 
of the matrix $I_{ch}$ and of the vector $\langle \varphi, h_i \rangle_{ch}$.

According to the algorithm proposed in \cite{Matsubara-Heo-Takayama-2020b}, given a basis $\{ e_i  \}_{i=1}^{r}$ for $\mathbb{H}^{n}$ ($\{ h_i \}_{i=1}^{r}$ for $\mathbb{H}^{n \vee}$)  and $\{P_i \}_{i=1}^{N}$ the associated set of Pfaffian matrices (resp $ \{ P^{\vee}_i \}_{i=1}^{N}$), the cohomology intersection matrix  $ \left( I_{ch} \right)_{ij} = \langle e_i, h_j \rangle_{ch} $ can be obtained as a {\it rational solution} of the so called 
{\it secondary equation},
\begin{equation}
    \partial_i I_{ch} = P_i \cdot I_{ch} + I_{ch} \cdot \left( P_i^{\vee } \right)^{T},
    \qquad \quad i =1, \dots, N ;
\label{eq:secondary_equation}
\end{equation}
which is a (system of partial) differential equation(s) for the cohomology intersection matrix $I_{ch}$, controlled by the Pfaffian matrices
$\{ P_i \}_{i=1}^{N}$ and its dual $\{ P_i^{\vee} \}_{i=1}^{N}$. 
In particular, any non-zero rational solution of the secondary equation $I$ is related to $I_{ch}$ as
\begin{eqnarray}
    I_{ch} = \kappa_0 \, I
\end{eqnarray}
up to a constant $\kappa_0$,
corresponding to a boundary value, to be independently provided, see theorem 2.1 of \cite{matsubaraheo2019algorithm}.
\\

Remarkably, also the vector $\langle \varphi, h_i \rangle_{ch}$ can be obtained by a second application of the same algorithm to a suitably chosen set of differential forms. 
For this purpose,
it is sufficient to build the \emph{auxiliary} basis%
\footnote{
    The set $\{ e_i \}_{i=1}^{r-1} \bigcup \{ \varphi\} $ fails to be a basis when $\varphi=0$.
    In the following discussion, we implicitly assume that the cohomology class $\varphi$ is not zero and the set $\{ e_i \}_{i=1}^{r-1} \bigcup \{ \varphi\} $ is a basis.
    This can be achieved when $\varphi$ is represented by a {\it monomial differential form} $\omega_{d_0/2,\nu}$.
}
, 
$\{ e^{\text{aux}}_i \}_{i=1}^{r} := \{ e_i \}_{i=1}^{r-1} \bigcup \{ \varphi\} $,
as well as the \emph{auxiliary} dual basis
$\{ h^{\text{aux}}_i \}_{i=1}^{r} := \{ h_i \}_{i=1}^{r}$, and the corresponding Pfaffian matrices, say 
 $\{ P^{\text{aux}}_i \}_{i=1}^{N}$ and 
 $\{ P^{\vee \, \text{aux}}_i \}_{i=1}^{N}$, where, owing to the dual bases choice, 
 $P^{\vee \, \text{aux}}_i \equiv 
  P^{\vee}_i $, $\forall i \, | \, 1 \leq i \leq N$. 
Then, the cohomology intersection matrix for the auxiliary bases, $I^{\text{aux}}_{ch}$, whose elements are defined as 
$ \left( I^{\text{aux}}_{ch} \right)_{ij} = \langle e^{\text{aux}}_i, h^{\text{aux}}_j \rangle = \langle e^{\text{aux}}_i, h_j \rangle $,
can be obtained as rational solution of this matrix differential equation,
\begin{equation}
\begin{split}
    \partial_i I^{\text{aux}}_{ch} &  = P^{\text{aux}}_i \cdot I^{\text{aux}}_{ch} + I^{\text{aux}}_{ch} \cdot \left( P_i^{\vee \, \text{aux}} \right)^{T} \\
    & = P^{\text{aux}}_i \cdot I^{\text{aux}}_{ch} + I^{\text{aux}}_{ch} \cdot \left( P_i^{\vee} \right)^{T},
    \qquad \quad i =1, \dots, N \,.
\end{split}
\label{eq:secondary_equation_auxiliary}
\end{equation}
The wanted vector of intersection numbers 
$\langle \varphi, h_j \rangle_{ch}$,
can be read off the $r^{th}$-row of 
$I^{\text{aux}}_{ch}$, because 
\begin{equation}
    \left( I^{\text{aux}}_{ch} \right)_{rj}=\langle \varphi, h^{\text{aux}}_j \rangle_{ch} = \langle \varphi, h_j \rangle_{ch} , \qquad \quad j=1, \dots, r  .
\end{equation}
Actually the solution of the auxiliary secondary equation,
\begin{eqnarray}
    I_{ch}^{\rm aux} = \kappa_0^{\rm aux} I^{\rm aux}
\end{eqnarray}
also requires the independent knowledge of the constant $\kappa_0^{\rm aux}$.

The proposed twofold procedure, yields the determination of all the cohomology intersection numbers needed in \eqref{eq:master_deco},
up to the knowledge of the ratio of two boundary constants 
$\kappa_0^{\rm aux}/\kappa_0$.

Let us observe that the master decomposition formula \eqref{eq:master_deco} requires the knowledge of the inverse matrix $I_{ch}^{-1}$. By using the secondary equation \eqref{eq:secondary_equation}, 
one can show that $I_{ch}^{-1}$ can be directly determined as rational solution of 
the following matrix differential equation,
\begin{eqnarray}
    \partial_i (I_{ch}^{-1}) = 
    -(P_i^{\vee})^{T} \cdot (I_{ch}^{-1}) - 
    (I_{ch}^{-1}) \cdot P_i .
\end{eqnarray}

\paragraph{Determination of coefficients.}
Let us apply the master decomposition  \eqref{eq:master_deco} to a vector formed by $r$ differential forms $e_1, e_2, \ldots, e_{r-1}, \varphi$, reading as,
\begin{equation}
\label{eq:masterdecomatrix}
    \begin{pmatrix}
    e_1\\
    \vdots\\
    e_{r-1}\\
    \varphi
    \end{pmatrix}
    =
    I_{ch}^{\rm aux} \cdot I_{ch}^{-1}
    \begin{pmatrix}
    e_1\\
    \vdots\\
    e_{r-1}\\
    e_r
    \end{pmatrix}
    \ .
\end{equation}
By construction, the product 
$I_{ch}^{\rm aux} \cdot I_{ch}^{-1}$ has the following form,
\begin{equation}\label{eqn:172}
    I_{ch}^{\rm aux} \cdot I_{ch}^{-1}=
    \left(
    \begin{array}{ccc|c}
    &&&0\\
    &{{\mathbb I}_{r-1}}&  &\vdots \\
    &&&0\\
    \hline
    c_1&\cdots&c_{r-1}&c_r
    \end{array}
    \right)
     \ .
\end{equation}
where ${\mathbb I}_{r-1}$ is the identity matrix in the $(r-1)\times(r-1)$ space,
and the entries of the last row are the coefficients of the decomposition \eqref{eq:master_deco}.
On the other side, owing to the solution of the secondary equations, one has,
\begin{eqnarray}
   I_{ch}^{\rm aux} \cdot I_{ch}^{-1} 
   = {\kappa_0^{\rm aux} \over \kappa_0} \, I^{\rm aux} \cdot I^{-1} \ .
\end{eqnarray}
This relation among matrices can be exploited to fix the value of the ratio 
$\kappa_0^{\rm aux}/\kappa_0$. In fact,
any of the elements on the diagonal, say $(I_{ch}^{\rm aux} \cdot I_{ch}^{-1})_{kk}$ , for any $k \in \{1, \ldots, r-1 \}$, amounts to 1, therefore,
\begin{eqnarray}
    {\kappa_0^{\rm aux} \over \kappa_0} \, (I^{\rm aux} \cdot I^{-1})_{kk} 
    = 1 \ , 
\end{eqnarray}
yielding the relation 
\begin{equation}
    {\kappa_0\over \kappa_0^{\rm aux} }
    = ( I^{\rm aux} \cdot I^{-1} )_{kk} \ ,
\end{equation}
namely fixing the ratio of the boundary constants from one of the elements of the product of two known matrices.
Finally, \eqref{eq:masterdecomatrix} becomes,
\begin{equation}
    \begin{pmatrix}
    e_1\\
    \vdots\\
    e_{r-1}\\
    \varphi
    \end{pmatrix}
    =
    {1 \over (I^{\rm aux} \cdot I^{-1})_{kk}}
    I^{\rm aux}I^{-1}
    \begin{pmatrix}
    e_1\\
    \vdots\\
    e_{r-1}\\
    e_r
    \end{pmatrix}
    \ ,
\end{equation}
hence the coefficients of decomposition 
of the differential form $\varphi$ 
in \eqref{eq:master_deco} reads as,
\begin{eqnarray}
\varphi = \sum_{i=1}^{r} c_i \, e_i \,
    \ , 
    \qquad {\rm with} \quad 
c_i = {1 \over (I^{\rm aux} \cdot I^{-1})_{kk}}
    (I^{\rm aux} \cdot I^{-1})_{r i} \ ,
\end{eqnarray}
for any chosen 
$k \in \{1, \ldots, r-1 \}$. 
This result bypasses the determination
of the complete cohomology intersection matrices, 
$I_{ch}$ and $I_{ch}^{\rm aux}$,
and relies only on the knowledge of the Pfaffian matrices, $P$ and $P^{\rm aux}$,
which control the secondary equations.
The determination of their {\it rational} solutions, $I$ and $I^{\rm aux}$, can be efficiently combined with the rational function reconstruction over finite fields, earlier discussed. 

Let us remark that, if needed, the individual expressions of the normalization constants $\kappa_0$, and 
$\kappa_0^{\rm aux}$ can be determined by means of
\cite[Theorem 8.1]{matsubaraheo2019euler}.

\subsection{Example: One-loop box}
We reconsider the example of \secref{sec:one_loop_box}; we aim to decompose:
\begin{equation}
    \e (-s)^{\epsilon} \, (-s)^3 z I(4,2,1,1,1;z),
\end{equation}
in terms of the basis given by \eqref{basis_box_int}.
The associated differential forms are:
\begin{equation}
    \varphi = \frac{\epsilon (-s)^{-4 \epsilon \delta} \, z \, \Gamma(2-\epsilon)}{\Gamma({-}1 {-} 2 \epsilon {-} 4 \epsilon \delta) \,  \Gamma(1+\epsilon \delta)^3 \,  \Gamma( 2 + \delta \epsilon)} \, \frac{x_1}{\G(z;x)^2} \, \dd x \, .
\label{eq:varphi_basis_massless_box}
\end{equation}
and, 
\begin{equation}
    \left( e_1, \, e_2, \, e_3 \right) = \eqref{basis_box_D} \, ,
\label{eq:phi_basis_massless_box}
\end{equation}
while we explicitly choose the basis of dual forms as:
\begin{equation}
   \left( h_1,  \, h_2, \, h_3 \right) =  \left( e_1, \, e_2, \, e_3 \right) \big|_{\epsilon \to - \epsilon} = \eqref{basis_box_D} \big|_{\epsilon \to - \epsilon} \, .
\end{equation}
The Pfaffian matrix associated to \eqref{basis_box_int} reads:
\begin{equation}
    P = \left(
\begin{array}{ccc}
 - \frac{ \epsilon \left(\delta ^2 (12 z+11)+7 \delta  (z+1)+z+1 \right)}{(3 \delta
   +1) z (z+1)} & -\frac{\delta ^2 \epsilon }{(3 \delta+1)(z+1)} & \frac{\delta ^2
   \epsilon  (\delta  (z+2)+1)}{2 (3 \delta +1) z
   (z+1) (\delta  \epsilon +1)} \\
 \frac{\delta ^2 \epsilon }{(3 \delta +1)
   z \left(z+1\right)} & -\frac{\delta ^2
   \epsilon }{(3\delta+1) (z+1)} &
   -\frac{\delta ^2 \epsilon  (\delta +2 \delta 
   z+z)}{2 (3 \delta +1) z (z+1) (\delta 
   \epsilon +1)} \\
 -\frac{2 (2 \delta +1) \epsilon  (\delta 
   \epsilon +1)}{(3 \delta +1) z (z+1)} & \frac{2
   (2 \delta +1) \epsilon  (\delta  \epsilon
   +1)}{(3 \delta +1) (z+1)} & -\frac{\epsilon 
   \left(\delta ^2 (5 z+7)+\delta  (2
   z+5)+1\right)}{(3 \delta +1) z (z+1)} \\
\end{array}
\right),
\end{equation}
while its dual is:
\begin{equation}
    P^{\vee} =  P \big|_{\epsilon \to - \epsilon}.
\end{equation}
A {\it rational solution} of \eqref{eq:secondary_equation} obtained via \cite{integrableconnection} is
\begin{equation}
\label{eq:C_mat_can_massless_box}
    I = 
\left(
\begin{array}{ccc}
 -\frac{(2 \delta +1) (4 \delta +1)}{\delta } &
   \delta  & -2 (\delta  \epsilon -1) \\
 \delta  & -\frac{(2 \delta +1) (4 \delta
   +1)}{\delta } & -2 (\delta  \epsilon -1) \\
 2 (\delta  \epsilon +1) & 2 (\delta  \epsilon
   +1) & -\frac{4 \left(10 \delta ^2+6 \delta
   +1\right) (\delta  \epsilon -1) (\delta 
   \epsilon +1)}{\delta ^3} \\
\end{array}
\right) .
\end{equation}
The {\it auxiliary} basis reads:
\begin{equation}
    \left( e^{\text{aux}}_1, \, e^{\text{aux}}_2, \, e^{\text{aux}}_3 \right) = \left( e_1, \, e_2, \, \varphi \right).
\end{equation}
The expression for $P^{\text{aux}}$ is too lengthy to be reported here. 
By solving \eqref{eq:secondary_equation_auxiliary}, we obtain $I^{\text{aux}}$. The entries of its last row read:
\begin{subequations}
\eq{
\left(I^{\text{aux}} \right)_{31} = &  \frac{2 (3 \delta  \epsilon +\epsilon +1) ((4 \delta +2) \epsilon +1) \left(\left(6 \delta ^2+5 \delta +1\right) \epsilon +\delta  z (2 \delta  \epsilon +\epsilon +1)\right)}{\delta  z (2 \delta  \epsilon
   +\epsilon +1)^2} \, ,\\
\left(I^{\text{aux}} \right)_{32}  = & \frac{2 ((4 \delta +2) \epsilon +1) \left(2 \left(6 \delta ^2+5 \delta +1\right) \epsilon ^2+(5 \delta +2) \epsilon +1\right)}{(2 \delta  \epsilon +\epsilon +1)^2} \, , \\
\left(I^{\text{aux}} \right)_{33}  =  & -\frac{4 (\delta  \epsilon -1) ((4 \delta +2) \epsilon +1)}{\delta ^3 z (2 \delta  \epsilon +\epsilon +1)^2}
\left(
\delta  \left(6 \delta ^2+5 \delta +1\right) \epsilon  (3 \delta  \epsilon +\epsilon +1)+z \left(66 \delta ^4 \epsilon ^2+\delta ^3 \epsilon  (91 \epsilon +50) \right. \right. \nonumber \\
& \left. \left.
 +\delta ^2 \left(47 \epsilon ^2+50 \epsilon
   +10\right)+\delta  \left(11 \epsilon ^2+17 \epsilon +6\right)+(\epsilon +1)^2 \right) \right)
 .
}
\end{subequations}
For the example under consideration, we find:
\begin{equation}
    \frac{\kappa_0}{\kappa^{\text{aux}}_0}=1.
\end{equation}
All the cohomology intersection numbers needed for the decomposition formula \eqref{eq:master_deco} are therefore computed:
\begin{subequations}
\eq{
 \langle \varphi, h_j \rangle_{ch} & =  \left( I^{\text{aux}}_{ch} \right)_{3j} \, ,
    \qquad \quad j=1, \dots, 3 \ ; \\
 \langle e_i, h_j \rangle_{ch} & = \left( I_{ch}   \right)_{ij} \, , 
 \qquad \quad i,j=1, \dots, 3 \ ,
}
\end{subequations}
and, in the $\delta \to 0$ limit, the resulting decomposition in terms of master forms reads,
\begin{equation}
    \varphi = -\frac{2 \epsilon  (2 \epsilon +1)}{z (\epsilon +1)} \cdot e_1 + 0 \cdot e_2 +( 2 \epsilon +1) \cdot e_3 \, ,
\end{equation}
which is in agreement with the IBP decomposition obtained with \textsc{LiteRed}.
To conclude, let us also observe that, the individual normalization constants $\kappa_0=\kappa_0(\delta,\epsilon)$, can be determined by means of
\cite[Theorem 8.1]{matsubaraheo2019euler}, as 
\begin{equation}
    \kappa_0=\frac{\sin^4(\delta\e \pi)\sin\left(2(2\delta+1)\e \pi\right)}{\varepsilon^22\delta(2\delta+1)^2(\e ^2\delta^2-1)\pi^4\sin(\pi\e) } \ ,
\end{equation}
which, combined with \eqref{eq:C_mat_can_massless_box}, gives the cohomology intersection matrix $I_{ch}$.
\hfill $\blacksquare$

\section{Conclusion}

In this work, we presented a simplified algorithm for the determination of Pfaffian matrices from Macaulay matrix, making use of the theory of ${\cal D}$-modules for Gel'fand-Kapranov-Zelevinsky systems.

Then, we introduced two algorithms for the derivation of {\it linear relations} for GKZ-hypergeometric integrals making use of Pfaffians: 
{\it i)} the first one uses the properties of the {\it matrix factorial}, within the holonomic gradient method;
{\it ii)} the second one uses the direct integral decomposition {\it via} cohomology {\it intersection numbers} and the rational solution of the {\it secondary equation} for cohomology intersection matrices.
In the latter case, we derived a novel variant of the master formula for the projection of differential forms onto a set of independent forms.

Our investigation exploited the {\it isomorphism} between de Rham twisted cohomology groups, whose elements are differential forms, and ${\cal D}$-modules, whose elements are partial differential operators in a Weyl algebra. 
Our results can be applied to GKZ-hypergeometric functions as well as to Feynman integrals, which can be considered {\it restrictions} of the former class of functions. 
Within our analyses the number of master integrals, corresponding to the dimension of the de Rham cohomology group, is related to the rank of the GKZ system, computed by polyhedra combinatorics in terms of volumes of polytopes.
We showcased a few applications to simple mathematical functions and one-loop integrals.

Within the standard approaches, linear relations among integrals, 
such as contiguity relation for GKZ-hypergeometric functions and 
integration-by-parts identities for Feynman integrals, are employed to derive systems of partial differential equations for a chosen set of independent function. Instead, in the current work, we reversed the perspective, and showed how Pfaffian matrices for system of partial differential equations, built from Maculay matrices, can be used to derive linear relations for integrals.

We hope that our study could offer a novel, more complete view on integral relations which emerge from the application of differential operators acting both on internal and external variables, and could provide additional insights to the investigation of de Rham twisted co-homology theory in computational (quantum) field theory.

\section*{Acknowledgements}
We wish to thank Tiziano Peraro for interesting discussions and for assistance with the use of {\sc FiniteFlow}.\\
We acknowledge the stimulating discussions stemming from the workshop {\it MathemAmplitudes 2019: Intersection Theory \& Feynman Integrals}, U. Padova,
\href{https://pos.sissa.it/383/}{[MA2019]}.
\\
F.G. is supported by Fondazione Cassa di Risparmio di Padova e Rovigo (CARIPARO). The work of M.K.M is supported by Fellini - Fellowship for Innovation at INFN funded by the European Union's Horizon 2020 research and innovation programme under the Marie Sk{\l}odowska-Curie grant agreement No 754496.
V.C. is supported by the {\it Diagrammalgebra} Stars Wild-Card Grant UNIPD.
S.M. and N.T. are supported in part by the JST CREST Grant Number JP19209317.
S.M. is supported by JSPS KAKENHI Grant Number 22K13930.
N.T. is supported in part by JSPS KAKENHI Grant Number JP21K03270.
\appendix

\section{Further details on the homogeneity property}
\label{sec:rescaling-details}
Here we provide additional details and formulas for \secref{ssec:rescaling}.
We begin by introducing simplified matrix notation to reduce the number of
indices in the formulas below.

Firstly, we give a generalization of the multivariate exponent from
\eqref{eq:multivar-exp}. Given an $\brk{n + 1} \times N$ matrix
$A \defas \brk{a_1, \ldots, a_N}$ and a list of variables
$t \defas \brk{t_1, \ldots, t_{n + 1}}$,
we construct the following list of exponentials%
\footnote{
    In \mathematica{} this can be achieved by \code{Inner[Power, $t$, $A$, Times]}.
}:
\begin{align}
    t^A \defas \brk{t^{a_1}, \ldots, t^{a_N}} \, .
    \label{eq:matrix-exp}
\end{align}
This matrix exponentiation obeys the rule $\brk{t^A}^B = t^{A B}$.

Secondly, we define an item-wise product of two lists having the same lengths. Given two
lists $t$ and $s$ of lengths $n$, we define%
\footnote{
   Using \mathematica{} syntax, we can write \code{Inner[Times, $s$, $t$, List]}.
}
\begin{align}
    s \times t \defas \brk{s_1 t_1, \ldots, s_n t_n} \, ,
\end{align}

Using the formulas above, we compactly rewrite the homogeneity property
\eqref{f_Gamma_homogeneity} of Euler integrals as follows:
\begin{align}
     \eulerInt\brk{t^A \times z} = t^{\beta} \eulerInt\brk{z} \, .
     \label{eq:homoegeneity-matrix}
\end{align}
Now we pick a set
\begin{align}
    \sigma \subset \brc{1,\ldots,N}, \quad
    \bsigma \defas \brc{1,\ldots,N} \setminus \sigma \, ,
\end{align}
where $\sigma$ is of size $\abs{\sigma} = n+1$, and $\eta$ is its
complement.
Denote by $A_\sigma$ the submatrix of matrix $A$ constructed from the
columns labeled by $\sigma$.
Given a vector $v$, we likewise construct $v_\sigma$. 
Using this notation, we can separate the following matrix product into two terms:
\begin{align}
    A v = A_\sigma v_\sigma + A_{\bsigma} v_{\bsigma} \, .
    \label{eq:Av-split}
\end{align}
We shall require $\sigma$ to be chosen such that the $A_\sigma$
submatrix is invertible, i.e. $\mathrm{det}(A_\sigma) \neq 0$.

Now observe that for the special choice of the rescaling parameters
$t = z_{\sigma}^{-A_{\sigma}^{-1}}$ in \eqref{eq:homoegeneity-matrix},
we obtain the following factorized representation of a given Euler integral:
\begin{align}
    \eulerInt\brk{z}
    = z_{\sigma}^{-A_{\sigma}^{-1} \beta} \, \eulerInt\bigbrk{z_{\sigma}^{-A_{\sigma}^{-1} A} \times z} \, .
    \label{eq:euler-int-rescaled}
\end{align}
The key feature of the RHS is that all arguments labeled by
$\sigma$ are set to $1$:
\begin{align}
    \bigbrc{z_{\sigma}^{-A_{\sigma}^{-1} A} \times z}_j =
    \begin{cases}
        1 & j \in \sigma \, .
        \\
        z_{\sigma}^{-A_{\sigma}^{-1} a_j} \, z_j & j \in \bsigma{} \, .
    \end{cases}
\end{align}
This encourages us to define the rescaled variables
\begin{align}
    w \defas z_\sigma^{-A_\sigma^{-1} A_{\bsigma}} \times z_{\bsigma} \, ,
    \label{eq:w-def}
\end{align}
as well as the rescaled Euler integral
\begin{align}
    \label{g_Gamma(w)}
    \eulerIntResc\brk{w} \defas
    z_{\sigma}^{A_{\sigma}^{-1} \beta} \, \eulerInt\brk{z} \, .
\end{align}
With this notation we are now ready to formulate the main proposition of this
section.
\begin{proposition}
    A generic partial derivative operator
    $\partial_z^v \defas \partial_{z_1}^{v_1} \partial_{z_2}^{v_2} \cdots \partial_{z_N}^{v_N}$
    acts on the Euler integral \eqref{eq:euler-int-rescaled} as follows:
    \begin{align}
        \label{P(q)_simplex}
        \partial_z^v \eulerInt\brk{z} =
        z_{\sigma}^{-A_{\sigma}^{-1} \brk{\beta + A v}} \,
        \brk{-1}^{\abs{v_\sigma}} \,
        \Bigbrk{
            w^{-v_{\bsigma}} \,
            \bigbrk{A_{\sigma}^{-1} \beta + A_{\sigma}^{-1} A_{\bsigma} \theta_w}_{v_{\sigma}} \,
            \bigsbrk{\theta_w}_{v_{\bsigma}} \, \eulerIntResc\brk{w}
        }\Big|_{w = \eqref{eq:w-def}} \, ,
    \end{align}
    where $\abs{v_\sigma} \defas \sum_{i \in \sigma} v_i$.
    \label{prp:simplex}
\end{proposition}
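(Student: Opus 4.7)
The plan is to start from the factorized representation \eqref{eq:euler-int-rescaled}, namely $\eulerInt\brk{z} = z_\sigma^{-A_\sigma^{-1}\beta}\,\eulerIntResc\brk{w}$ with $w$ as in \eqref{eq:w-def}, and to convert $\pd{z}^v$ into operators in the $w$-variables via the Euler-operator identity $\pd{z_i}^k = z_i^{-k}[\theta_{z_i}]_k$, where $\theta_{z_i}\defas z_i\pd{z_i}$ and $[X]_k\defas X(X-1)\cdots(X-k+1)$ is the falling factorial. Since distinct $\theta_{z_i}$ commute and each $z_i^{-k}$ commutes with $\theta_{z_{i'}}$ for $i'\neq i$, all $z$-power prefactors can be pulled out in front, so the task reduces to evaluating $\prod_i[\theta_{z_i}]_{v_i}$ on $\eulerInt\brk{z}$ and collecting powers of $z$ at the end.

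Next, a direct chain-rule computation yields the action of each Euler operator on both factors. For $j\in\bsigma$ the variable $z_j$ appears only linearly inside $w_j$, which is a monomial of the form $z_\sigma^{-\brk{A_\sigma^{-1}A_{\bsigma}}_{\cdot,j}}z_j$, while the other $w_k$ and the prefactor are $z_j$-independent; this gives the operator identity $\theta_{z_j}=\theta_{w_j}$ when acting on $\eulerIntResc\brk{w}$. For $i\in\sigma$ the prefactor contributes the scalar $-\brk{A_\sigma^{-1}\beta}_i$, captured by the conjugation relation $\theta_{z_i}\circ z_\sigma^{-A_\sigma^{-1}\beta}=z_\sigma^{-A_\sigma^{-1}\beta}\circ\brk{\theta_{z_i}-\brk{A_\sigma^{-1}\beta}_i}$, while on $\eulerIntResc\brk{w}$ the identity $z_i\pd{z_i}w_k = -\brk{A_\sigma^{-1}A_{\bsigma}}_{ik}\,w_k$ for $k\in\bsigma$ shows that $\theta_{z_i}$ acts as $-\brk{A_\sigma^{-1}A_{\bsigma}\theta_w}_i$. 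All of the operators involved commute, so the product of falling factorials assembles cleanly.

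The third step is to convert the falling factorials into the rising factorials appearing in the proposition by means of the elementary identity $[-X]_k = \brk{-1}^k \brk{X}_k$, with $\brk{X}_k$ the Pochhammer symbol. Applied to each $i\in\sigma$, the factor $[\theta_{z_i}-\brk{A_\sigma^{-1}\beta}_i]_{v_i}$, obtained after conjugation through the prefactor, acts on functions of $w$ as $\brk{-1}^{v_i}\brk{\brk{A_\sigma^{-1}\beta+A_\sigma^{-1}A_{\bsigma}\theta_w}_i}_{v_i}$. Aggregating over $i\in\sigma$ produces the overall sign $\brk{-1}^{|v_\sigma|}$ together with the rising factorial $\brk{A_\sigma^{-1}\beta+A_\sigma^{-1}A_{\bsigma}\theta_w}_{v_\sigma}$ of \eqref{P(q)_simplex}, while the $\bsigma$ indices contribute $[\theta_w]_{v_{\bsigma}}$ directly.

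It remains to collect the $z$-prefactors. Splitting $z^{-v} = z_\sigma^{-v_\sigma}\,z_{\bsigma}^{-v_{\bsigma}}$ and inverting the definition of $w_j$ for $j\in\bsigma$ yields $z_{\bsigma}^{-v_{\bsigma}} = z_\sigma^{-A_\sigma^{-1}A_{\bsigma}v_{\bsigma}}\,w^{-v_{\bsigma}}$, producing the required $w^{-v_{\bsigma}}$. Summing the three contributions to the exponent of $z_\sigma$ gives $-A_\sigma^{-1}\beta - v_\sigma - A_\sigma^{-1}A_{\bsigma}v_{\bsigma} = -A_\sigma^{-1}\brk{\beta + A_\sigma v_\sigma + A_{\bsigma}v_{\bsigma}} = -A_\sigma^{-1}\brk{\beta + Av}$, matching the prefactor of \eqref{P(q)_simplex}. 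The only real difficulty is bookkeeping: verifying the conjugation step, checking that all the Euler operators commute so that the factorials separate, and ensuring the $\sigma$- and $\bsigma$-block contributions telescope into the single power of $A_\sigma^{-1}\brk{\beta+Av}$.
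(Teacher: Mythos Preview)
Your proof is correct and follows essentially the same approach as the paper: both rely on the identity $\partial_{z_i}^{k}=z_i^{-k}[\theta_{z_i}]_k$, the conjugation of $\theta_{z_i}$ through the monomial prefactor $z_\sigma^{-A_\sigma^{-1}\beta}$, the chain-rule identification $\theta_{z_i}\!\mid_{g(w)} = -\bigl(A_\sigma^{-1}A_{\bsigma}\theta_w\bigr)_i$ for $i\in\sigma$, the falling-to-rising conversion $[-X]_k=(-1)^k(X)_k$, and the splitting $Av=A_\sigma v_\sigma+A_{\bsigma}v_{\bsigma}$ to assemble the final $z_\sigma$-power. The only organizational difference is that the paper first applies $\partial_{z_{\bsigma}}^{v_{\bsigma}}$ via the chain rule $\partial_{z_j}=z_\sigma^{-(A_\sigma^{-1}A_{\bsigma})_{\cdot j}}\partial_{w_j}$ and only afterwards introduces Euler operators for the $\sigma$-block, which forces an extra commutation of the $\sigma$-factorial through $w^{-v_{\bsigma}}$ (where the $v_{\bsigma}$-shift cancels); your uniform use of Euler operators with $z^{-v}$ pulled to the front avoids that step and is slightly cleaner.
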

\begin{proof}
    \done{\seva{explain somewhere that $A_\eta \theta_w$ is a column vector}}
    Let us start with a few useful formulas. The splitting of the vector $v$
    into $v_\sigma$ and $v_{\bsigma}$ implies the corresponding factorization
    of the differential operator:
    \begin{align}
        \partial_z^{v} = \partial_{z_\sigma}^{v_\sigma} \,
        \partial_{z_{\bsigma}}^{v_{\bsigma}} \, .
    \end{align}
    This operator can also be expressed in terms of the
    lowering factorial of an Euler operator like so:
    \begin{align}
        \partial_z^v = z^{-v} \bigsbrk{\theta_z}_{v} \, .
    \end{align}
    Finally, the raising and lowering factorials%
    \footnote{
        Recall the definitions
        $\brk{a}_b \defas a \brk{a + 1} \ldots \brk{a + b - 1}$
        and
        $\sbrk{a}_b \defas a \brk{a - 1} \ldots \brk{a - b + 1}$.
        In vector case, these definitions are applied component-wise:
        $\brk{u}_v \defas \brk{u_1}_{v_1} \ldots \brk{u_N}_{u_N}$.
    } are related as follows:
    \begin{align}
        \bigbrk{a}_b = \brk{-1}^{\abs{b}} \, \bigsbrk{-a}_b \, .
    \end{align}

    Using these formulas, it follows that the action of 
    $\partial_{z_{\bsigma}}^{v_{\bsigma}}$ on the Euler integral \eqref{g_Gamma(w)} is
    \begin{align}
        \partial_{z_{\bsigma}}^{v_{\bsigma}} \eulerInt\brk{z}
        = z_{\sigma}^{-A_{\sigma}^{-1} \beta} \, \partial_{z_{\bsigma}}^{v_{\bsigma}} \, \eulerIntResc\bigbrk{z_{\sigma}^{-A_{\sigma}^{-1} A_{\bsigma}} \times z_{\bsigma}}
        = z_{\sigma}^{-A_{\sigma}^{-1} \brk{\beta + A_{\bsigma} v_{\bsigma}}}
        \, \Bigbrc{
            \partial_w^{v_{\bsigma}} \eulerIntResc\brk{w}
        }
        \ ,
    \end{align}
    where the curly brackets indicate the substitution $w =
    \eqref{eq:w-def}$, once the expression inside of them was evaluated.
    Proceeding with the action of the $\sigma$-part of $\partial^v_z$, we have 
    \begin{align}
        \partial_{z_{\sigma}}^{v_\sigma} \partial_{z_{\bsigma}}^{v_{\bsigma}} \eulerInt\brk{z}
        &= z_\sigma^{-v_\sigma} \bigsbrk{\theta_{z_\sigma}}_{v_\sigma} \partial_{z_{\bsigma}}^{v_{\bsigma}} \eulerInt\brk{z}
        \\
        &= z_\sigma^{-v_\sigma} z_{\sigma}^{-A_{\sigma}^{-1} \brk{\beta + A_{\bsigma} v_{\bsigma}}} \,
        \bigsbrk{-A_{\sigma}^{-1}\brk{\beta + A_{\bsigma} v_{\bsigma}} + \theta_{z_\sigma}}_{v_\sigma}
        \Bigbrc{
            \partial_w^{v_{\bsigma}} \eulerIntResc\brk{w}
        }
        \\
        &= z_{\sigma}^{-A_{\sigma}^{-1} \brk{\beta + A v}} \,
        \Bigbrc{
            \bigsbrk{-A_{\sigma}^{-1}\brk{\beta + A_{\bsigma} v_{\bsigma}} - A_\sigma^{-1} A_{\bsigma} \theta_w}_{v_\sigma}
            w^{-v_{\bsigma}}
            \bigsbrk{\theta_w}_{v_{\bsigma}} \eulerIntResc\brk{w}
        }
        \\
        &= z_{\sigma}^{-A_{\sigma}^{-1} \brk{\beta + A v}} \,
        \Bigbrc{
            w^{-v_{\bsigma}} \,
            \bigsbrk{-A_{\sigma}^{-1}\brk{\beta + A_{\bsigma} \theta_w}}_{v_\sigma}
            \bigsbrk{\theta_w}_{v_{\bsigma}} \eulerIntResc\brk{w}
        }
        \\
        &= z_{\sigma}^{-A_{\sigma}^{-1} \brk{\beta + A v}} \,
        \brk{-1}^{\abs{v_\sigma}}
        \Bigbrc{
            w^{-v_{\bsigma}} \,
            \bigbrk{A_{\sigma}^{-1}\brk{\beta + A_{\bsigma} \theta_w}}_{v_\sigma}
            \bigsbrk{\theta_w}_{v_{\bsigma}} \eulerIntResc\brk{w}
        }
        \ ,
    \end{align}
    where $A_\eta \theta_w$ is a column vector, obtained from matrix
    multiplication of $A_\eta$ and a column vector $\theta_w$.
\end{proof}
Let us close this discussion with one application of \prpref{prp:simplex},
namely the action of a generic partial derivative operator on the cohomology
class $\sbrk{\dd x / x}$:
\begin{align}
    \partial_z^v \bullet \lrsbrk{\frac{dx}{x}} \Big|_{z_i=1, \, i \in \sigma} =
    \brk{-1}^{\abs{v}}
    \Bigbrk{
        w^{-v_{\bsigma}} \,
        \bigbrk{
            A_\sigma^{-1} \brk{\beta + A_{\bsigma} \theta_w}
        }_{v_\sigma}
        \bigsbrk{\theta_w}_{v_{\bsigma}} \bullet
        \lrsbrk{\frac{dx}{x}}
    }\Big|_{w = z_{\bsigma}}.
    \label{F(q)_rest}
\end{align}

\section{Holonomic \texorpdfstring{$\D$}{}-modules in a nutshell}
\label{sec:D-modules-appendix}

In this appendix, we summarize basics of the theory of holonomic systems
utilized in our study of Feynman integrals. We cite introductory textbooks
rather than the original papers or comprehensive textbooks.

\subsection{Holonomic ideals}
Let $\D$ be the Weyl algebra with polynomial coefficients:
\begin{equation}
    \D := 
 \CC\langle z_1, \ldots, z_N, \pd{1}, \ldots, \pd{N} \rangle
= \CC[z_1, \ldots, z_N]\langle \pd{1}, \ldots, \pd{N} \rangle
\end{equation}
where the commutators are $[z_i,z_j]=0$, $[\pd{i},\pd{j}]=1$, and $[\pd{i},z_j] = \delta_{ij}$.
When we need to specify the number of variables, we also denote this algebra by
$\D_N$. Any element $L \in \D$ can be written in the normally ordered form
\begin{equation}
 L=\sum_{(p,q) \in E}  c_{p,q} z^p \pd{}^q \, ,
  \quad z^p=z_1^{p_1} \cdots z_N^{p_N},
        \pd{}^q = \pd{1}^{q_1} \cdots \pd{N}^{q_N}, c_{p,q} \in \CC
        \label{eq:poly-norm-ord}
\end{equation}
via the commutator relations.
The principal symbol $\initial_{({\bf 0},{\bf 1})}(L)$ of $L$ is defined as the
sum of the highest order differential operators in $L$:
\begin{equation}
 \initial_{({\bf 0},{\bf 1})}(L)
= \sum_{(p,q) \in E, \mbox{ $|q|$ is max}} c_{p,q} z^p \xi^q \in \CC[z,\xi]
      \label{eq:principal-symbol}
\end{equation}
where $|q| \defas q_1+\cdots+q_N$, and the commutative variable $\xi$ denotes the
derivative $\partial$ when a given expression is written in the normally ordered
form. We see that the principal symbol is the sum of the highest weight
terms with respect to the weight ${\bf 0}$ for $z$ and the weight
${\bf 1}=(1, \ldots, 1)$ for $\pd{}$.

Let $\Ideal$ be a left ideal in $\D$. The left $\D$-module $\D/\Ideal$ is called {\it holonomic}
when the dimension of the ideal of principal symbols
\begin{equation}
\initial_{({\bf 0},{\bf 1})}(\Ideal) = \CC \{ \initial_{({\bf 0},{\bf 1})}(L)\,|\,
      L \in \Ideal \}
\end{equation}
equals $N$ (see also \cite[Th 1.4.12, p.32]{SST}).
In this case, the left ideal $\Ideal$ is called a \emph{holonomic ideal}.
For general $\D$-modules, the definition of a holonomic $\D$-module requires the notion
of filtration, which lies out of the scope of this short review.
We refer the interested reader to \cite[\S 6.7, \S 6.8]{dojo} and references therein
for further details.

\begin{example}\rm
    Take $N = 2$ and let $\Ideal$ be a left ideal generated by $z_1$ and $\pd{2}$.
    The ideal of principal symbols is then generated by $z_1$ and $\xi_2$.
    The dimension of the zero set of $z_1=\xi_2=0$ in $\CC^4$ is $2$, and so
    we conclude that $\D/\Ideal$ is holonomic.
\end{example}

\subsection{Standard monomials}

Let $\WeylR$ be the rational Weyl algebra
\begin{equation}
\WeylR := \CC(z_1, \ldots, z_N) \otimes_{\CC[z_1,\ldots,z_N]} \D
  = \CC(z_1, \ldots, z_N) \langle \pd{1}, \ldots, \pd{N} \rangle \, .
\end{equation}
When $\Ideal$ is a holonomic ideal in $\D$,
the left ideal $\WeylR \, \Ideal$ of $\WeylR$ is a zero-dimensional ideal.
In other words, the dimension of $\WeylR/(\WeylR \, \Ideal)$ as a vector space
over the rational function field $\CC(z_1, \ldots, z_N)$ is finite.
Similar to the polynomial case \eqref{eq:poly-norm-ord}, any element $L$ of the
rational Weyl algebra $\WeylR$ can be brought into the normally ordered form
\begin{equation}
 L = \sum_{q \in E} c_q(z) \pd{}^q \, , \quad c_q(z) \in \CC(z_1, \ldots,z_N) \, .
\end{equation}

Let $\prec$ be a term order among the monomials in $\pd{i}$'s.
The largest monomial $\pd{}^q$ in $L$ \brk{stripped of its rational function coefficient}
is called the initial monomial $\initial_\prec(L)$ of $L$. 
Monomials in derivatives naturally form the polynomial ring $\CC[\pd{1}, \ldots, \pd{N}]$.
A finite subset $G$ of $\WeylR$ is called a Gr\"obner basis of $\WeylR \, \Ideal$
with respect to $\prec$ when the initial monomial ideal
\begin{equation}
\initial_\prec(\WeylR \, \Ideal) = \CC \{ \initial_\prec(L)\,|\, L \in \WeylR \, \Ideal\}
 \subset \CC[\pd{1}, \ldots, \pd{N}]
\end{equation}
is generated by $\initial_\prec(L)$, $L \in G$.
A monomial $\pd{}^\alpha$ is called a {\it standard monomial} with respect to $G$
when $\pd{}^\alpha$ does not belong to the initial monomial ideal
\brk{$\initial_\prec\brk{\WeylR \, \Ideal}$ in our case}.
Within the theory of Gr\"obner basis in $\WeylR$,
the zero-dimensionality of $\WeylR \, \Ideal$ is equivalent
to the finitness of the set of standard monomials.
In the following, an element $\pd{}^\alpha$ of $\initial_\prec(\WeylR \, \Ideal) \subset \CC[\pd{1}, \ldots, \pd{N}]$
is also denoted by $\xi^\alpha$ to emphasize that it is an element of the polynomial ring.
Let us now give a few examples of standard monomials for different ideals.

\begin{example} \rm
    When $N=2$ and $\Ideal=\langle z_1, \pd{2} \rangle \subset \D$,
    we have $\WeylR \, \Ideal = \langle 1 \rangle$ because $\frac{1}{z_1} z_1 = 1$.
    Then the set of standard monomials is empty.
\end{example}

\noindent
\begin{minipage}{11cm}
\begin{example}\rm
    Put
    \eq{ 
        L_1 &= ({z}_{1}  {z}_{2}- 1) \, \pd{1}+  {\beta}_{0}  {z}_{2} 
        \\
        L_2 &= z_2(   {z}_{1}   {z}_{2} - 1) \, \pd{2}+    (  - {\beta}_{1}+ {\beta}_{0})  {z}_{1}  {z}_{2}+ {\beta}_{1} \, ,
    }
    where $\beta_0$ and $\beta_1$ are complex numbers. These two operators annhilate the function
    $(1-z_1z_2)^{-\beta_0} z_2^{\beta_1}$.
    By a Gr\"obner basis computation in the Weyl algebra $\D$, we find that
    the ideal of principal symbols is generated by the two elements
    $   \xi_{1} -   {z}_{2}^{ 2}   \xi_{2}$ and
    $   {z}_{1}  \xi_{1} -  {z}_{2}  \xi_{2}$.
    Since it defines a $2$-dimensional variety in $\CC^4$, we conclude that the
    module $\D/\langle L_1, L_2 \rangle$ is holonomic.
    The set $\{L_1,L_2\}$ is a Gr\"obner basis in $\WeylR$ with any term order $\prec$
    and the initial ideal is generated by $\xi_1$ and $\xi_2$.
    The set of standard monomials is simply $\brc{\xi_1^0 \xi_2^0} \equiv \brc{1}$, corresponding to the point $(0,0)$ in the figure to the right, where the axes represent powers of $\xi_1,\xi_2$.\\
    \end{example}
    \end{minipage}
\begin{minipage}{4cm}    
        \setlength{\unitlength}{2mm}
        \begin{picture}(22,22)
            \put(0,10){\vector(1,0){22}}
            \put(10,0){\vector(0,1){22}}
            \put(9,8){(0,0)}
            \put(10,10){\circle*{1}}
            \put(14,8){(1,0)}
            \put(15,10){\circle{1}}
            \put(15,10){\line(0,1){11}}
            \put(8,13){(0,1)}
            \put(10,15){\circle{1}}
            \put(10,15){\line(1,0){11}}

            \put(10,15){\circle{1}}\put(10,20){\circle{1}}
            \put(15,15){\circle{1}}\put(15,20){\circle{1}}
            \put(20,10){\circle{1}}\put(20,15){\circle{1}}\put(20,20){\circle{1}}
        \end{picture}
\end{minipage}

\begin{theorem} \emph{(see e.g. \cite[\S 6.9]{dojo})}
    Let $\Ideal$ be a holonomic ideal in $\D$. Then $\WeylR \, \D$ is a zero-dimensional ideal
    in $\WeylR$.
    Conversely, let $\Jideal$ be a zero-dimensional ideal in $\WeylR$.
    Then $\D \cap \Jideal$ is a holonomic ideal.
\end{theorem}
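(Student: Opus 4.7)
The plan is to reduce both implications to dimension counts for characteristic varieties, linked by Gröbner deformation. Recall that holonomicity of $\Ideal \subset \D$ is the condition $\dim V\bigbrk{\initial_{(\mathbf{0},\mathbf{1})}(\Ideal)} = N$ in $\CC^{2N}$, while $\Jideal \subset \WeylR$ is zero-dimensional iff the set of standard monomials $\brc{\pd{}^\alpha : \xi^\alpha \notin \initial_\prec(\Jideal)}$ is finite for some term order $\prec$ on $\CC[\pd{1},\ldots,\pd{N}]$, equivalently $\dim_{\CC(z)} \WeylR/\Jideal < \infty$.

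For the first direction, I would fix a term order $\prec$ on $\CC[\xi]$ and refine it to a term order on $\CC[z,\xi]$ compatible with the $(\mathbf{0},\mathbf{1})$-weight, so that a Gröbner basis $G \subset \D$ of $\Ideal$ with respect to the refined order has principal symbols generating $\initial_{(\mathbf{0},\mathbf{1})}(\Ideal)$. The same $G$, viewed in $\WeylR$, is a Gröbner basis of $\WeylR\,\Ideal$ with respect to $\prec$, and $\initial_\prec(\WeylR\,\Ideal)$ is obtained from $\initial_{(\mathbf{0},\mathbf{1})}(\Ideal)$ by base change to $\CC(z)$. By holonomicity, $V\bigbrk{\initial_{(\mathbf{0},\mathbf{1})}(\Ideal)} \subset \CC^{2N}$ has dimension $N$; each of its irreducible components either projects dominantly to $\CC^N_z$ with zero-dimensional generic fibre, or projects to a proper subvariety and hence contributes nothing over the generic point of $\CC^N_z$. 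In either case the generic $z$-fibre is finite, giving $\dim_{\CC(z)} \WeylR/\WeylR\,\Ideal < \infty$.

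For the converse, set $\Ideal := \D \cap \Jideal$; a quick denominator-clearing argument shows $\WeylR\,\Ideal = \Jideal$. By hypothesis, there is a finite basis of standard monomials $\estd = (\pd{}^{\alpha_1},\ldots,\pd{}^{\alpha_r})$, and hence a Pfaffian system $\pd{i}\,\estd \equiv P_i \cdot \estd \pmod{\Jideal}$ exactly as in \secref{ssec:D-module to DEQ}. Clearing denominators entry-by-entry produces operators
\begin{equation*}
    L_{i,j} \defas q_{i,j}(z)\bigsbrk{\pd{i}\,\pd{}^{\alpha_j} - \sum_k (P_i)_{j,k}\,\pd{}^{\alpha_k}} \in \D \cap \Jideal = \Ideal,
\end{equation*}
together with the given generators of $\Jideal$ (cleared of denominators). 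A direct symbol computation shows the characteristic ideal $\initial_{(\mathbf{0},\mathbf{1})}(\Ideal)$ cuts out a variety in $\CC^{2N}$ whose generic $z$-fibre is finite, hence of total dimension at most $N$. Combined with Bernstein's inequality $\dim V \geq N$ valid for any non-zero left $\D$-module (see \cite[\S 1.4]{SST}), this forces $\dim V = N$ and establishes holonomicity.

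The main obstacle is the Gröbner-deformation bookkeeping guaranteeing that the $(\mathbf{0},\mathbf{1})$-initial ideal in $\CC[z,\xi]$ correctly recovers, after base-changing to $\CC(z)$ and further $\prec$-reducing, the $\prec$-initial ideal of $\WeylR\,\Ideal$ in $\CC(z)[\xi]$; this is standard but technical and underlies the algorithms of \cite[Chap.~1]{SST}, so I would quote it rather than re-derive it. The remainder is a straightforward combination of dimension-counting with the Pfaffian construction of \secref{ssec:D-module to DEQ}.
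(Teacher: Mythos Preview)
The paper does not prove this theorem; it is stated with a reference to \cite[\S 6.9]{dojo} and followed only by the remark that the Weyl closure $\D\cap\Jideal$ is ``highly non-trivial'' to compute. So there is no in-paper proof to compare against, and your sketch has to be judged on its own merits.

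Your forward direction is a reasonable outline, and you correctly isolate the one nontrivial input---compatibility of the $(\mathbf{0},\mathbf{1})$-initial ideal in $\CC[z,\xi]$ with the $\prec$-initial ideal in $\CC(z)[\xi]$---and defer it to \cite{SST}. That is fair as a sketch.

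The converse direction, however, has a genuine gap. The implication ``generic $z$-fibre finite $\Rightarrow$ total dimension $\leq N$'' is false for arbitrary subvarieties of $\CC^{2N}$: take $V(z_1\xi_1,\,z_1\xi_2)\subset\CC^4$ with $N=2$, whose fibre over any $z$ with $z_1\neq 0$ is the single point $\xi=0$, yet which contains the $3$-dimensional component $\{z_1=0\}$. Nothing in your construction of the $L_{i,j}$ rules out exactly such components sitting over the vanishing locus of the clearing polynomials $q_{i,j}(z)$. Worse, even the premise that the principal symbols of the $L_{i,j}$ cut out a finite generic $z$-fibre is unsubstantiated: for indices $j$ with $|\alpha_j|$ below the maximum, the highest-order part of $\pd{i}\pd{}^{\alpha_j}-\sum_k(P_i)_{j,k}\pd{}^{\alpha_k}$ may come from the Pfaffian sum rather than from $\pd{i}\pd{}^{\alpha_j}$, so you do not obtain a clean symbol $q_{i,j}(z)\xi_i\xi^{\alpha_j}$. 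Bernstein's inequality supplies the lower bound $\dim\geq N$, but your upper bound is not established. The textbook route typically controls the non-generic fibres by a finer filtration or Hilbert-polynomial argument on $\D/(\D\cap\Jideal)$, or by the Weyl closure machinery the paper alludes to immediately after the theorem; either way, more than a ``direct symbol computation'' is needed.
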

Note that the second construction $\D \cap \Jideal$ from $\Jideal$ is a highly non-trivial.
Algorithms realising such a construction are called Weyl closure algorithms.

We close this subsection with a note on the relation between Pfaffian equations
and Gr\"obner bases.
Let $G$ be a reduced Gr\"obner basis of a zero dimensional ideal 
$\WeylR \Ideal$
in the rational Weyl algebra $\WeylR$.
Since it is zero-dimensional, the set of the standard monomials
is a finite set.
We denote it by $\{ s_1, \ldots, s_r \}$.
Computing the normal form of $\pd{i} s_j$ by the Gr\"obner basis, 
we obtain
$$
  \pd{i} s_j = \sum_{k=1}^r P_{ij}^k s_k \quad {\rm mod}\, \WeylR G
$$
The matrix $P_i=(P_{ij}^k)_{j,k}$ is the Pfaffian matrix for the operator $\pd{i}$.

Conversely, assume that we are given a basis 
$S=\{s_1, \ldots, s_r \}$
of $\WeylR/\WeylR \Ideal$
as a vector space over $\CC(z_1, \ldots, z_N)$.
Here the $s_i$'s are monomials.
Let $P_i$ be the Pfaffian matrix for $\pd{i}$.
Then we have $\pd{i} s_j = \sum_{k=1}^r (P_i)_{jk} s_k$
modulo $\WeylR \Ideal$.
We assume that there exists a term order $\prec$ such that
$s \prec t$ for all monomials $s \in S$ and $t \in S^c$
where $S^c$ is the set of the monomials which do not belong to $S$.
If such an order exists, it can be found by solving a system of linear
inequalities \cite[\S 2.1]{SST}. See also a note below.
Then the set
$G=\{ g_{ij}:=\pd{i} s_j - \sum_{k=1}^r (P_i)_{jk} s_k \,|\, \pd{i}s_j \not\in S\}$
is a Gr\"obner basis with repect to the term order $\prec$,
and
the set $S$ is the set of the standard monomials for the Gr\"obner basis.
Let us prove it.
Since $u \succeq 1$ for any monomial $u$,
we have $ut \succeq t$ for any monomial $t$.
Therefore, we have $ut \in S^c$ for $t \in S^c$ and any monomial $u$ 
by the condition of the order.
It implies that a Gr\"obner basis of the monomial ideal $\langle S^c \rangle$
is $\{ \pd{i} s_j \,|\, \pd{i} s_j \not\in S\}$,
and $S$ is the set of standard monomials for that basis.
It follows from the condition on the order that we have
$\initial_\prec(g_{ij}) = \pd{i} s_j \not\in S$ and 
$\initial_\prec(G) \subset \initial_\prec(\WeylR \Ideal)$.
Since $\mathop{\rm dim} \CC[\pd{1}, \ldots, \pd{N}]/\initial_\prec(G) = r$
by $\initial_\prec(G) = \langle S^c \rangle$,
$\initial_\prec(G) = \initial_\prec(\WeylR \Ideal)$ holds.
Thus, $G$ is a Gr\"obner basis for the order $\prec$.

Note that it is necessary to pose the condition on the order
and the basis $S$.
For example, consider $N=1$ and $M=\WeylR/\WeylR(\pd{1}-z_1)$.
$S=\{ \pd{1} \}$ is a vector space basis of $M$ over $\CC(z_1)$
and the Pfaffian system with respect to $S$ is
$\pd{1}-(z_1+1/z_1)$, because
$\pd{1}(\pd{1}-z_1)=\pd{1}^2-z_1 \pd{1} -1$ and $\frac{1}{z_1} \pd{1}-1$
belong to the ideal $\WeylR(\pd{1}-z_1)$.
However, the reduced G\"obner basis of the ideal for any term order is
$\{ \pd{1}-z_1 \}$.
In fact, the condition $\pd{1} \prec 1$ does not hold for any term order.

Finally, we explain a procedure to check 
if there exists a term order $\prec$ such that
$s \prec t$ for all monomials $s \in S$ and $t \in S^c$.
Let $w \in \RR_{\geq 0}^N$ be a weight vector.
We solve a system of linear inequlities with respect to $w$
$$
  w \cdot \alpha \leq w \cdot \beta \ 
  \mbox{ for all }
  s=\pd{}^\alpha \in S \mbox{ and } t=\pd{}^\beta \in S'
$$
where $S'$ is the set of the minimal generators of the monoid $S^c$.
If the solution space is an $N$-dimensional cone, then take $w$ from the interior
of the cone and define the order by $\prec_w$, where we may take any tie breaker
$\prec$. 
Here, we mean $\pd{}^a \prec_w \pd{}^b$ if and only if $w \cdot a < w \cdot b$
or ($w \cdot a = w \cdot b$ and $ \pd{}^a \prec \pd{}^b$).
If the cone is empty, there does not exist such an order.
When the cone has dimension less than $N$, take $w(1) \in \RR_{\geq 0}^N$ 
from the relative interior of the cone
and also take the set $P$ of $(\alpha, \beta)$'s such that
$ w(1) \cdot \alpha = w(1) \cdot \beta$.
We solve a system of linear inequalities 
$$
  w \cdot \alpha \leq w \cdot \beta \ \mbox{ for  all $(\alpha,\beta) \in P$}.
$$
Take $w(2) \in \RR_{\geq 0}^N$ from the relative interior of the solution cone.
If this $w(2)$ satisfies $w(2) \cdot \alpha = w(2) \cdot \beta$
for all $(\alpha,\beta) \in P$, there exists no term order we want.
If not, take $\prec_{w(2)}$ as a tie breaker. 
We repeat this procedure.
Since any term order can be expressed by a matrix of weight vectors \cite{Robbiano:1985},
we can check if there exists a term order 
$s \prec t$ for all monomials $s \in S$ and $t \in S^c$
and we can construct it if it exists.

\subsection{Proof of Theorem \ref{thm:3.1}}
In this subsection, we prove \thmref{thm:3.1}.
We follow the notation of \appref{ssec:D-module to DEQ}.
Let $J$ be an ideal of $\CC[\theta]$.
It naturally lifts to a left ideal of $\mathcal{R}$ generated by the elements of $J$ which we will denote by $\mathcal{R}J$.
Recall that a term order $\prec$ on the ring $\CC[\theta]$ naturally corresponds to an order among monomials $\partial^k$ by the correspondence $\theta^k\leftrightarrow\partial^k$.
Thus, $\prec$ induces a term order on the ring $\mathcal{R}$.

\begin{lemma}\label{lem:Frob}
The set of $\prec$-standard monomials of the ideal $J$ coincides with that of $\mathcal{R}J$ through the correspondence $\theta^k\leftrightarrow\partial^k$.
\end{lemma}
\begin{proof}
Let us recall an identity
\begin{equation}\label{eqn:theta-to-pd}
    z_i^k\partial_i^k=\theta_i(\theta_i-1)\cdots(\theta_i-k+1)
\end{equation}
for any $k\in\mathbb{N}$.
For any $k=(k_1,\dots,k_N)\in\mathbb{N}^N$, we set
\begin{equation}
    [\theta]_k:=\prod_{i=1}^N\theta_i(\theta_i-1)\cdots(\theta_i-k_i+1).
\end{equation}
We claim that a $\prec$-Gr\"obner basis $G\subset J$ of $J$ is again a $\prec$-Gr\"obner basis of $\mathcal{R}J$.
Let us take $P,Q\in G$ and write them as
\begin{equation}
    P=\underline{a\theta^k}+\cdots,\ \ \ Q=\underline{b\theta^\ell}+\cdots
\end{equation}
where $a,b\in\CC$ and underlines indicate the leading terms.
Regarding $P$ and $Q$ as elements of the ring $\mathcal{R}$, we obtain an expansion
\begin{equation}
    P=\underline{az^k\pd{}^k}+\cdots,\ \ \ Q=\underline{bz^\ell\pd{}^\ell}+\cdots.
\end{equation}
We set $c_i={\rm max}(k_i,\ell_i)$ and $c=(c_1,\dots,c_N)$.
The S-pair of $P$ and $Q$ in the ring $\mathcal{R}$ (cf. \cite[\S 6.1]{dojo}) is
\begin{equation}
 bz^{\ell}\partial^{c-k}P-az^k\pd{}^{c-\ell}Q=z^{k+\ell-c}(b[\theta]_{c-k}P-a[\theta]_{c-\ell}Q).   
\end{equation}
The second factor clearly belongs to the ideal $J$ and the S-pair of $P$ and $Q$ is reduced to zero by $G$ in the ring $\mathcal{R}$.
This shows that $G$ is a $\prec$-Gr\"obner basis of $\mathcal{R}J$.
Thus, the set of leading monomials of $J$ is identical to that of $\mathcal{R}J$ through the correspondence $\theta^k\leftrightarrow\partial^k$.
This proves the lemma.
\end{proof}

\noindent
The following theorem is known as {\it Gr\"obner deformation}.
\begin{theorem}[a simplified version of Theorem 3.1.3 of \cite{SST}]\label{thm:GD}
For any parameter $\beta$ generic enough, one has an identity
\begin{equation}
{\rm in}_\prec(\mathcal{R} H_A(\beta))=\mathcal{R}\langle E_1,\dots,E_{n+1},{\rm in}_{\prec}(\mathcal{I}_{\square})\rangle.
\end{equation}
\end{theorem}

Now, let us prove \thmref{thm:3.1}.
Since $z^k$ is invertible in the ring $\mathcal{R}$, \thmref{thm:GD} shows that ${\rm in}_\prec(\mathcal{R} H_A(\beta))$ is identical to the ideal $\mathcal{R}\Ideal'$.
Clearly, the right-hand side is a lift of an ideal $\Ideal'$ in $\CC[\theta]$.
By \namedref{Lemma}{lem:Frob}, we can conclude that the set of standard monomials of $\Ideal'$ is identical to that of ${\rm in}_\prec(\mathcal{R} H_A(\beta))$ through the correspondence $\theta^k\leftrightarrow\partial^k$.

\subsection{Integration and restriction}
Now we discuss two important properties of the holonomic $\D$-modules:
the so-called integration and restriction constructions.

Let $\D_{M+N}$ be the ring of differential operators in $M + N$ variables:
$$
\CC \langle x_1, \ldots, x_M, z_1, \ldots, z_N ,
    \pd{x_1}, \ldots, \pd{x_M}, \pd{z_1}, \ldots, \pd{z_N} \rangle,
$$
and $\Ideal$ be a left holonomic ideal of $\D_{M+N}$.
We denote
$
\CC \langle x_1, \ldots, x_M,
    \pd{x_1}, \ldots, \pd{x_M} \rangle
$
by $\D_M$ and \\
$
\CC \langle z_1, \ldots, z_N ,
          \pd{z_1}, \ldots, \pd{z_N} \rangle
$
by $\D_N$.

The \emph{integration} of the $\D_{M+N}$-module $L=\D_{M+N}/\Ideal$ for the $x$
variables is defined by
\begin{equation} \label{eq:d-integral}
  \frac{\D_{M+N}}{\sum_{i=1}^M \pd{x_i}\D_{M+N}} \otimes_{\D_{M+N} } L
 = \frac{\D_{M+N}}{\Ideal + \sum_{i=1}^M \pd{x_i}\D_{M+N}},
\end{equation}
which is a left $\D_N$-module \brk{see \cite[\S 5.5]{SST} or \cite[\S 6.10]{dojo}
for related algorithms and examples}.
The integration of a $\D$-module is an algebraic counterpart
of the integration in calculus (for example, see \cite[Th 5.5.1, p.227]{SST}).
It also gives a different approach to the twisted cohomology theory 
(see, e.g., \cite[Th 5.5.11, p.233]{SST}).
The integration module of $L$ is sometimes denoted with an integral sign:
\begin{equation}  \label{eq:d-integral-symbol}
\int_\pi L,
\end{equation}
where $\pi$ denotes the natural projection from the $\brk{z,x}$-space to the $z$-space.

The \emph{restriction} of the $\D$-module $L$ to a linear subspace
$z_1=c_1, \ldots, z_N=c_N$, for some constant $c_i \in \CC$, is defined by
\begin{equation} \label{eq:d-restriction}
  \frac{\D_{M+N}}{\sum_{j=1}^n (z_j-c_j) \D_{M+N}} \otimes_{\D_{M+N} } L
 = \frac{\D_{M+N}}{\Ideal + \sum_{j=1}^n (z_j-c_j) \D_{M+N}}.
\end{equation}
It is an algebraic counterpart of the restriction of solutions of
differential equations
to the linear subspace \brk{see \cite[Prop. 5.2.14, p.207]{SST}}.

\begin{theorem} \emph{(see e.g. \cite[Th 6.10.8, p.303]{dojo}.)}
    If $L$ is a holonomic $\D_{M+N}$-module, then the integration and the
    restriction defined above are holonomic $\D_N$-modules.
\end{theorem}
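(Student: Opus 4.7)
The plan is to prove both preservation statements by induction, reducing to a single variable at a time: for integration it suffices to show that if $L$ is holonomic over $\D_{M+N}$, then $L/\pd{x_i} L$ is holonomic over the Weyl algebra in the remaining $M+N-1$ variables, and analogously for restriction. Iterating then yields the full statement for the integration in $M$ variables and the restriction to an $N$-codimensional subspace.

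For the restriction $L/(z_j - c_j)L$, I would analyse the short exact sequence
\begin{equation}
0 \to \mathrm{Ker}(z_j - c_j) \to L \xrightarrow{\,z_j - c_j\,} L \to L/(z_j - c_j)L \to 0
\end{equation}
of $\D_{M+N}$-modules. The key tool is the existence of a b-function (Bernstein--Sato polynomial) along the hyperplane $\{z_j = c_j\}$, which is guaranteed by the holonomicity of $L$. This b-function forces the natural V-filtration of $L$ to stabilise, from which one shows that $L/(z_j - c_j)L$ is finitely generated over the smaller Weyl algebra. A good filtration inherited from $L$, together with Bernstein's inequality, then yields the required dimension count for its characteristic variety, promoting the inequality to an equality and giving holonomicity. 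The statement on $\mathrm{Ker}(z_j - c_j)$ follows by the same reasoning, so iterating through $z_1 = c_1, \ldots, z_N = c_N$ preserves holonomicity at every step.

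The integration case $L / \pd{x_i} L$ can be handled either by a direct analysis of a good filtration on the relative de Rham complex in the $x_i$-direction, or more elegantly by reduction to the restriction case through holonomic duality: integration is dual to restriction under the $\D$-module duality functor, which itself preserves holonomicity. The principal obstacle in both constructions is the bookkeeping of filtrations and their graded pieces to confirm that Bernstein's inequality becomes an equality on the quotient; this ultimately rests on the non-trivial existence theorem for b-functions of holonomic modules, and on verifying that the induced filtration on the quotient is still good. Since these technical ingredients are developed in full in the textbook references \cite{SST} and \cite{dojo} already cited after \eqref{eq:d-restriction}, my write-up would invoke them explicitly and focus on the dimension count that confirms holonomicity is inherited at each inductive step.
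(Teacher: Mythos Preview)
The paper does not give its own proof of this theorem; it is stated with a bare citation to \cite[Th 6.10.8, p.303]{dojo} and nothing further, so there is no in-paper argument to compare against.

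Your outline is the standard textbook route and is essentially correct: reduce to one variable by induction, use the existence of a $b$-function along a hyperplane (guaranteed by holonomicity) to control $L/(z_j-c_j)L$, and then deduce the integration case from the restriction case. One small refinement: the cleanest way to pass from restriction to integration is not holonomic duality but the Fourier transform of the Weyl algebra, which swaps $x_i \leftrightarrow \partial_{x_i}$ and hence interchanges $L/(x_i-c)L$ with $L/\partial_{x_i}L$ while manifestly preserving holonomicity (it is an algebra automorphism, so characteristic varieties have the same dimension). Duality does also work, but it drags in $\mathrm{Ext}$ groups and the fact that $\mathbb{D}$ exchanges $i_*$ and $i_!$, which is heavier than needed here. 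Since the paper offers no argument of its own, either variant of your sketch would be an acceptable substitute for the citation.
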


Finally, let us note that the integration and the restriction constructions
commute:
\begin{eqnarray}
& &  \frac{\D_{n}}{\sum_{j=1}^n (z_j-c_j) \D_{n}} \otimes_{\D_N}
   \left(\frac{\D_{M+N}}{\sum_{i=1}^M \pd{x_i}\D_{M+N}} \otimes_{\D_{M+N} } L \right) \label{eq:integration_first} \\
&=&  \frac{\D_{M}}{\sum_{i=1}^M \pd{x_i}\D_{M}} \otimes_{\D_M} \left(\frac{\D_{M+N}}{\sum_{j=1}^n (z_j-c_j) \D_{M+N}}\otimes_{\D_{M+N} } L \right) \label{eq:restriction_first}\\
&=& \frac{\D_{M+N}}{
  \Ideal+\sum_{j=1}^n (z_j-c_j) \D_{M+N}+\sum_{i=1}^M \pd{x_i}\D_{M+N}}.
\end{eqnarray}
Restrictions of generalized Feynman integrals will be discussed in the
forthcomming paper \cite{Chestnov:2023kww}.

\bibliographystyle{nb}
\bibliography{biblio}

\end{document}